\numberwithin{equation}{section}
\newtheorem{theorem}{Theorem}[section]
\newtheorem{corollary}[theorem]{Corollary}
\newtheorem{definition}{Definition}[section]
\newtheorem{assumption}{Assumption}[section]
\newtheorem{prop}[theorem]{Proposition}
\begin{document}

\author{Jaehyun Kim\thanks{jaehyun107@snu.ac.kr} and Hyungbin Park\thanks{hyungbin@snu.ac.kr, hyungbin2015@gmail.com}     \\ \\ \normalsize{Department of Mathematical Sciences} \\ 
		\normalsize{Seoul National University}\\
		\normalsize{1, Gwanak-ro, Gwanak-gu, Seoul, Republic of Korea} 
	}

	\title{Designing funding rates for perpetual futures in cryptocurrency markets}

\maketitle

\abstract{
In cryptocurrency markets, a key challenge for perpetual future issuers is maintaining alignment between the perpetual future price and target value. This study addresses this challenge by exploring the relationship between funding rates and perpetual future prices. Our results demonstrate that by appropriately designing funding rates, the perpetual future price can remain aligned with the target value. We develop replicating portfolios for perpetual futures, offering issuers an effective method to hedge their positions. Additionally, we provide path-dependent funding rates as a practical alternative and investigate the difference between the original and path-dependent funding rates. To achieve these results, our study employs path-dependent infinite-horizon BSDEs in conjunction with arbitrage pricing theory. Our main results are obtained by establishing the existence and uniqueness of solutions to these BSDEs and analyzing the large-time behavior of these solutions.}

\section{Introduction}\label{sec:Intro}

\subsection{Overview}

Perpetual futures are among the most widely traded derivative securities in cryptocurrency markets. Since their introduction by BitMEX in 2016, they have gained immense popularity and achieved significant trading volume.
Perpetual futures have accumulated over \$90 trillion in trading volume, surpassing the trading volumes of the underlying cryptocurrencies and now accounting for 93\% of the cryptocurrency future markets (\cite{ruan2024perpetual}).
A key aspect of perpetual futures is the funding mechanism, which operates effectively  within the continuous 24/7 trading environment.
This funding mechanism is more prevalent in cryptocurrency markets but is less frequently used in traditional markets due to higher transaction costs and stricter regulations. Despite the central role of funding mechanisms in perpetual futures,
there has been relatively little research on this
topic. Therefore, exploring its fundamentals is timely and essential to gain a better understanding of its operation.

A funding mechanism is designed to minimize price deviations between the future price and target value.
Cash flows are exchanged periodically between long and short positions  
to ensure that the price of the perpetual future remains aligned with the target value.  
This cash flow, known as the funding fee, is determined by the target value and current price of the perpetual future.
This study aims to 
construct theoretical fundamentals of 
perpetual futures by illuminating the relationship between funding rates and perpetual future prices. 
The structure of perpetual futures is similar to standard options, as the holder receives a payout in the form of a funding fee. This analogy prompts the adoption of the standard derivative pricing theory in this study.

However, compared to conventional derivatives, particularly standard European options, perpetual futures exhibit significant differences.
First, they have no expiration dates. This feature allows them to be bought or sold at any time without restrictions. Long positions in perpetual futures can be maintained for as long as desired without the concern of expiration. In contrast, standard options have a fixed expiry date, marking the last day the option contract remains valid.
Second, the payment stream of perpetual futures occurs periodically throughout the contract period, whereas the payments for standard options are made only at the end of the contract. This periodic payment structure is feasible because the cryptocurrency market operates 24/7.
Third, the payment stream of perpetual futures is influenced by both the underlying asset and perpetual future price, whereas standard options derive their payment solely from the underlying asset. These complexities make it challenging to apply the standard risk-neutral valuation to perpetual futures. This creates a significant distinction in pricing approaches between conventional options and perpetual futures.

This study has four contributions. First, we present a method for designing funding rates that ensures the perpetual future price remains anchored to the target values, including both tradable and non-tradable cases. This has posed a considerable challenge for perpetual future issuers due to the intricate nature of perpetual future structures. 
We address this problem using an arbitrage approach combined with the BSDE method.
Although the problem is conceptually straightforward, providing formal and rigorous proof is complex and challenging. Notably, our study is the first to derive unique prices for perpetual futures anchored to non-tradable target values.

Second, we investigate a path-dependent funding rate for practical implementation. In most exchanges, the funding fee is calculated as the average of values over the past 8 hours rather than relying on the current spot value, making the funding rate inherently path-dependent in practice.
While the instantaneous spot
funding rate ideally guarantees that the perpetual future price perfectly aligns with the target values,
the path-dependent version provides a more realistic and implementable approximation.
Our analysis shows  the perpetual future price is uniquely determined under  path-dependent funding rates.
Moreover, the  price derived from the path-dependent funding rate closely aligns with that obtained from the instantaneous spot funding rate, 
indicating that the practical version is an effective substitute. Notably, 
this specific form of path-dependent funding rate has not been studied before, making our study the first to explore this topic.

Third, this paper discusses the construction of replicating portfolios for perpetual futures. 
This is significant for two main reasons: practically, it provides issuers with a method to hedge perpetual futures, and theoretically, it demonstrates that the derived future price is arbitrage-free. 
While replicating portfolios are well-established for traditional derivatives with finite horizons, the absence of a terminal date in perpetual futures complicates their construction.
Our study shows that it is still possible to construct replicating portfolios even for derivatives without terminal dates.

Finally, this study employs a novel approach, an infinite-horizon BSDE method that provides new insights into perpetual futures. This methodology clarifies the
funding mechanism by revealing how funding rates influence perpetual future prices. 
The relationship between the driver and the solution of the infinite-horizon BSDE mirrors the connection between funding rates and perpetual futures prices. 
Mathematically, this is achieved by proving the existence and uniqueness of solutions to the infinite-horizon BSDEs. Because such types of BSDEs have not been previously studied, this study develops new theoretical results to establish conditions for the existence and uniqueness of solutions in this context.

This study adopts a path-dependent approach for two main reasons. First, it enables working with path-dependent market models that extend beyond traditional Markovian frameworks to accommodate non-Markovian market dynamics. While most of the existing literature focuses on Markovian models, our results hold under a more general framework that includes non-Markovian settings. Second, this approach allows for the analysis of path-dependent funding rates, modeled as integrals over the past 8 hours. These observations lead to the formulation of path-dependent BSDEs, for which standard Markovian BSDE techniques are not applicable. Our analysis primarily builds upon the path-dependent framework developed by \cite{ekren2014viscosity}, \cite{bally2016stochastic}, \cite{dupire2019functional} and \cite{viens2019martingale}.

The theoretical literature on the pricing of perpetual futures is scarce.
\cite{he2022fundamentals} considered proportional funding rates for perpetual futures anchored to a tradable asset price.
They derived arbitrage-free
prices for perpetual futures in frictionless markets and bounds in markets
with trading costs. 
\cite{angeris2023primer} analyzed perpetual future contracts in a continuous-time, arbitrage-free, and frictionless market, deriving model-free formulas for their funding rates, along with replication strategies, especially when asset prices are continuous and positive. They also extended these results to jump models, providing semi-robust expressions that depend on jump intensities and offering explicit replication strategies when the
volatility process is independent of
the underlying risky asset. 
\cite{ackerer2024perpetual}
derived an arbitrage-free price of various
perpetual contracts, including linear, inverse, and quantos futures.
The price is determined by
the risk-neutral expectation of the spot, sampled at a random time that reflects the
intensity of the price anchoring. 
\cite{dai2025arbitrage} investigated the tendency of perpetual futures prices to deviate from their underlying asset prices. They identified the clamping function in the funding mechanism as a key factor and derived model-free no-arbitrage bounds that hold even without transaction fees. 
Although the theoretical literature on perpetual futures is limited, several empirical studies have been conducted in this field. Refer to \cite{alexander2020bitmex}, \cite{christin2022crypto} and \cite{wang2025spot}.

\subsection{Outline}

A strategy for addressing this problem consists of several steps.
First, we introduce the concept of funding portfolios and their associated wealth processes, establishing a fundamental theoretical framework for constructing replicating portfolios in subsequent discussions. 
Consider a complete market comprising multiple assets, denoted as \( X = (X(s))_{s \ge 0} \), along with a money market account. Let \( F = (F(s))_{s \ge 0} \) be a given funding rate process. At this stage, we assume that \( F \) is a general stochastic process without imposing any specific structure.
An \( F \)-funding portfolio consists of holdings in multiple assets and a money market account, which incurs a funding fee \( F(s)\,ds \) over time. The corresponding wealth process, denoted as \( Y = (Y(s))_{s \ge 0} \), represents the evolution of the value of the
\( F \)-funding portfolio.

Second, we examine a specific type of funding rate that depends on both the wealth process  and  the underlying asset value. More precisely, consider a funding rate of the form \( F^\Phi(s) := \Phi(s, X_s, Y(s)) \), where \(\Phi\) is a specified functional.
For the meanings of  $X_s,Y_s$ and $X(s),Y(s)$, see Section \ref{sec:pre}. 
A fundamental question arises: does there exist a \(F^\Phi\)-funding portfolio corresponding to the specified functional \(\Phi\)? This problem is non-trivial, and to address this, we transform the problem into an infinite-horizon BSDE framework. We establish the existence and uniqueness of such \(F^\Phi\)-funding portfolios by analyzing the associated infinite-horizon BSDEs. Furthermore, we characterize the wealth process of the $F^\Phi$-funding portfolio through our BSDE analysis.

Third, we design funding rates
ensuring that
the perpetual future price remains aligned with the target value.
Imagine an issuer seeking a perpetual future price to be aligned with $\varphi(s,X_s)$ for a given function $\varphi.$
We verify that a specific form of funding rate $\Phi(s,X_s,Y(s))$
presented in 
\eqref{eqn:Phi} induces a funding portfolio whose wealth process coincides with the target value $\varphi(s,X_s)$. Thus, this funding portfolio is the replicating portfolio for the perpetual future and  its wealth process generates the desired prices. Consequently,
this provides issuers with a solution on how to design funding rates for perpetual futures.

Fourth, we investigate a path-dependent funding rate, which offers a practical alternative.
The funding rate $\Phi(s,X_s,Y(s))$ mentioned above provides the desired perpetual future prices; however, this is not practical.
Instead, we consider the funding rate of the form
\begin{align}
	\Phi^\delta(s,X_s,Y_s):=\frac{1}{\delta}\int_{s-\delta}^s\Phi(u,X_u,Y(u))\,du  
\end{align}
for $\delta>0$. In practice, \(\delta = \frac{1}{1095}\), corresponding to 8 hours, is commonly used.
We verify that there exists a $\Phi^\delta$-funding portfolio and derive 
the corresponding wealth process $Y^\delta$. Additionally, we estimate the difference between $Y^\delta$ derived from this path-dependent funding rate and $Y$ derived from the original funding rate $\Phi.$

To illustrate the core idea of this study, we examine the following simple example.
Consider an uncorrelated $m$-dimensional Black-Scholes stock model with zero short rate. Under the risk-neutral measure, the stock price process $X = (X_1(s), \ldots, X_m(s))_{s \ge 0}$ evolves according to
\begin{equation}
	\begin{aligned}
		dX_i(s) = \sigma_i X_i(s)\, dB_i(s), \quad i = 1, 2, \ldots, m, 
	\end{aligned}
\end{equation}
where $\sigma_i > 0$ for all $i = 1, 2, \ldots, m$, and $(B_1(s), \ldots, B_m(s))_{s \ge 0}$ is an $m$-dimensional Brownian motion with uncorrelated components.
Suppose an issuer seeks to design funding rates for a perpetual future whose price process $Y$ remains aligned with the square of the first stock's price, i.e.,
$Y(s) = \varphi(X(s))$  for $s \ge 0,$
where $\varphi(x) = x_1^2$ for $x = (x_1, \ldots, x_m) \in \mathbb{R}^m$.
For the moment, assume that the funding rate depends only on the current stock prices. Specifically, we set
$F(s) = \Phi(X(s))
$ for some function $\Phi \in C^2(\mathbb{R}^m)$.
The wealth process from holding one perpetual future is
\begin{equation}\label{eqn:in}
	\begin{aligned}	
		&\quad Y(s)+\int_0^sF(u)\,du
		=\varphi(X(s))+\int_0^s\Phi(X(u))\,du=X_1^2(s)+\int_0^s\Phi(X(u))\,du\,,\;s\ge0\,.
\end{aligned}\end{equation}
By the standard  no-arbitrage principle, any wealth process must be a local martingale  under the risk-neutral measure. Applying this principle, we obtain $\Phi(x)=-\frac{1}{2}\sigma_1^2x_1^2\partial_{x_1x_1}\varphi=-\sigma_1^2x_1^2$.
Therefore, the appropriate funding rate for the perpetual future $X_1^2$ is  $\Phi(X(s))=-\sigma_1^2X_1^2(s)$ for $s\ge0$.

However, an important observation is that this funding rate does not uniquely determine the futures price.  For instance, processes such as 
$Y=2X_1+X_1^2$ and $Y=X_2+2X_m+X_1^2$ 
also serve as perpetual futures prices consistent with the same funding rate.
This implies that the given rate allows for multiple arbitrage-free prices.
To resolve this non-uniqueness, we introduce an additional term of the form $\ell(X_1^2(s)-Y(s))$ for sufficiently large $\ell>0$.
The modified funding rate 
\begin{equation}
	\begin{aligned}
		\Phi(X(s),Y(s))=\ell(X_1^2(s)-Y(s))-\sigma_1^2X_1^2(s)\,,\;s\ge0
	\end{aligned}
\end{equation}
enforces a unique perpetual futures price within a suitable class of admissible processes. Moreover, this unique price aligns with the target value $Y(s)=X_1^2(s)$ for $s\ge0.$ This modification is the core idea of the approach developed in this paper.   We later provide a rigorous justification for why the additional term guarantees the uniqueness of the perpetual futures price.

The remainder of this paper is organized as follows. Section \ref{sec:pre} introduces the basic notations for path-dependent SDEs and PDEs as preliminary concepts. In Section \ref{sec:design}, we analyze funding portfolios and their corresponding wealth processes. Section \ref{sec:design_FR} presents instantaneous spot funding rates as well as path-dependent funding rates. Section \ref{sec:appli} discusses several applications, and Section \ref{sec:con} concludes with a summary of the main findings.
The proofs of the main results are presented
in the appendices.

\section{Preliminary}
\label{sec:pre}

In this section, we introduce basic notations of path-dependent SDEs and PDEs as preliminaries. The reader may refer to \cite{ekren2014viscosity}, \cite{bally2016stochastic}, \cite{dupire2019functional} and \cite{viens2019martingale} for more details.
Throughout this study, let $(\Omega,\mathcal
{F}, \mathbb{P})$ be a complete probability space having a $m$-dimensional Brownian motion $W$.
The augmented $\sigma$-algebra generated by $W$ is denoted as $(\mathcal{F}_s)_{s\ge 0}$. 
Let $\mathbb{L}^0(\mathbb{R}^m)$ be the space of all progressively measurable processes taking values in $\mathbb{R}^m$. For $p>0$ and $T>0,$ we define 
\begin{align}
	L^p(\mathcal{F}_s;\mathbb{R}^m)&=\{\xi:\Omega\to \mathbb{R}^m\,|\, \xi \text{ is a  } \mathcal{F}_s\text{-measurable random variable and } \mathbb{E}[|\xi|^p]<\infty\}\,, \\  
	\mathbb{S}^p(0,T;\mathbb{R}^m)&=\Big\{X \in \mathbb{L}^0(\mathbb{R}^m)\,|\, X  \textnormal{ is  continuous   in time and } \mathbb{E}[|\!|X|\!|_T^p] < \infty \Big\}\,, \\
	\mathbb{H}^p(0,T;\mathbb{R}^m)&=\Big\{Z \in \mathbb{L}^0(\mathbb{R}^m)\,\Big|\, \mathbb{E}\Big[\Big(\int_0^T |Z(u)|^2\,du\Big)^{\frac{p}{2}}\Big]<\infty \Big\}\,,\\
	\mathbb{S}^p(0,\infty;\mathbb{R}^m)&=\cap_{T>0}\mathbb{S}^p(0,T;\mathbb{R}^m)\,, \\
	\mathbb{H}^p(0,\infty;\mathbb{R}^m)&=\cap_{T>0}\mathbb{H}^p(0,T;\mathbb{R}^m)\,.
\end{align}
The spaces $\mathbb{S}^p(s,T;\mathbb{R}^m), $ $\mathbb{H}^p(s,T;\mathbb{R}^m),$ $\mathbb{S}^p(s,\infty;\mathbb{R}^m)$ and $ \mathbb{H}^p(s,\infty;\mathbb{R}^m)$ are similarly defined for $s\in [0,T]$.

Let $\hat{\Lambda}:=\mathbb{D}([0,\infty),\mathbb{R}^m)$ be the space of all c\'adl\'ag functions from $[0,\infty)$ to $\mathbb{R}^m$. For ${\gamma}\in \hat{\Lambda}$, denote by ${\gamma}(s)$ the value of ${\gamma}$ at time $s$ and by ${\gamma}_s={\gamma}(s\wedge\cdot)$ the path of ${\gamma}$ stopped at time $s.$  
Define 
a seminorm $|\!|\cdot|\!|_T$ and norm $|\!|\!|\cdot|\!|\!|$ on 
$\hat{\Lambda}$ and a pseudometric $d$ on $[0,\infty)\times \hat{\Lambda}$ as  { 
	\begin{align}\label{norm}
		& |\!|\gamma|\!|_T=\sup\{|\gamma(s)|\,:\,s\in [0,T]\}\,,\;T\ge0\,,\\ 
		&|\!|\!|{\gamma}|\!|\!|=\sum_{n=1}^{\infty}\frac{1}{2^n}(|\!|\gamma|\!|_n\wedge 1)\,,\\
		&d((s,{\gamma}),(s',{\gamma}'))=|s-s'|+\sup_{r\in[0,s\vee s']}|\gamma(r\wedge s)-\gamma'(r\wedge s')|\,.
\end{align}}
We write as $|\!|\gamma|\!|\le |\!|\gamma'|\!|$
for $\gamma,\gamma'\in\hat{\Lambda}$ if
$|\!|\gamma|\!|_T\le |\!|\gamma'|\!|_T$
for all $T\ge0.$
For c\'adl\'ag processes $X$ and $X'$, the meanings of $X(s), X_s, |\!|X|\!|_T,|\!|X|\!|\le |\!|X'|\!|$ are straightforward.

A map $\varphi:[0,\infty)\times \hat{\Lambda}\to \mathbb{R}$ is called a non-anticipative functional if 
$\varphi(s,\gamma)=\varphi(s,\gamma_s)$ for all $(s,\gamma)\in [0,\infty)\times \hat{\Lambda}$.
A non-anticipative functional \( \varphi \) is said to have polynomial growth of order \( p\) at most if there exists a constant \( L > 0 \) such that
\(
|\varphi(s, \gamma)| \leq L ( 1 + |\!|\gamma|\!|_s^p ) \)
for all \((s, \gamma) \in [0, \infty) \times \hat{\Lambda}\). The constant \( p \) is referred to as the polynomial growth order of \( \varphi \).
We say a non-anticipative functional $\varphi$ is horizontally differentiable (vertically differentiable, respectively) at $(s,{\gamma})$ if the limit
\begin{align}
	\partial_s \varphi(s,{\gamma}):=\lim_{h\to 0^+}\frac{\varphi(s+h,{\gamma}_s)-\varphi(s,{\gamma}_s)}{h} 
\end{align} 
exists (if for $i=1,\cdots,m$, the limit
\begin{align}
	\partial_{i} \varphi(s,{\gamma}):=\lim_{h\to 0}\frac{\varphi(s,{\gamma}_s+he_i\boldsymbol{1}_{[s,\infty)})-\varphi(s,{\gamma}_s)}{h}
\end{align} exists  
where $(e_i)_{1\le i\le m}$ represents the standard basis of $\mathbb{R}^m$, respectively). We denote as $\partial_x \varphi(s,{\gamma})=(\partial_{i}\varphi(s,{\gamma}))_{1\le i\le m}$
and $\partial^2_{xx}\varphi(s,{\gamma})=(\partial_{i}(\partial_{j}\varphi)(s,{\gamma}))_{1\le i,j\le m}.$    
For $p\ge1$, we define the spaces 
 \begin{equation}\label{def:non-anti}
		\begin{aligned}
			&\mathcal{A}([0,\infty)\times \hat{\Lambda})=\{\varphi:[0,\infty)\times \hat{\Lambda}\to \mathbb{R}\,|\,\varphi\textnormal{ is  non-anticipative}\}\,,\\
			&C([0,\infty)\times \hat{\Lambda})
			=\{\varphi\in \mathcal{A}([0,\infty)\times \hat{\Lambda})\,|\, \varphi\textnormal{ is   continuous with respect to } d \}\,,   \\
			&C_{p}([0,\infty)\times \hat{\Lambda})
			=\{\varphi\in C([0,\infty)\times \hat{\Lambda})\,|\,  \varphi\textnormal{ has  polynomial growth of order } p\}\,,\\
			& C_{p}^{1,2}([0,\infty)\times \hat{\Lambda})
			=\{\varphi\in C([0,\infty)\times \hat{\Lambda})\,|\,\partial_s \varphi, \partial_x \varphi,   \partial^2_{xx}\varphi \textnormal{ exist and are in } C_p([0,\infty)\times \hat{\Lambda})\}\,. 
		\end{aligned}
\end{equation}

Let $\Lambda:= C([0,\infty);\mathbb{R}^m)$ be the space of all continuous functions from $[0,\infty)$
to $\mathbb{R}^m$.
Because $\Lambda\subset \hat{\Lambda}$ and $[0,\infty)\times \Lambda\subset [0,\infty)\times \hat{\Lambda}$, the seminorm $|\!|\cdot |\!|_T$, norm $|\!|\cdot |\!|$, and pseudometric $d$ defined in \eqref{norm} are inherited. 
It can be easily checked that $\Lambda$ and $[0,\infty)\times \Lambda$ are  closed subspaces of $\hat{\Lambda}$ and $[0,\infty)\times \hat{\Lambda}$, respectively.
Three spaces $\mathcal{A}([0,\infty)\times {\Lambda}),$
$C([0,\infty)\times {\Lambda}),$
$C_p([0,\infty)\times \Lambda)$ are defined 
similarly to \eqref{def:non-anti}
and
$ C^{1,2}_{p}([0,\infty)\times \Lambda)$ is defined 
as the space of all processes $\varphi\in 
\mathcal{A}([0,\infty)\times {\Lambda})$ such that there exists $\hat\varphi\in C_{p}^{1,2}([0,\infty)\times \hat{\Lambda})$ satisfying  $\varphi(s,\gamma)=\hat{\varphi}(s,\gamma)$ for all $(s,\gamma)\in [0,\infty)\times\Lambda.$ We denote $\partial_s \varphi:=\partial_s\hat{\varphi}$, $\partial_x \varphi:=\partial_x\hat{\varphi}$, $\partial^2_{xx} \varphi:=\partial^2_{xx}\hat{\varphi}$. By \cite{cont2013functional}, the derivatives $\partial_s \varphi$, $\partial_x \varphi$, $\partial^2_{xx} \varphi$ do not depend on the choice of $\hat{\varphi}$.

\section{Pricing perpetual futures}
\label{sec:design}

\subsection{Funding portfolios}\label{sec:F}

Consider a market with $m+1$ assets consisting of $m$ wealth processes, such as stocks, golds, cryptocurrencies and their linear combinations given as a solution to the path-dependent SDE
\begin{align}\label{eqn: path dep SDE}
	X(s)=x&+\int_0^s \mu(u,X_u)\,du+\int_0^s\sigma(u,X_u)\,dW(u)\, \text { for } s\ge 0
\end{align} 
for $x\in \mathbb{R}^m$ and non-anticipative functionals  $\mu:[0,\infty)\times \Lambda\to \mathbb{R}^m$, $\sigma:[0,\infty)\times \Lambda\to \mathbb{R}^{m\times m}$,   
and a
money market account
$    G=(e^{\int_0^sr(u,X_u)\,du})_{s\ge0}$ for a non-anticipative 
functional 
$r:[0,\infty)\times \Lambda \to\mathbb{R}.$
Our wealth process model encompasses negative value scenarios, including short positions in certain assets, linear combinations of multiple assets, and the event of negative oil prices (\cite{corbet2021volatility}).  
In numerous asset market models, the short rate is typically assumed to be constant rather than a functional of asset values. However, this study models the short rate as a functional dependent on asset values to generalize the market framework.

\begin{assumption}\label{hypo: SDE}
	The non-anticipative functionals $\mu$ and $\sigma$ satisfy the following conditions.
	\begin{enumerate}[(i)]
		\item There exists a constant $C_1>0$ such that $|\mu(s,0)|+|\sigma(s,0)|\le C_1$ for $s\in [0,\infty)$. 
		\item There exists constants $C_2,C_3>0$ such that
		\begin{align}
			&|\mu(s,\gamma)-\mu(s',\gamma')|\le C_2\, d((s,\gamma),(s',\gamma'))\,,\\
			&|\sigma(s,\gamma)-\sigma(s',\gamma')|\le C_3\, d((s,\gamma),(s',\gamma'))
		\end{align}
		for $(s,\gamma),(s'\gamma')\in [0,\infty)\times \Lambda$.
	\end{enumerate}
\end{assumption} 

\begin{assumption}\label{hypo: SDE_r}
	The short rate functional $r:[0,\infty)\times \Lambda \to\mathbb{R}$ is bounded. More precisely, there exists a constant $C_r>0$ such that 
	$|r(s,\gamma)|\le C_r$ for all $(s,\gamma)\in [0,\infty)\times  \Lambda$.
\end{assumption} 
The following theorem states the existence and uniqueness of solutions to the SDE \eqref{eqn: path dep SDE}.
For the proof, refer to \cite[Theorem 7, Chapter 5]{protter}.
\begin{theorem}
	Let Assumption \ref{hypo: SDE} hold and $p\ge1.$	
	Then, for any $x\in \mathbb{R}^m$, there exists a unique solution $X$ to the SDE \eqref{eqn: path dep SDE} in $\mathbb{S}^p(0,\infty;\mathbb{R}^m)$.
\end{theorem}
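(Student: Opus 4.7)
The plan is to reduce the infinite-horizon statement to a finite-horizon one and then prove existence and uniqueness on each interval $[0,T]$ by a standard Picard iteration adapted to the path-dependent setting. Observe first that the pseudometric $d$ restricted to pairs of the form $((s,\gamma_s),(s,\gamma'_s))$ collapses to $\|\gamma-\gamma'\|_s$, so Assumption \ref{hypo: SDE}(ii) supplies a non-anticipative Lipschitz bound
\begin{align}
|\mu(u,\gamma_u)-\mu(u,\gamma'_u)|+|\sigma(u,\gamma_u)-\sigma(u,\gamma'_u)|\le (C_2+C_3)\,\|\gamma-\gamma'\|_u,
\end{align}
and combined with Assumption \ref{hypo: SDE}(i) this gives the linear growth bound $|\mu(u,\gamma_u)|+|\sigma(u,\gamma_u)|\le C_1+(C_2+C_3)\|\gamma\|_u$. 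These are the only structural inputs the Picard scheme will need.

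Fix $T>0$ and define the iterates $X^0\equiv x$ and $X^{n+1}(s)=x+\int_0^s\mu(u,X^n_u)\,du+\int_0^s\sigma(u,X^n_u)\,dW(u)$ for $s\in[0,T]$. First I would produce an a priori estimate, showing $X^n\in\mathbb{S}^p(0,T;\mathbb{R}^m)$ for all $n$: apply the Burkholder--Davis--Gundy inequality together with Hölder to the stochastic integral and a Jensen estimate to the drift integral, use the linear growth bound, and close the resulting inequality with Grönwall's lemma to obtain $\sup_n\mathbb{E}\|X^n\|_T^p<\infty$. Next, subtracting two consecutive iterates and applying the same BDG/Hölder machinery to the Lipschitz differences yields an inequality of the form
\begin{align}
\mathbb{E}\|X^{n+1}-X^{n}\|_t^p\le K\int_0^t\mathbb{E}\|X^{n}-X^{n-1}\|_s^p\,ds,\qquad t\in[0,T],
\end{align}
for a constant $K=K(p,T,C_2,C_3)$. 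Iterating produces $\mathbb{E}\|X^{n+1}-X^n\|_T^p\le (KT)^n/n!\cdot\mathbb{E}\|X^1-X^0\|_T^p$, which is summable, so $(X^n)$ is Cauchy in $\mathbb{S}^p(0,T;\mathbb{R}^m)$. The completeness of that space delivers a limit $X$, and passing to the limit in the iteration equation (again via BDG and dominated convergence for the drift part) shows $X$ solves \eqref{eqn: path dep SDE} on $[0,T]$. Uniqueness on $[0,T]$ is immediate from the same Grönwall argument applied to the difference of any two solutions.

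Finally, I would patch: for each $T$ let $X^{(T)}$ denote the unique solution on $[0,T]$; pathwise uniqueness forces $X^{(T')}\!\upharpoonright_{[0,T]}=X^{(T)}$ for $T'>T$, so these restrictions consistently define a single progressively measurable continuous process $X$ on $[0,\infty)$, and membership in every $\mathbb{S}^p(0,T;\mathbb{R}^m)$ is exactly the definition of $\mathbb{S}^p(0,\infty;\mathbb{R}^m)$. The main technical point, which I expect to be the chief bookkeeping hazard rather than a deep obstacle, is keeping the $p$-th moment estimates sharp enough that the BDG step closes for every $p\ge 1$ (for $p<2$ one must invoke BDG in its full generality and be mindful that the $(\int|\sigma|^2du)^{p/2}$ term needs a further Hölder or Doob step before it can be bounded by $\int\mathbb{E}\|\cdot\|_u^p\,du$); everything else is a routine adaptation of the classical Picard proof, which is precisely why the authors content themselves with a reference to \cite{protter}.
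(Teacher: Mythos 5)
Your proposal is correct and is essentially the standard Picard-iteration proof of the result the paper delegates to its reference (\cite[Theorem 7, Chapter 5]{protter}); the paper itself offers no proof beyond that citation, so there is nothing to diverge from. The one hazard you flag yourself — closing the BDG estimate for $1\le p<2$ — is real but harmless, since on a finite horizon the case $p<2$ follows from the case $p=2$ by Jensen's inequality, and the factorial/Grönwall scheme closes cleanly for all $p\ge 2$.
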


A portfolio is a $m+1$ dimensional progressively measurable process $h=(\phi^0(s),\phi(s))_{s\ge0}=(\phi^0(s),\cdots,\phi^m(s))_{s\ge0}$ with its value process
$
V^h(s):=\phi^0(s)G(s)+\phi(s)X(s)$ for $s\ge0. 
$

\begin{definition}\label{defi:funding_port}
	Let $F=(F(s))_{s\ge0}$ be a progressively measurable process.
	\begin{enumerate}[(i)]
		\item  We say a portfolio $h=(\phi^0(s),\phi(s))_{s\ge0}$ is a $F$-funding portfolio if $\phi^0$ and $F$ are locally integrable and 	$\phi$ is locally square-integrable with respect to the Lebesgue measure on $[0,\infty)$ almost surely and if
		\begin{align}
			dV^h(s)=\phi^0(s)\,dG(s)+\phi(s)\,dX(s)-F(s)\,ds\,.
		\end{align}
		\item For  $\rho\ge1$,  we say a  portfolio $h$ is $\rho$-admissible if there exists a constant $L>0$ such that $|\!|V^h|\!|_T\leq L(1+|\!|X|\!|_T^\rho) $ for all $T\ge0.$
	\end{enumerate}	
\end{definition}
\noindent This definition is similar to the consumption portfolio in the standard investment-consumption portfolio theory. In particular, when $F=0,$ this portfolio $h$ is self-financing.
The local integrability condition is a technical requirement that guarantees the well-definedness of Lebesgue integrals and  It\^o integrals.
The
$\rho$-admissibility indicates that the value process is bounded above by a positive constant multiple of the $\rho$-th power of the underlying wealth process.

We now describe fundamental concepts of perpetual future prices and funding rate functionals.

\begin{definition} 	Let $F=(F(s))_{s\ge0}$ be a progressively measurable process.
	\begin{enumerate}[(i)]
		\item A perpetual future
		with funding rate $F$ 
		is a financial instrument where the short position pays the funding fee 
		$F(s)\,ds$ to the long position continuously until the contract is terminated. Denote the price of this perpetual future as $Y=(Y(s))_{s\ge0}$.	 
		\item A perpetual future $Y$
		with funding rate $F$ is said to be replicable if there exists a $F$-funding portfolio $h$ such that $Y=V^h$.   
	\end{enumerate}
\end{definition}

\noindent In cases where the funding rate is negative, the short position pays $F(s)\,ds$ to the long position, indicating that the short position receives $-F(s)\,ds$ from the long position.

We can express the value process of  $F$-funding portfolio as an infinite-horizon BSDE for a given funding rate $F$.
Observe that a $F$-funding portfolio $h=(\phi^0(s),\phi(s))_{s\ge0}$ satisfies
\begin{equation}
	\begin{aligned}
		dV^h(s)
		&=\phi^0(s)\,dG(s)+\phi(s)\,dX(s)-F(s)\,ds\\
		&=(r(s,X_s)(V^h(s)-\phi(s)X(s))+\phi(s)\mu(s,X_s)-F(s))\,ds+\phi(s)\sigma(s,X_s)\,dW(s)\\
		&=(r(s,X_s)V^h(s) +Z(s)\theta(X_s)
		-F(s))\,ds+Z(s)\,dW(s)
	\end{aligned}
\end{equation}
where $\theta(s,X_s):=\sigma^{-1}(s,X_s)(\mu(s,X_s)-X(s)r(s,X_s))$ and $Z(s):=\phi(s)\sigma(s,X_s)$ for $s\ge0.$
Defining
$Y(s)=V^h(s)$,
we have
\begin{align}
	\label{eqn:F}
	Y(s)=Y_T-\int_s^T&(r(u,X_u)Y(u) +Z(u)\theta(u,X_u)
-F(u))\,du-\int_s^TZ(u)\,dW(u)
\end{align}
for $0\le s\le T<\infty$
in the infinite-horizon BSDE form.

We mainly work with funding rates that depend  on both the $m$ wealth processes $X$
and the perpetual future price $Y.$
Typical funding rates traded in the cryptocurrency market follow this form.

\begin{definition}
	\label{defi:fundingrate} If a funding rate is expressed as \( F(s) = \Phi(s, X_s, Y(s)),s\ge0 \) for a non-anticipative functional \(\Phi: [0, \infty) \times \Lambda \times \mathbb{R} \to \mathbb{R} \), then we refer to \(\Phi\) as the funding rate functional and denote the corresponding funding rate by \( F^\Phi \).
\end{definition}

The first objective is to verify that for any appropriate funding rate functional $\Phi$, there exists a unique
$F^\Phi$-funding portfolio.
If the funding rate functional $\Phi=\Phi(s,X_s)$ is independent of $Y$, proving the existence and uniqueness of $F^\Phi$-funding portfolios becomes relatively straightforward. However, if $\Phi$ depends on $Y$, this is not immediately clear; the formal and rigorous proof is complex and challenging.
We address this problem by formulating it as an infinite-horizon BSDE.
With a funding rate functional $\Phi:[0,\infty)\times \Lambda \times \mathbb{R} \to\mathbb{R}$, it follows that
\begin{equation} \label{BSDE_P}
	\begin{aligned}
		Y(s)=Y_T-\int_s^T&(r(u,X_u)Y(u) +Z(u)\theta(u,X_u)
	-\Phi(u,X_u,Y(u)))\,du-\int_s^TZ(u)\,dW(u)
	\end{aligned}
\end{equation} 
for $0\le s\le T<\infty$
in the infinite-horizon BSDE form.
Consequently, the existence and uniqueness of $F^\Phi$-funding portfolios is equivalent to the existence and uniqueness of solutions to the infinite-horizon BSDE above.

\subsection{Risk-neutral pricing BSDEs}

We now introduce risk-neutral measures and demonstrate how funding portfolios are expressed using the risk-neutral measure.

\begin{assumption}\label{sharpe}
	The matrix $\sigma$ is invertible and the
	local martingale
	\begin{align}
		(e^{-\int_0^s\theta(u,X_u)\,dW(u)-\frac{1}{2}\int_0^s|\theta(u,X_u)|^2\,du})_{s\ge0} 
	\end{align}
	is a martingale.
\end{assumption}

We define a risk-neutral measure by the Girsanov theorem under this assumption. For $s\ge0,$ let $\mathbb{Q}_s$ be a probability measure on $\mathcal{F}_s$ defined as
\begin{equation}
	\label{eqn:Girsa}\frac{d\mathbb{Q}_s}{d\mathbb{P}}=e^{-\int_0^s\theta(X_u)\,dW(u)-\frac{1}{2}\int_0^s\theta^2(X_u)\,du}\,.
\end{equation} 
We can extend this to the sigma-algebra $\mathcal{F}=\sigma(\cup F_s,s\ge0)$
by \cite[Section 5.3]{deuschel1989large}.
It is evident that $\mathbb{P}$ and $\mathbb{Q}_s$ are equivalent for all $s\ge0$ 
and $X^i/G$ is a martingale for all $i=1,2,\cdots,m.$
From the Girsanov theorem, the process
\begin{equation}
	B(s)=W(s)+\int_0^s \theta(u,X_u)\,du,\,s\ge0
\end{equation}
is a Brownian motion under the measure $\mathbb{Q}.$
Using this Brownian motion $B$, 
the $\mathbb{Q}$-dynamics of the wealth processes $X$ is expressed as
\begin{align}\label{SDE_X}
	X(s)=x&+\int_0^s r(u,X_u)X(u)\,du+\int_0^s\sigma(u,X_u)\,dB(u)\,,\;s\ge0\,.
\end{align}
and the infinite-horizon BSDE \eqref{eqn:F} becomes
\begin{align} \label{eqn:Q_F}
	Y(s)=Y_T&-\int_s^T(r(u,X_u)Y(u) 
	-F(u))\,du-\int_s^TZ(u)\,dB(u)
\end{align}
for $0\le s\le T<\infty$.
For convenience, we use the notation $\mathbb{E}$ for $\mathbb{E}^\mathbb{Q}$ throughout this study without ambiguity.

Analyzing the infinite-horizon BSDE \eqref{BSDE_P} becomes more straightforward when conducted under the risk-neutral measure. It follows that
\begin{equation} \label{eqn:BSDE}
	\begin{aligned}
		Y(s)
		=Y(T)-\int_s^T&(r(u,X_u)Y(u)-\Phi(u,X_u,Y(u)))\,du-\int_s^TZ(u)\,dB(u) 
	\end{aligned}
\end{equation} 
for $0\le s\le T<\infty$.
We refer to this as the risk-neutral pricing BSDE for the perpetual future.
As mentioned above, the existence and uniqueness of $F^\Phi$-funding portfolios is equivalent to the existence and uniqueness of solutions to this risk-neutral pricing BSDE.
Theorem \ref{thm: BSDE no delay} states that under Assumptions \ref{hypo: SDE} - \ref{hypo: BSDE no delay}, this BSDE has a unique solution $(Y,Z)=(Y(s),Z(s))_{s\ge0}$, thereby implying the existence and uniqueness of 
$F^\Phi$-funding portfolios.

We state several conditions on the driver to guarantee the existence and uniqueness of solutions 
to the risk-neutral pricing BSDE.
To precisely determine the constant $M_{\rho\vee 2}$ stated in Assumption \ref{hypo: BSDE no delay} \eqref{hypo mu}, we recall the BDG inequality.
For any $q\ge 1$, there exist positive constants $m_q$ and $M_q$ such that
\begin{align}\label{eqn:BDG}     
 m_q\mathbb{E}\Big[\Big(\int_0^T|\eta(u)|^2\,du\Big)^{\frac{q}{2}}\Big] \le \mathbb{E}\Big[\sup_{0\le s\le T}\Big|\int_0^s \eta(u)\,dB(u) \Big|^q\Big]\le M_q\mathbb{E}\Big[\Big(\int_0^T|\eta(u)|^2\,du\Big)^{\frac{q}{2}}\Big]
\end{align}
for all $T\ge 0$ and $\eta\in \mathbb{H}^q(0,T;\mathbb{R}^m)$.

\begin{assumption}\label{hypo: BSDE no delay}
	Let $\Phi:[0,\infty)\times \Lambda \times \mathbb{R} \to\mathbb{R}$ be a non-anticipative functional and define $f(s,\gamma,y):=-r(s,\gamma)y+\Phi(s,\gamma,y)$ for $s\in[0,\infty), \gamma\in \Lambda,y\in \mathbb{R}$.
	Assume the function $f$ satisfies the following conditions.
	\begin{enumerate}[(i)]
		\item The function $f$ is continuous and non-anticipative.
		\item There are constants $C_4>0$ and $\rho\ge 1$ such that
		\begin{align}
			|f(s,\gamma,0)|\le C_4(1+|\!|\gamma|\!|^\rho_s)\,
		\end{align}
		for all $(s,\gamma)\in [0,\infty)\times \Lambda$.
		\item\label{eqn:Lip_f} 
		The function $f$ is Lipschitz in $y$, uniformly in $(s,\gamma).$ 
		\item There exists a constant $\ell>0$ such that
		\begin{align}
			(y-y')(f(s,\gamma,y)-f(s,\gamma,y'))\le -\ell|y-y'|^2 
		\end{align}
		for all $\gamma\in \Lambda$, $y,y'\in \mathbb{R}$.
		\item\label{hypo mu} $\ell>\inf_{K>0}(K+\frac{1}{2}(\frac{C_r}{\sqrt{2K}}+M_{\rho\vee 2}C_3)^2)\rho$, where $M_{\rho\vee 2}$ represents the constant in the BDG inequality.
	\end{enumerate}
\end{assumption}

The following theorem indicates that if the function 
$f$ meets certain conditions, then the perpetual future with funding rate $(\Phi(s,X_s,Y(s))_{s\ge0}$ can be replicated, that is, there exists a unique $F^\Phi$-funding portfolio.
The proof of the following theorem is stated in Appendix \ref{app:3}.

\begin{theorem}\label{thm: BSDE no delay}
	Let Assumptions \ref{hypo: SDE} - \ref{hypo: BSDE no delay} hold and $\rho$ be the constant in Assumption \ref{hypo: BSDE no delay}. Then, the BSDE \eqref{eqn:BSDE} has a unique solution $(Y,Z)$ in $\mathbb{S}^2(0,\infty;\mathbb{R})\times \mathbb{H}^2(0,\infty;\mathbb{R}^m)$ such that $|\!|Y|\!|\le L(1+|\!|X|\!|^\rho)\label{eqn:path BSDE mean}$ for some constant $L>0$.
\end{theorem}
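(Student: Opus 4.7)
The plan is to construct the unique solution as the limit of finite-horizon truncations, with every estimate driven by a single mechanism: a weighted $L^2$ bound that uses the strong monotonicity of $f$ in $y$ to counteract the exponential growth of the forward moments of $X$. First I would, for each integer $n\ge 1$, solve \eqref{eqn:BSDE} on $[0,n]$ with terminal condition $Y^n(n)=0$. Because $f(u,\gamma,y)=-r(u,\gamma)y+\Phi(u,\gamma,y)$ is continuous, progressively measurable, and uniformly Lipschitz in $y$ by Assumption \ref{hypo: BSDE no delay}(iii), while $|f(\cdot,X_\cdot,0)|^2$ is in $L^1(\mathbb{P}\otimes du)$ on $[0,n]$ thanks to the polynomial growth bound (ii) combined with the $\mathbb{S}^p$-moment estimates for $X$, classical Lipschitz BSDE theory (Pardoux--Peng) yields a unique $(Y^n,Z^n)\in\mathbb{S}^2(0,n;\mathbb{R})\times \mathbb{H}^2(0,n;\mathbb{R}^m)$.

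Next I would pick $\lambda\in(0,\ell)$ and apply It\^o's formula to $e^{-2\lambda s}|Y^n(s)|^2$. Rewriting the monotonicity in (iv) through Young's inequality as $2yf(u,\gamma,y)\le -2\ell|y|^2+\beta|y|^2+\beta^{-1}|f(u,\gamma,0)|^2$ with $\beta=2(\ell-\lambda)$, localizing the $dB$-integral, and taking conditional expectations gives, uniformly in $n$, the pointwise estimate
\begin{equation}
|Y^n(s)|^2 \le \frac{1}{2(\ell-\lambda)}\,\mathbb{E}\!\left[\int_s^n e^{-2\lambda(u-s)}|f(u,X_u,0)|^2\,du\,\Big|\,\mathcal{F}_s\right].
\end{equation}
Plugging in $|f(u,X_u,0)|\le C_4(1+\|X\|_u^\rho)$ together with the moment bound $\mathbb{E}[\|X\|_u^{2\rho}]\le C(1+|x|^{2\rho})e^{2\rho g(K)u}$, which is valid for every $K>0$ via BDG with constant $M_{\rho\vee 2}$ and Young's inequality on the drift $rX$, where $g(K)=K+\tfrac12(\tfrac{C_r}{\sqrt{2K}}+M_{\rho\vee 2}C_3)^2$, Assumption \ref{hypo: BSDE no delay}(v) is exactly what lets me choose $\lambda$ strictly between $\rho\inf_{K}g(K)$ and $\ell$. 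With such a $\lambda$ the conditional integral converges uniformly in $n$ and yields the pathwise bound $\|Y^n\|_T\le L(1+\|X\|_T^\rho)$, independent of $n$. Running the identical weighted computation on $Y^m-Y^n$ with $m>n$, whose terminal tail $e^{-2\lambda n}\mathbb{E}|Y^m(n)|^2$ vanishes by the same growth/weight balance, shows that $(Y^n,Z^n)$ is Cauchy in $\mathbb{S}^2(0,T;\mathbb{R})\times\mathbb{H}^2(0,T;\mathbb{R}^m)$ for every finite $T$; the limit $(Y,Z)$ inherits the polynomial bound and solves \eqref{eqn:BSDE}.

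For uniqueness I would run the same template: if $(Y^1,Z^1)$ and $(Y^2,Z^2)$ are two admissible solutions, It\^o applied to $e^{-2\lambda s}|Y^1-Y^2|^2$, combined with monotonicity to drop a nonpositive drift, leaves after expectation only the terminal contribution $e^{-2\lambda T}\mathbb{E}|Y^1(T)-Y^2(T)|^2$, which the polynomial bound on both solutions and the choice $\lambda>\rho\inf_Kg(K)$ force to zero as $T\to\infty$. The hard part will be precisely the coupling of the two constraints on $\lambda$: the weight must simultaneously exceed the exponential growth rate of $\mathbb{E}[\|X\|_s^{2\rho}]$ (so that the weighted integral of $|f(\cdot,X_\cdot,0)|^2$ converges and the terminal tails vanish) and fall strictly below $\ell$ (so that the monotonicity produces a genuine contraction). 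Assumption \ref{hypo: BSDE no delay}(v) is engineered to leave a nonempty window of such $\lambda$, and the main technical effort goes into verifying that the BDG plus Young bound for the SDE moments yields precisely the exponent $\rho\inf_K g(K)$, and into the stopping-time localization needed to upgrade the conditional $L^2$ estimate into the almost-sure pathwise bound $\|Y\|\le L(1+\|X\|^\rho)$.
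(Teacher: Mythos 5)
Your overall architecture (truncate to $[0,n]$ with zero terminal data, derive an $n$-uniform polynomial bound, show the truncations are Cauchy, prove uniqueness by sending a weighted terminal term to zero) matches the paper's. The key difference is the estimate you use: you work with the \emph{quadratic} quantity $e^{-2\lambda s}|Y^n(s)|^2$ and Young's inequality, whereas the paper linearizes the driver, writing $f(s,X_s,Y^1)-f(s,X_s,Y^2)=\alpha(s)\hat Y(s)$ with $\alpha\le-\ell$, and uses the first-order representation $\hat Y(s)=\mathbb{E}_s\big[e^{\int_s^T\alpha(u)\,du}\hat Y(T)\big]$ (and $|Y^n(s)|\le\mathbb{E}_s[\int_s^n e^{-\ell(u-s)}|f(u,X_u,0)|\,du]$ for the a priori bound). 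This is not a cosmetic difference, and it is where your argument breaks.

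The gap is in the moment bookkeeping. Your central estimate
\begin{equation}
|Y^n(s)|^2 \le \frac{1}{2(\ell-\lambda)}\,\mathbb{E}_s\Big[\int_s^n e^{-2\lambda(u-s)}|f(u,X_u,0)|^2\,du\Big]
\end{equation}
forces you to integrate $|f(u,X_u,0)|^2\le 2C_4^2(1+|\!|X|\!|_u^{2\rho})$ against the weight $e^{-2\lambda(u-s)}$, so you need $2\lambda$ to dominate the exponential growth rate of $\mathbb{E}_s[|\!|X|\!|_u^{2\rho}]$ (and likewise for the terminal tails $e^{-2\lambda T}\mathbb{E}[|Y^1(T)-Y^2(T)|^2]$ in the uniqueness and Cauchy steps). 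The $2\rho$-th moment of the running supremum is controlled via the BDG inequality with constant $M_{2\rho}$, not $M_{\rho\vee 2}$ as you assert; the resulting growth rate is $2\rho\inf_{K}\big(K+\tfrac12(\tfrac{C_r}{\sqrt{2K}}+M_{2\rho}C_3)^2\big)$. Your window for $\lambda$ is therefore $\rho\inf_K\big(K+\tfrac12(\tfrac{C_r}{\sqrt{2K}}+M_{2\rho}C_3)^2\big)<\lambda<\ell$, which Assumption \ref{hypo: BSDE no delay}\eqref{hypo mu} does not guarantee to be nonempty: that assumption only gives $\ell>\rho\inf_K\big(K+\tfrac12(\tfrac{C_r}{\sqrt{2K}}+M_{\rho\vee2}C_3)^2\big)$, and $M_{2\rho}>M_{\rho\vee 2}$ in general when $\rho>1$ (the two coincide only for $\rho=1$). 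The paper's linearization avoids this entirely: it only ever pairs the weight $e^{-\ell(T-s)}$ against the \emph{first} power of $1+|\!|X|\!|_T^\rho$, so the growth rate it must beat is $\rho\inf_K g_{\rho\vee2}(K)$, exactly what Assumption \eqref{hypo mu} provides. To repair your argument under the stated hypotheses you should replace the quadratic Lyapunov estimate by a first-order one — either the paper's linearization or an It\^o--Tanaka estimate on $e^{-\lambda s}|Y^n(s)|$ — so that only $\rho$-th moments of $|\!|X|\!|$ enter.
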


Note that our methodology differs from the standard derivative pricing framework. In the traditional approach, derivatives with a fixed terminal date are priced using a risk-neutral measure, with replicating portfolios derived via the martingale representation theorem. However, this standard method is not applicable to perpetual futures, which lack a fixed terminal date.
Instead, we transform the problem into an infinite-horizon BSDE and establish the existence and uniqueness of its solutions. Notably, without introducing  risk-neutral measures, the BSDE \eqref{BSDE_P} can be analyzed directly under the physical measure; however, this involves more complex analysis and is valid under more restrictive conditions. The risk-neutral measure introduced here is employed  to simplify the BSDE and facilitate the straightforward proofs of the existence and uniqueness of solutions.

We now examine the Feynman-Kac formula in the context of path-dependent infinite-horizon BSDEs and the corresponding path-dependent PDEs (PPDEs). Appendix \ref{app:FK} provides the definition of viscosity solutions for PPDEs and the proof of the following theorem.

\begin{theorem}\label{thm: fey} Let Assumptions \ref{hypo: SDE}-\ref{hypo: BSDE no delay} hold and let $\rho$ be the constant in Assumption \ref{hypo: BSDE no delay}.
	Assume further that the mapping  $(s,\gamma)\mapsto r(s,\gamma)\gamma(s)$ is Lipschitz continuous.	
	For $(s,\gamma)\in [0,\infty)\times \Lambda$, let   $X^{s,\gamma}$ be a solution to 
	\begin{align}\label{eqn: fey sde}
		X^{s,\gamma}(v)&=\gamma(s)+\int_s^v r(u,X_u^{t,\gamma})X^{t,\gamma}(u)\,du+\int_s^v\sigma(u,X_u^{t,\gamma})\,dB(u)\;,\quad v\ge s\,,\\
		X^{s,\gamma}(v)&=\gamma(v)\;,\quad 0\le v\le s\,.
	\end{align}
	
	\begin{enumerate}[(i)]
		\item\label{eqn: fey i} The infinite-horizon BSDE 
		\begin{align}
			&Y^{s,\gamma}(v)=Y^{s,\gamma}(T)+\int_v^T f(u,X^{s,\gamma}_u,Y^{s,\gamma}(u))\,du \\&\quad\quad\quad-\int_v^TZ^{s,\gamma}(u)\,dB(u)\,,\; s\le v\le T<\infty\label{eqn: fey bsde}\,
		\end{align} has a unique solution  $(Y^{s,\gamma},Z^{s,\gamma})$  in $\mathbb{S}^2(s,\infty;\mathbb{R})\times \mathbb{H}^2(s,\infty;\mathbb{R}^m)$ such that $|\!|Y^{s,\gamma}|\!|\le L(1+|\!|X^{s,\gamma}|\!|^{\rho})$  for some constant $L>0$.
		\item\label{eqn: fey ii}  Define a function $\varphi:[0,\infty)\times \Lambda\to\mathbb{R} $ as $\varphi(s,\gamma)=Y^{s,\gamma}(s),$
		then $Y^{0,x}(s)=\varphi(s,X^{0,x}_s)$ for all $x\in\mathbb{R}^m.$
		Then $\varphi$ is continuous and is a viscosity solution to the PPDE
		\begin{align}
			&-\partial_s \varphi(s,\gamma)-\frac{1}{2}\textnormal{tr}(\sigma\sigma^{\top}\partial_{xx}\varphi)(s,\gamma)- r(s,\gamma)\partial_{x}\varphi(s,\gamma)\gamma(s)-f(s,\gamma,\varphi(s,\gamma))=0\,.\qquad\label{eqn: fey pde}
		\end{align}
		\item\label{eqn: fey iii}
		Suppose $\varphi\in C_\rho^{1,2}([0,\infty)\times\Lambda)$ is a solution to the PPDE
		\begin{align}
			&\label{eqn:ppde}-\partial_s \varphi(s,\gamma)-\frac{1}{2}\textnormal{tr}(\sigma\sigma^{\top}\partial_{xx}\varphi)(s,\gamma)- r(s,\gamma)\partial_{x}\varphi(s,\gamma)\gamma(s)-f(s,\gamma,\varphi(s,\gamma))=0\,.
		\end{align}
		Then we have $(Y^{0,x},Z^{0,x})=(\varphi(s,X_s^{0,x}),\sigma^{\top}\partial_x \varphi(s,X_s^{0,x}))_{s\ge0}$ for all $x\in \mathbb{R}^m.$
	\end{enumerate}  
\end{theorem}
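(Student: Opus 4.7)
The plan is to prove the three parts of Theorem~\ref{thm: fey} in order, using Theorem~\ref{thm: BSDE no delay} as the engine for part (i), extracting part (ii) from the flow property together with a dynamic-programming argument, and obtaining part (iii) from Dupire's functional Itô formula.

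For part (i), I would note that under Assumption~\ref{hypo: SDE} the shifted SDE~\eqref{eqn: fey sde} admits a unique solution $X^{s,\gamma}$ with the same polynomial moment bounds as $X$, uniformly in $(s,\gamma)$ on bounded subsets of $[0,\infty)\times\Lambda$. The driver $f$ on $[s,\infty)\times\Lambda\times\mathbb{R}$ inherits continuity, polynomial growth in $\gamma$, uniform Lipschitz regularity in $y$, and the dissipativity constant $\ell$ directly from Assumption~\ref{hypo: BSDE no delay}. The construction used in Theorem~\ref{thm: BSDE no delay}---approximating by finite-horizon solutions on $[s,T]$ and letting $T\to\infty$ via a dissipativity-based contraction---then applies verbatim to produce the unique pair $(Y^{s,\gamma},Z^{s,\gamma})\in\mathbb{S}^2(s,\infty;\mathbb{R})\times\mathbb{H}^2(s,\infty;\mathbb{R}^m)$ with the stated $\rho$-growth bound.

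For part (ii), I would first establish the flow property $Y^{s,\gamma}(v)=\varphi(v,X^{s,\gamma}_v)$ for $v\ge s$: both $(Y^{s,\gamma}(u))_{u\ge v}$ and $(Y^{v,X^{s,\gamma}_v}(u))_{u\ge v}$ solve the same BSDE on $[v,\infty)$ within the same admissibility class, so uniqueness from (i) forces them to coincide at $u=v$. Continuity of $\varphi$ on $[0,\infty)\times\Lambda$ then follows from the a priori BSDE stability estimate in the dissipative regime: when $(s_n,\gamma_n)\to(s,\gamma)$ in $d$, Lipschitz continuity of the SDE coefficients yields $X^{s_n,\gamma_n}\to X^{s,\gamma}$ in $\mathbb{S}^p$, which propagates to $Y^{s_n,\gamma_n}(s_n)\to Y^{s,\gamma}(s)$. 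For the viscosity property, the flow identity supplies the dynamic programming principle needed: for any smooth test functional $\psi$ touching $\varphi$ from above at $(s,\gamma)$, Dupire's functional Itô formula applied to $\psi(u,X^{s,\gamma}_u)$ on $[s,s+h]$, compared with the BSDE expression for $\varphi(s+h,X^{s,\gamma}_{s+h})-\varphi(s,\gamma)$, yields after dividing by $h$ and taking $h\to 0^+$ the viscosity inequality
\begin{equation*}
-\partial_s\psi(s,\gamma)-\tfrac{1}{2}\mathrm{tr}(\sigma\sigma^\top\partial_{xx}\psi)(s,\gamma)-r(s,\gamma)\partial_{x}\psi(s,\gamma)\gamma(s)-f(s,\gamma,\psi(s,\gamma))\le 0.
\end{equation*}
The supersolution side is handled symmetrically, which identifies $\varphi$ as a viscosity solution of~\eqref{eqn: fey pde} in the sense of Appendix~\ref{app:FK}.

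For part (iii), the regularity $\varphi\in C^{1,2}_\rho([0,\infty)\times\Lambda)$ is precisely what is required by Dupire's functional Itô formula applied to $\varphi(s,X^{0,x}_s)$; using \eqref{SDE_X}, the resulting drift is $\partial_s\varphi+\tfrac{1}{2}\mathrm{tr}(\sigma\sigma^\top\partial_{xx}\varphi)+r\,\partial_{x}\varphi\cdot X^{0,x}(s)$ and the martingale part is $\partial_{x}\varphi\cdot\sigma\,dB$, while the PPDE~\eqref{eqn:ppde} collapses this drift to $-f(s,X^{0,x}_s,\varphi(s,X^{0,x}_s))$. Hence $(\tilde Y,\tilde Z):=(\varphi(\cdot,X^{0,x}_\cdot),\sigma^\top\partial_{x}\varphi(\cdot,X^{0,x}_\cdot))$ satisfies the BSDE of (i) with initial data $(0,x)$; the polynomial growth of $\varphi$ and $\partial_{x}\varphi$ in $C^{1,2}_\rho$ together with the $\mathbb{S}^p$ bounds on $X^{0,x}$ place this pair in the admissible class, so uniqueness from (i) identifies it with $(Y^{0,x},Z^{0,x})$. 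The hardest step will be the viscosity-solution argument in (ii): the path-dependent viscosity framework of Ekren--Touzi--Zhang and Viens--Zhang is typically formulated over a finite horizon with a prescribed terminal condition, whereas here the role of the terminal condition is played implicitly by the polynomial admissibility class used to uniquely select $\varphi$, so controlling the $o(h)$ remainder in the functional Itô expansion uniformly over this class, and reconciling it with the nonlinear-expectation definition in Appendix~\ref{app:FK}, will constitute the main technical burden.
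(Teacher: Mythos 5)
Your treatments of parts (i) and (iii) coincide with the paper's: part (i) is the finite-horizon approximation and dissipativity contraction of Theorem \ref{thm: BSDE no delay} run on $[s,\infty)$, and part (iii) is exactly the functional It\^o formula applied to $\varphi(\cdot,X^{0,x}_\cdot)$, the PPDE collapsing the drift to $-f$, followed by the uniqueness of Theorem \ref{thm: BSDE no delay}. The issues are in part (ii).

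First, your flow-property argument ``both $(Y^{s,\gamma}(u))_{u\ge v}$ and $(Y^{v,X^{s,\gamma}_v}(u))_{u\ge v}$ solve the same BSDE, so uniqueness identifies them'' presupposes that one can substitute the random path $X^{s,\gamma}_v$ into the deterministic parameter $\gamma$ and still obtain a progressively measurable solution; this measurable-dependence/pasting step is precisely the nontrivial content. The paper avoids it by defining $\varphi^n(s,\gamma):=Y^{n,s,\gamma}(s)$ from the finite-horizon approximations, invoking the known finite-horizon results of Cordoni et al. for continuity of $\varphi^n$ and the identity $Y^{n,0,x}(v)=\varphi^n(v,X^{0,x}_v)$, and then passing to the locally uniform limit furnished by the dissipativity estimate \eqref{eqn: 1}. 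Second, and more seriously, your viscosity argument --- apply the functional It\^o formula to a test functional touching $\varphi$ from above on $[s,s+h]$, divide by $h$, and let $h\to0^+$ --- is incompatible with the definition of viscosity solutions actually used in Appendix \ref{app:FK}. There, $\psi\in\underline{\mathcal{A}}\varphi(s,\gamma)$ is \emph{not} required to dominate $\varphi$ pointwise on a neighbourhood; it only satisfies $0=\psi(s,\gamma)-\varphi(s,\gamma)=\min_{\tilde\tau}\mathbb{E}[(\psi-\varphi)(\tau\wedge\tilde\tau,X^{s,\gamma})]$ for some strictly positive stopping time $\tau$. A local Taylor expansion divided by $h$ therefore gives no sign information. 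The paper's proof instead argues by contradiction: assuming the PPDE expression equals $-M<0$ at $(s,\gamma)$, it introduces the stopping time $\hat\tau$ up to which this expression (corrected by the Lipschitz term $C_f|\hat Y(v)|$, needed because the BSDE driver is evaluated at $Y^{s,\gamma}$ rather than at $\psi$) stays below $-M/2$, applies the functional It\^o formula up to $\hat\tau$, and derives $\mathbb{E}[(\psi-\varphi)(\tau\wedge\hat\tau,X^{s,\gamma})]<0$, contradicting the optimal-stopping characterization. You flag the reconciliation with this definition as ``the main technical burden,'' but that burden is the proof; as written your sketch for (ii) does not close.
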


\section{Designing funding rates}
\label{sec:design_FR}

In this section, we address key questions related to the funding mechanism. 
Consider an issuer seeking to maintain the perpetual future price in alignment with a target value \( \varphi(s, X_s) \). A common example is \( \varphi(s,\gamma)= \varphi(s,\gamma_1,\cdots,\gamma_m)= c_0 + \sum_{i=1}^mc_i \gamma_i(s) \) for $c_0,c_1,\cdots,c_m\in\mathbb{R}$, which represents an index of composite assets. Additional examples are provided in Section \ref{sec:appli}. 
We emphasize that the target value $\varphi(s,X_s)$ is not necessarily tradable.
The central question is how to design the funding rate to ensure that the perpetual future price remains consistent with the desired process \( \varphi(s, X_s) \). In this section, we provide an answer to that question for any given $p\ge1$ and $\varphi\in C_p^{1,2}([0,\infty)\times {\Lambda})$.

\subsection{Instantaneous spot funding rates}\label{sec:inst}

We analyze instantaneous spot funding rates, an ideal mechanism for keeping the perpetual future price aligned with the specified target values.

\begin{assumption}
	\label{Hypo:H}
	Suppose that a function $H:\mathbb{R}^2\to\mathbb{R}$ satisfies the following conditions. \begin{enumerate}[(i)]
		\item\label{label: 2-2} $H(y_1,y_2)=0$ if $y_1=y_2.$
		\item\label{label: 2-3} The function $H$ is Lipschitz in $y_2$, uniformly in $y_1.$
		\item\label{label: 2-4} There exists a constant $\ell>0$ such that 
		\begin{align}
			(y_2-y_2')(H(y_1,y_2)-H(y_1,y_2'))\le -\ell |y_2-y_2'|^2
		\end{align}
		for all $y_1,y_2,y_2'\in \mathbb{R}$.
	\end{enumerate}
\end{assumption}

For any $p\ge1$ and $\varphi\in C_p^{1,2}([0,\infty)\times {\Lambda})$,
we present a method for designing funding rates that ensures the perpetual future price remains anchored to  \( \varphi(s, X_s) \) for $s\ge0$.
The following theorem states that 
the risk-neutral pricing BSDE \eqref{eqn:BSDE} with the funding rate functional $\Phi$ stated in \eqref{eqn:Phi} has a unique solution; moreover, the unique solution $(Y,Z)$ coincides with $(\varphi(s,X_s),(\partial_x\sigma\,\varphi)(s,X_s))_{s\ge0}$. Consequently, this funding rate functional $\Phi$ 
induces a perpetual future price aligned with \( \varphi(s, X_s) \) for $s\ge0.$
The constant $M_{\rho\vee 2}$ in the following theorem is the BDG inequality constant in \eqref{eqn:BDG}.
The proof of the following theorem is described in Appendix \ref{app:fundingrate}.

\begin{theorem}\label{thn:main}
	Let Assumptions \ref{hypo: SDE}-\ref{hypo: SDE_r} hold and $H:\mathbb{R}^2\to\mathbb{R}$ be a function satisfying Assumption \ref{Hypo:H}. For any $p\ge1$ and $\varphi\in C_p^{1,2}([0,\infty)\times {\Lambda})$, consider the funding rate functional
	\begin{equation}\label{eqn:Phi}
		\begin{aligned}
			\Phi(s,\gamma,y):&=H(\varphi(s,\gamma),y)
			-\partial_s \varphi(s,\gamma)-\frac{1}{2}\textnormal{tr}(\sigma\sigma^{\top}\partial_{xx}\varphi)(s,\gamma)- r(s,\gamma)\partial_{x}\varphi(s,\gamma)\gamma(s)+r(s,\gamma)y
		\end{aligned}
	\end{equation}
	for $(s,\gamma,y)\in [0,\infty)\times \Lambda\times \mathbb{R}$. Then, we have the following.
	\begin{enumerate}[(i)]
		\item 
		The map $(s,\gamma) \mapsto \Phi(s,\gamma,0)$ has polynomial growth. Let \(\rho \ge p\) and $C_\Phi>0$ be constants such that
		\(
		|\Phi(s,\gamma,0)| \leq C_{\Phi} ( 1 + |\!|\gamma|\!|_s^\rho ) \)
		for all \((s, \gamma) \in [0, \infty) \times {\Lambda}\).  
		\item Let $\ell$ be the constant in Assumption \ref{Hypo:H}.
		If
		\begin{equation}
			\label{eqn:ell} 
			\ell>\inf_{K>0}(K+\frac{1}{2}(\frac{C_r}{\sqrt{2K}}+M_{\rho\vee2 }C_3)^2)\rho\,,
		\end{equation} 
		the risk-neutral pricing BSDE \eqref{eqn:BSDE}	
		has a unique solution $(Y,Z)$ in $\mathbb{S}^2(0,\infty;\mathbb{R})\times \mathbb{H}^2(0,\infty;\mathbb{R}^m)$ such that $|\!|Y|\!|\le L(1+|\!|X|\!|^{\rho})$ for some constant $L>0$. Moreover, $Y(s)=\varphi(s,X_s)$ and $Z(s)=(\partial_x \varphi\, \sigma)(s,X_s)$ for $s\ge0.$
	\end{enumerate}	  
\end{theorem}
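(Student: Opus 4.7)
The plan is to reduce both parts to Theorem~\ref{thm: BSDE no delay} applied to the driver $f(s,\gamma,y) := -r(s,\gamma)y + \Phi(s,\gamma,y)$ prescribed by Assumption~\ref{hypo: BSDE no delay}, and then to identify the unique solution by a functional It\^o computation on the candidate $(Y(s),Z(s)) = (\varphi(s,X_s),(\partial_x\varphi\,\sigma)(s,X_s))$.

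For part (i), I would inspect each summand in the definition~\eqref{eqn:Phi} at $y=0$. The key observation is that Assumption~\ref{Hypo:H}\eqref{label: 2-2}--\eqref{label: 2-3} imply
\[
|H(\varphi(s,\gamma),0)| \;=\; |H(\varphi(s,\gamma),0) - H(\varphi(s,\gamma),\varphi(s,\gamma))| \;\le\; L_H\,|\varphi(s,\gamma)|,
\]
so this term inherits the polynomial growth of $\varphi$, which is order $p$. The remaining summands $\partial_s\varphi$, $\tfrac{1}{2}\textnormal{tr}(\sigma\sigma^\top\partial_{xx}\varphi)$ and $r\,\partial_x\varphi\,\gamma(s)$ have polynomial growth of orders $p$, $p+2$ and $p+1$ respectively: $\partial_s\varphi,\partial_x\varphi,\partial_{xx}\varphi \in C_p$ by assumption, $\sigma$ has linear growth by Assumption~\ref{hypo: SDE}, and $r$ is bounded by Assumption~\ref{hypo: SDE_r}. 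Taking $\rho$ to be the maximum exponent (at most $p+2$, and in particular $\ge p$) yields the claimed constants $\rho$ and $C_\Phi$.

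For part (ii), a fortunate cancellation simplifies the driver: the $+r(s,\gamma)y$ built into $\Phi$ exactly cancels the $-r(s,\gamma)y$ coming from the discounting, so
\[
f(s,\gamma,y) \;=\; H(\varphi(s,\gamma),y) + \Psi(s,\gamma),
\]
where $\Psi$ does not depend on $y$. This lets me check Assumption~\ref{hypo: BSDE no delay} essentially term by term: continuity and non-anticipativity are inherited from $\varphi$, $H$, $\sigma$, $r$; the polynomial growth of $f(\cdot,\cdot,0)$ was just established in part~(i); the Lipschitz and one-sided $\ell$-monotonicity in $y$ follow immediately from Assumption~\ref{Hypo:H}\eqref{label: 2-3}--\eqref{label: 2-4} because only $H(\varphi(s,\gamma),y)$ depends on $y$; and the quantitative condition~\eqref{hypo mu} is precisely hypothesis~\eqref{eqn:ell}. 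Theorem~\ref{thm: BSDE no delay} therefore delivers a unique solution $(Y,Z)\in \mathbb{S}^2(0,\infty;\mathbb{R})\times \mathbb{H}^2(0,\infty;\mathbb{R}^m)$ satisfying the asserted polynomial bound.

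It remains to identify this solution. Applying the functional It\^o formula to $\varphi(s,X_s)$ along the $\mathbb{Q}$-dynamics~\eqref{SDE_X}, I obtain
\[
d\varphi(s,X_s) = \bigl[\partial_s\varphi + r\,\partial_x\varphi\,X(s) + \tfrac{1}{2}\textnormal{tr}(\sigma\sigma^\top\partial_{xx}\varphi)\bigr]\,ds + (\partial_x\varphi\,\sigma)\,dB(s).
\]
Substituting $y=\varphi(s,X_s)$ into~\eqref{eqn:Phi}, the $H$-term vanishes by Assumption~\ref{Hypo:H}\eqref{label: 2-2}, and a direct rearrangement shows the drift above equals $r(s,X_s)\varphi(s,X_s) - \Phi(s,X_s,\varphi(s,X_s))$, so the candidate does solve the BSDE~\eqref{eqn:BSDE}. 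Moment bounds for $X$ from Theorem~3.1 combined with the polynomial growth of $\varphi$ and $\partial_x\varphi\,\sigma$ place the candidate in $\mathbb{S}^2\times \mathbb{H}^2$, and uniqueness from Theorem~\ref{thm: BSDE no delay} forces $(Y,Z)=(\varphi(s,X_s),(\partial_x\varphi\,\sigma)(s,X_s))$. The main obstacle is the careful tracking of the polynomial exponent $\rho$ through the products appearing in~\eqref{eqn:Phi}, since the exponent chosen in (i) must be consistent with the structural condition~\eqref{eqn:ell} inherited by Assumption~\ref{hypo: BSDE no delay}\eqref{hypo mu}; the It\^o verification itself is routine under Assumption~\ref{hypo: SDE}.
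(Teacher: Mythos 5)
Your proposal is correct and follows essentially the same route as the paper: verify Assumption \ref{hypo: BSDE no delay} for the driver $f=-ry+\Phi$ (using the cancellation of the $r y$ terms), invoke Theorem \ref{thm: BSDE no delay} for existence and uniqueness, and identify the solution as $(\varphi(s,X_s),(\partial_x\varphi\,\sigma)(s,X_s))$ via the functional It\^o formula together with the vanishing of $H(\varphi,\varphi)$. Your derivation of the linear growth of $y_1\mapsto H(y_1,0)$ directly from $H(y_1,y_1)=0$ and the Lipschitz property is a slightly more streamlined version of the paper's factorization $H(y_1,y_2)=g(y_1,y_2)(y_1-y_2)$, but the substance is identical.
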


For given $p\ge1$ and $\varphi\in C_p^{1,2}([0,\infty)\times {\Lambda})$, a funding portfolio is said to be admissible if it is $p$-admissible. In the above theorem, because the unique solution is given by \( Y(s) = \varphi(s, X_s) \) and \( \varphi \) has polynomial growth of order \( p \) with $p\le \rho,$ it follows that there exists a unique \( F^\Phi \)-funding portfolio among all admissible funding portfolios. An important observation is that this uniqueness holds across a broad class of funding portfolios.
Recall the concept of \(\rho\)-admissibility from Definition \ref{defi:funding_port}. The theorem guarantees the existence of a unique \( F^\Phi \)-funding portfolio within the class of all \(\rho\)-admissible funding portfolios. As stated in \eqref{eqn:ell}, as \(\ell\) increases, \(\rho\) can be chosen larger. Therefore, the uniqueness result extends to a wider class of portfolios as \(\ell\) grows.
As $\ell$ is a constant selected by the issuer, the issuer can ensure the existence of a unique price and a unique replicating portfolio within a desired class of funding portfolios.

The following corollary presents a simpler method of designing funding rates that ensures the perpetual future price aligns with the target value $\varphi(s,X_s)$ for $s\ge0.$ 
It is obtained by setting \(\rho = p + 2\) in the above theorem.
The detailed proof is provided in Appendix \ref{app:fundingrate}.

\begin{corollary}\label{cor:main}
	Let Assumptions \ref{hypo: SDE}-\ref{hypo: SDE_r} hold. For given $p\ge1$ and $\varphi\in C_p^{1,2}([0,\infty)\times {\Lambda})$,
	choose 
	any function $H$ satisfying Assumption \ref{Hypo:H} for the constant
	$\ell$ such that
	\begin{align}
		\ell>\inf_{K>0}(K+\frac{1}{2}(\frac{C_r}{\sqrt{2K}}+M_{p+2 }C_3)^2)(p+2)
	\end{align}
	and define the funding rate functional $\Phi$ as stated in \eqref{eqn:Phi}. 
	Then, the risk-neutral pricing BSDE \eqref{eqn:BSDE}	
	has a unique solution $(Y,Z)$ in $\mathbb{S}^2(0,\infty;\mathbb{R})\times \mathbb{H}^2(0,\infty;\mathbb{R}^m)$ such that $|\!|Y|\!|\le L(1+|\!|X|\!|^{p+2})$ for some constant $L>0$. Moreover, $Y(s)=\varphi(s,X_s)$ and $Z(s)=(\partial_x \varphi\, \sigma)(s,X_s)$ for $s\ge0.$  
\end{corollary}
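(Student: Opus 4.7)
The plan is to derive the corollary as a direct specialization of Theorem \ref{thn:main} with the choice $\rho = p+2$. Everything reduces to two observations: (a) the functional $\Phi(\cdot,\cdot,0)$ has polynomial growth of order at most $p+2$, and (b) with this value of $\rho$ the BDG constant appearing in condition \eqref{eqn:ell} of Theorem \ref{thn:main} coincides with $M_{p+2}$, so the hypothesis on $\ell$ in the corollary matches the hypothesis in the theorem.

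First I would carry out the polynomial-growth bookkeeping term-by-term in the definition \eqref{eqn:Phi} of $\Phi$ at $y=0$. Because $\varphi \in C_p^{1,2}([0,\infty)\times \Lambda)$, each of $\varphi, \partial_s\varphi, \partial_x\varphi, \partial_{xx}\varphi$ has polynomial growth of order $p$. For the $H$-term, combining Assumption \ref{Hypo:H}\eqref{label: 2-2}--\eqref{label: 2-3} gives $|H(\varphi(s,\gamma),0)| = |H(\varphi(s,\gamma),0) - H(\varphi(s,\gamma),\varphi(s,\gamma))| \leq L_H |\varphi(s,\gamma)|$, so this term grows like $\|\gamma\|_s^p$. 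From Assumption \ref{hypo: SDE}, $|\sigma(s,\gamma)| \leq C_1 + C_3\|\gamma\|_s$, hence $|\sigma\sigma^\top|$ grows like $\|\gamma\|_s^2$, which together with the order-$p$ growth of $\partial_{xx}\varphi$ yields growth of order $p+2$ for the trace term. The remaining term $r(s,\gamma)\partial_x\varphi(s,\gamma)\gamma(s)$ is of order $p+1$ by Assumption \ref{hypo: SDE_r}. The overall polynomial growth order of $\Phi(s,\gamma,0)$ is therefore at most $p+2$, so Theorem \ref{thn:main}(i) applies with $\rho = p+2 \ge p$.

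Next I would verify the compatibility of the constant $\ell$. Since $p \geq 1$ we have $p+2 \geq 3 > 2$, hence $(p+2)\vee 2 = p+2$ and $M_{\rho\vee 2} = M_{p+2}$. Consequently the inequality
\begin{equation}
\ell > \inf_{K>0}\Bigl(K + \tfrac{1}{2}\bigl(\tfrac{C_r}{\sqrt{2K}} + M_{p+2}C_3\bigr)^2\Bigr)(p+2)
\end{equation}
assumed in the corollary is precisely condition \eqref{eqn:ell} of Theorem \ref{thn:main} for $\rho = p+2$. Applying Theorem \ref{thn:main}(ii) then yields a unique solution $(Y,Z) \in \mathbb{S}^2(0,\infty;\mathbb{R})\times \mathbb{H}^2(0,\infty;\mathbb{R}^m)$ satisfying $\|Y\| \leq L(1+\|X\|^{p+2})$ together with the identifications $Y(s) = \varphi(s,X_s)$ and $Z(s) = (\partial_x\varphi\,\sigma)(s,X_s)$, which is exactly the conclusion of the corollary.

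There is no real obstacle here beyond the routine polynomial-growth count; the only place to pay attention is ensuring that the trace term $\tfrac{1}{2}\textnormal{tr}(\sigma\sigma^\top \partial_{xx}\varphi)$ is genuinely controlled by $\|\gamma\|_s^{p+2}$ (rather than something larger), which follows from the at-most-linear growth of $\sigma$ implied by Assumption \ref{hypo: SDE}(i)--(ii). Once that is confirmed, the corollary is a one-line invocation of Theorem \ref{thn:main} with $\rho = p+2$.
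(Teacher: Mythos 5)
Your proposal is correct and follows essentially the same route as the paper: the paper's proof of Corollary \ref{cor:main} likewise reduces to checking that $H(\varphi(s,\gamma),0)$, $\partial_s\varphi$, $\textnormal{tr}(\sigma\sigma^{\top}\partial_{xx}\varphi)$ and $r\partial_x\varphi\,\gamma(s)$ have polynomial growth of orders $p$, $p$, $p+2$, $p+1$ respectively, so that $\Phi(\cdot,\cdot,0)$ has growth order $p+2$, and then invokes Theorem \ref{thn:main} with $\rho=p+2$. Your additional observations (the linear-growth bound $|H(y_1,0)|\le L_H|y_1|$ via Assumption \ref{Hypo:H}, and $M_{(p+2)\vee 2}=M_{p+2}$) are exactly the details the paper handles in the proof of Theorem \ref{thn:main}(i).
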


The funding rate $\Phi$ presented in \eqref{eqn:Phi} consists of three components:
$H(\varphi(s,\gamma), y)$, $-\mathcal{L}\varphi(s,\gamma)$ and $r(s,\gamma)y$, where
\begin{equation}
	\begin{aligned}
		\mathcal{L}\varphi(s,\gamma) := \partial_s \varphi(s,\gamma) &+ \frac{1}{2} \textnormal{tr}(\sigma\sigma^{\top} \partial_{xx}\varphi)(s,\gamma) + r(s,\gamma) \partial_x \varphi(s,\gamma) \gamma(s)\,.
	\end{aligned}
\end{equation}
Each term plays a distinct role in the funding mechanism.
The first term $H(\varphi(s,\gamma), y)$ plays a central role and will be discussed in more detail below.
The second term $-\mathcal{L}\varphi(s,\gamma)$ is associated with the no-arbitrage condition. Specifically, observe that the process
\begin{equation}
	\begin{aligned}
		&\quad Y(s) - \int_0^s \big(r(u, X_u)Y(u) - \Phi(u, X_u, Y(u))\big) \, du= \varphi(s, X_s) - \int_0^s \mathcal{L}\varphi(u, X_u)\, du\,,\;s\ge0
	\end{aligned}
\end{equation} 
is a local martingale under the risk-neutral measure because $\mathcal{L}$ is the infinitesimal generator of $\varphi$.
This implies that the perpetual futures price $Y$, when adjusted by the funding fee $\Phi$, is arbitrage-free.
More precisely, 
it satisfies the condition of no free lunch with vanishing risk (\cite{delbaen1994general}).
The third term $r(s,\gamma)y$  reflects the interest rate cost or benefit from holding the perpetual future.

The first term $H(\varphi(s,\gamma), y)$ plays a crucial role in the funding mechanism.
A remarkable phenomenon is that the transaction corresponding to $H(\varphi(s,\gamma), y)$ does not occur in practice, since Assumption \ref{Hypo:H} \eqref{label: 2-2} implies $H(\varphi(s, X_s), Y(s)) = H(Y(s), Y(s)) = 0$; nevertheless, this term is essential for ensuring the uniqueness of perpetual futures prices.	
From a mathematical perspective, it ensures the uniqueness of solutions to the infinite-horizon BSDE \eqref{eqn:BSDE}.
From an economic standpoint, it enforces uniqueness through the law of supply and demand. To illustrate this, assume for simplicity that the short rate is zero and consider the case where \( \varphi(s, \gamma) = \varphi(s, \gamma_1, \ldots, \gamma_m) = \gamma_1(s) \) and \( H(y_1, y_2) = \ell(y_1 - y_2) \). Then the funding rate functional defined in \eqref{eqn:Phi} simplifies to
$
\Phi(s, X_s, Y(s)) = \ell ( X_1(s) - Y(s) ).$
If $Y(s)<X_1(s)$ at time $s$, indicating that the perpetual future price is below the first asset price, the long positions receive a positive amount $-\ell (Y(s) - X_1(s))\, ds$ from the short positions. Consequently, investors are incentivized to purchase more futures, increasing its price.	
Conversely, if $Y(s)>X_1(s)$ at time $s$, the long positions must pay $\ell (Y(s) -X_1(s)) \, ds$ to the short positions. Consequently, they are likely to sell their futures, which drives the price down. This dynamic of supply and demand keeps the perpetual future price aligned with the first asset price.

In this context, the parameter \(\ell\) must be chosen sufficiently large to ensure the proper functioning of the law of supply and demand. Specifically, the condition stated in \eqref{eqn:ell} needs to be satisfied. If this condition is not met, the uniqueness of funding portfolios cannot be guaranteed. 
To illustrate this, consider the two-dimensional uncorrelated Black-Scholes model \(X = (X_1(s), X_2(s))_{s \ge 0}\) with a constant short rate \(r=1\). The risk-neutral dynamics described in \eqref{SDE_X} simplify to
\begin{align}
	X_i(s) = X_i(0) &+ \int_0^s X_i(u)\, du + \int_0^s \sigma_i X_i(u)\, dB_i(u)\,,\; i=1,2
\end{align}
where $\sigma_1$ and $\sigma_2$ are volatility constants.
Suppose we set 
\( \varphi(s, \gamma) = \varphi(s, \gamma_1, \ldots, \gamma_m) = \gamma_1(s) \) and \( H(y_1, y_2) = \ell(y_1 - y_2) \) with $\ell=1.$
The parameter \(\ell=1\) does not satisfy the condition \eqref{eqn:ell} because
$\inf_{K>0}(K+\frac{1}{2}(\frac{C_r}{\sqrt{2K}}+M_{\rho\vee2 }C_3)^2)\rho > \inf_{K>0}(K+\frac{1}{4K})p=1$, 
where we have used \(C_r = 1\), \(M_{\rho \vee 2} > 0\), \(C_3 > 0\), and \(\rho \ge p = 1\).
The funding rate functional \(\Phi\) defined in \eqref{eqn:Phi} is zero in this case. It is evident that a zero funding rate does not result in a unique future price. Indeed, the risk-neutral pricing BSDE 
\begin{align}
	Y(s)=Y(T)-\int_s^T Y(u)\,du&-\int_s^TZ_1(u)\,dB_1(u)-\int_s^TZ_2(u)\,dB_2(u) 
\end{align}
admits multiple solutions. For examples, $(Y,Z_1,Z_2)=(2X_1,2\sigma_1X_1,0)$, $(X_1+2X_2,\sigma_1X_1,2\sigma_2X_2),$ $(X_1^{-{2}/{\sigma_1^2}},\frac{2}{\sigma_1}X_1^{-{2}/{\sigma_1^2}},0)$ are solutions.
This highlights that choosing 
$\ell$ sufficiently large is essential to ensure the uniqueness of funding portfolios.

The funding rate stated in \eqref{eqn:Phi} can be expressed in the model-free form
\begin{equation} 
	\begin{aligned}
		\Phi(s,X_s,Y(s))
		&=H(\varphi(s,X_s),Y(s))
		-\partial_s \varphi(s,X_s)-\frac{1}{2}\sum_{1\le i,j\le m} \partial_{i}(\partial_{j}{\varphi})(s,X_s)\frac{d}{ds}\langle X_i,X_j \rangle_s\\
		&\quad-\partial_{x}\varphi(s,X_s)X_s\frac{d}{ds}\ln G_s+ Y(s)\frac{d}{ds}\ln G_s\,.
	\end{aligned}
\end{equation}
This is directly obtained by
$r(s,X_s)=\frac{d}{ds}\ln G_s$ and $(\sigma\sigma^{\top})_{i,j}(s,X_s)=\frac{d}{ds}\langle X_i,X_j \rangle_s$.   
Thus, the precise knowledge of the non-anticipative functionals $r$, $\mu$, and $\sigma$ is not needed to determine the funding rate $\Phi.$
This observation is from \cite[Remark 4]{angeris2023primer}.

Assumption \ref{Hypo:H} encompasses a broad range of funding rate functions. Typical examples include linear functions, such as $\ell(y_1 - y_2)$, as well as piecewise linear functions, such as
$
\ell_1(y_1 - y_2)\,\mathbb{I}_{\{|y_1 - y_2| \leq 1\}} + \ell_2(y_1 - y_2)\,\mathbb{I}_{\{|y_1 - y_2| > 1\}}$
and $
\ell_1(y_1 - y_2)\,\mathbb{I}_{\{y_1 > y_2\}} + \ell_2(y_1 - y_2)\,\mathbb{I}_{\{y_1 \leq y_2\}},$
where \(\ell, \ell_1, \ell_2 > 0\). Furthermore, instead of constant coefficients, the assumption allows these coefficients to be stochastic processes \((\ell(s))_{s \ge 0}\), \((\ell_1(s))_{s \ge 0}\), and \((\ell_2(s))_{s \ge 0}\), which are bounded below by some positive constants. The most common form among these is the linear function
$
H(y_1, y_2) = \ell(y_1 - y_2),$
in which case the funding rate \(\Phi\) defined in \eqref{eqn:Phi} is referred to as a constant proportion funding rate. 
While most existing literature concentrates exclusively on constant proportion funding rates, our findings are applicable to a broad class of funding schemes. This is the first study to derive unique pricing and replicating portfolios for perpetual futures across a diverse range of funding rate structures.

\subsection{Path-dependent funding rates\label{section: delay}}

We examine the constant proportion funding rate and its path-dependent variant for practical applications. The short rate is assumed to be a constant \( r \) throughout this section. We begin by defining path-dependent funding rate functionals, which generalize the concept introduced in Definition \ref{defi:fundingrate}. A key distinction is that they are dependent on the historical trajectory of \( Y \).

\begin{definition}\label{defi:p_fundingrate} If a funding rate is expressed as \( F(s) = \Phi(s, X_s, Y_s) \) for a non-anticipative functional $\Phi:[0,\infty)\times \Lambda \times C([0,\infty);\mathbb{R}) \to\mathbb{R}$, then we refer to \(\Phi\) as the path-dependent funding rate functional and denote the corresponding funding rate by \( F^\Phi \). 
\end{definition}

For any $p\ge1,$ $\ell>0,$  $\delta>0$ and $\varphi\in C_p^{1,2}([0,\infty)\times {\Lambda})$, we define two non-anticipative functionals
$\Phi: [0,\infty)\times \Lambda\times \mathbb{R} \to\mathbb{R}$ and $\Phi^\delta: [0,\infty)\times \Lambda\times C([0,\infty);\mathbb{R}) \to\mathbb{R}$ as 
\begin{equation} 
	\begin{aligned}
		\label{eqn:const_prop}
		\Phi(s,\gamma,y)=\ell(\varphi(s,\gamma)-y)
		-\partial_s \varphi(s,\gamma)-\frac{1}{2}\textnormal{tr}(\sigma\sigma^{\top}\partial_{xx}\varphi)(s,\gamma)- r\partial_{x}\varphi(s,\gamma)\gamma(s)+ry
	\end{aligned} 
\end{equation}
and
\begin{align}\label{eqn:variant}
	\Phi^\delta(s,\gamma,\eta)&=\frac{1}{\delta}\int_{s-\delta}^s \Phi(u,\gamma,\eta(u))\,du\,.
\end{align}
The non-anticipative functional \(\Phi\) represents a constant proportion funding rate based on current spot prices. However, on most exchanges, the funding fee is calculated as an average of values over the past 8 hours rather than relying on the current spot price. This can be mathematically modeled using the path-dependent functional \(\Phi^\delta\), where \(\delta = \frac{1}{1095}\), corresponding to an 8 hour averaging window.
Because \(\Phi^\delta\) is obtained by averaging over recent short periods, we consider \(\Phi^\delta\) to be a practical approximation of \(\Phi\).

This path-dependent funding rate induces an infinite-horizon delayed BSDE.
By setting $F(s)=\Phi(s, X_s, Y_s)$ in \eqref{eqn:Q_F},
we obtain  
\begin{equation} \label{eqn: infinie delay bsde}
	\begin{aligned}
		Y^{\delta}(s)
		=Y^{\delta}(T)&-\int_s^T(rY^{\delta}(u)-\Phi^\delta(u,X_u,Y_u^{\delta}))\,du-\int_s^TZ^{\delta}(u)\,dB(u)
	\end{aligned}
\end{equation} 
for $0\le s\le T<\infty$,
which is 
the risk-neutral pricing BSDE for the path-dependent funding rate $\Phi^\delta$. 
A direct calculation yields 
\begin{equation} 
	\begin{aligned}
		Y^{\delta}(s) 
		&=Y^{\delta}(T)-\int_s^T(rY^{\delta}(u)-\frac{1}{\delta}\int_{u-\delta}^u\Phi(v,X_v,Y^{\delta}(v))\,dv\,du-\int_s^TZ^{\delta}(u)\,dB(u)\\
		&=Y^{\delta}(T)+\int_s^T g(u,X_u)-r Y^{\delta}(u)-\frac{\ell-r}{\delta}\int_{u-\delta}^u Y^{\delta}(v)\,dv\,du-\!\int_s^TZ^{\delta}(u)\,dB(u) 
	\end{aligned}
\end{equation}
where
$g:[0,\infty)\times \Lambda\to \mathbb{R}$ is a non-anticipative functional defined 
as
\begin{align}
	g(s,\gamma)&:=\frac{1}{\delta}\int_{s-\delta}^s\ell \varphi(v,\gamma)
	-\partial_s \varphi(v,\gamma)-\frac{1}{2}\textnormal{tr}(\sigma\sigma^{\top}\partial_{xx}\varphi)(v,\gamma)- r\partial_{x}\varphi(v,\gamma)\gamma(v)\,dv
\end{align}
for $(s,\gamma) \in [0,\infty)\times \Lambda.$
This BSDE is classified as a delayed BSDE due to the presence of an additional term \(\int_{u-\delta}^u Y(v)\,dv\). This extra component introduces additional complexity compared to the classical BSDE framework.
The infinite-horizon delayed BSDE has not been studied before.
To handle this delay term, we extend the domain of all processes \(Y \in \mathbb{S}^2(0,\infty;\mathbb{R})\) to include the interval \([- \delta, 0]\) by defining
\(
Y(s) = Y(0)\) for
\(s \in [-\delta, 0].\)
Despite this extension, we continue to denote the process space as \(\mathbb{S}^2(0,\infty;\mathbb{R})\) rather than \(\mathbb{S}^2(-\delta,\infty;\mathbb{R})\).

We recall from Theorem \ref{thn:main} that
for any $p\ge1$ and $\varphi\in C_p^{1,2}([0,\infty)\times {\Lambda})$, 
the non-anticipative functional 
$\Phi(s,\gamma,0)$ has polynomial growth in $(s,\gamma)$.
There are constants
\(\rho \ge p\) and $C_\Phi>0$ such that
\(
|\Phi(s,\gamma,0)| \leq C_{\Phi} ( 1 + |\!|\gamma|\!|_s^\rho ) \)
for all \((s, \gamma) \in [0, \infty) \times {\Lambda}\).
It can be easily shown that $|g(s,\gamma)|\le C_{\Phi}(1+|\!|\gamma|\!|^{\rho}_s)$
for all \((s, \gamma) \in [0, \infty) \times {\Lambda}\).

\begin{assumption}\label{hypo: BSDE delay}
	Consider the funding rate functionals $\Phi$ and $\Phi^\delta$ defined in \eqref{eqn:const_prop} and \eqref{eqn:variant}, respectively.
	Let
	\(\rho \ge p\) and $C_\Phi>0$ be constants such that
	\(
	|\Phi(s,\gamma,0)| \leq C_{\Phi} ( 1 + |\!|\gamma|\!|_s^\rho ) \)
	for all \((s, \gamma) \in [0, \infty) \times {\Lambda}\).	
	Suppose $\ell\in \mathbb{R}$ and $0<\delta<1$ satisfy the following conditions.
	\begin{enumerate}[(i)]
		\item\label{hypo: delay infinite decreasing} $\ell>1+\inf_{K>0}(K+\frac{1}{2}(\frac{r}{\sqrt{2K}}+M_{\rho\vee 2}C_3)^2)\rho$.  
		\item The constant $\delta$ satisfies
		\begin{align}
			&\frac{1}{3}\frac{e^{|6(\ell-r)^2-2\ell+2|\delta}-1}{|6(\ell-r)^2-2\ell+2|\delta}<1\,,\\
			&e^\rho((\ell-r)^2+\frac{1}{2}|\ell-r|\ell+2|\ell-r|)\delta<1\,.
		\end{align}
	\end{enumerate}
\end{assumption}

We primarily focus on two key topics related to path-dependent funding rate functionals. 
First, we examine the existence and uniqueness of \(\Phi^\delta\)-funding portfolios. Theorem \ref{thm: BSDE no delay} is not applicable as \(\Phi^\delta\) depends on the historical trajectory of \(Y\).
Second, we compare the values of the \(\Phi^\delta\)-funding portfolio and the \(\Phi\)-funding portfolio. Specifically, we demonstrate that when \(\delta\) is small, the funding rate functionals \(\Phi\) and \(\Phi^\delta\) produce similar perpetual future prices \(Y\) and \(Y^\delta\) as well as similar replicating portfolios \(Z\) and \(Z^\delta\). 
The main results on these topics are presented in Theorem \ref{thm:approx}, with detailed proofs provided in Appendix \ref{app:delay}. Although these topics are conceptually straightforward, establishing rigorous proofs is  
complex and challenging. The upper bounds \(L_1, L_2, L_3, L_4, L_5\) specified in the theorem can be explicitly calculated. More refined upper bounds are provided in Appendix \ref{L speci}. The first inequality in \eqref{eqn:esti} holds without taking expectations.
Note that the following theorem holds in particular for $\rho=p+2$, similar to Corollary \ref{cor:main}.

\begin{theorem}\label{thm:approx}
	Let Assumptions \ref{hypo: SDE}-\ref{sharpe} hold. Suppose that the 
	funding rate functionals $\Phi$ and $\Phi^\delta$ defined in \eqref{eqn:const_prop} and \eqref{eqn:variant}, respectively, satisfy Assumption \ref{hypo: BSDE delay}. 
	Then, we have the followings.
	\begin{enumerate}[(i)]
		\item \label{thm:delay_1} The BSDE \eqref{eqn:BSDE} has a unique solution $(Y,Z)$ in $\mathbb{S}^2(0,\infty;\mathbb{R})\times \mathbb{H}^2(0,\infty;\mathbb{R}^m)$ such that  $|\!|Y|\!|\le L(1+|\!|X|\!|^{\rho})$ for some constant $L>0$. In addition, $Y(s)=\varphi(s,X_s)$ and $Z(s)=(\partial_x \varphi\,\sigma)(s,X_s)$ for $s\ge0$. 
		\item\label{thm:delay_2} The BSDE \eqref{eqn: infinie delay bsde} has a unique solution $(Y^\delta,Z^\delta)$ in $\mathbb{S}^2(0,\infty;\mathbb{R})\times \mathbb{H}^2(0,\infty;\mathbb{R}^m)$ such that $|\!|Y^\delta|\!|\le L(1+|\!|X|\!|^{\rho})$ for some constant $L>0$.
	\end{enumerate} Moreover, we have
	\begin{align}\label{eqn:estimate}
		&\lim_{\delta\to 0}\mathbb{E}\big[|\!|Y^\delta-Y|\!|_T\big]=0 \,, \\
		&\lim_{\delta\to 0}\mathbb{E}\Big[\int_0^T|Z^\delta(u)-Z(u)|^2\,du\Big]=0
	\end{align} for all $T\ge 0$. In particular, if 
	there is a constant $C_5>0$ such that 
	$|\Phi(s,\gamma,0)-\Phi(s',\gamma',0)| \le C_5(1+|\!|\gamma_s|\!|_{s\vee s'}^{\rho-1}+|\!|\gamma'_{s'}|\!|_{s\vee s'}^{\rho-1})(\sqrt{|s-s'|}+|\!|\gamma_s-\gamma'_{s'}|\!|_{s\vee s'})$ for all $(s,\gamma),(s',\gamma')\in [0,\infty)\times \Lambda$,
	then for some positive constants $L_1,L_2,L_3,L_4,L_5,$ we have    
	\begin{equation} 
		\begin{aligned}
			\label{eqn:esti} 
			&|\!|Y^{\delta}-Y|\!|_T\le (L_1+L_2|\!|X|\!|_T^\rho)\sqrt{\delta}\\
			&\mathbb{E}\Big[\int_0^T|Z^\delta(u)-Z(u)|^2\,du\Big]\le (1+T)(L_3\mathbb{E}[|\!|X|\!|_T^{2\rho}]+L_4\mathbb{E}[|\!|X|\!|_T^{\rho}]+L_5)\sqrt{\delta}
		\end{aligned} 
	\end{equation}   
	for all $T\ge0$.
\end{theorem}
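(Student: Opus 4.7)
For part (i), I would simply invoke Theorem \ref{thn:main} with the choice $H(y_1,y_2)=\ell(y_1-y_2)$. Assumption \ref{Hypo:H} is immediate for this linear $H$ with dissipativity constant $\ell$, so the content of part (i) is already contained in that theorem once the bound in Assumption \ref{hypo: BSDE delay}\eqref{hypo: delay infinite decreasing} is seen to imply the threshold \eqref{eqn:ell}; in fact the ``$+1$'' in $\ell > 1 + \inf_{K>0}(\cdot)\rho$ leaves room that will be used later in part (ii) to absorb the delay contribution.

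For part (ii), the plan is a fixed-point argument adapted to the infinite-horizon delayed BSDE. Given any $\widetilde Y\in\mathbb{S}^2(0,\infty;\mathbb{R})$ with $\|\widetilde Y\|\le L(1+\|X\|^\rho)$ (extended by $\widetilde Y(s)=\widetilde Y(0)$ on $[-\delta,0]$), plug $\widetilde Y_s$ into the delayed term and solve the resulting classical infinite-horizon BSDE
\begin{align}
Y(s)=Y(T)+\int_s^T\!\Bigl(g(u,X_u)-rY(u)-\tfrac{\ell-r}{\delta}\!\int_{u-\delta}^u\!\widetilde Y(v)\,dv\Bigr)du-\int_s^T\!Z(u)\,dB(u)
\end{align}
via Theorem \ref{thm: BSDE no delay}, where the new ``inhomogeneous'' part $g(u,X_u)-\tfrac{\ell-r}{\delta}\int_{u-\delta}^u\widetilde Y(v)\,dv$ remains polynomially bounded in $X$. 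This defines a map $\mathcal{T}:\widetilde Y\mapsto Y$. I would then apply It\^o's formula to $e^{\beta s}|\mathcal{T}\widetilde Y-\mathcal{T}\widetilde Y'|^2$ for a suitable $\beta$ and use the smallness quantifiers in Assumption \ref{hypo: BSDE delay}(ii) --- precisely the constants $\tfrac{1}{3}\tfrac{e^{|6(\ell-r)^2-2\ell+2|\delta}-1}{|6(\ell-r)^2-2\ell+2|\delta}<1$ and $e^\rho((\ell-r)^2+\tfrac{1}{2}|\ell-r|\ell+2|\ell-r|)\delta<1$ --- to show that $\mathcal T$ is a contraction on the weighted space and preserves the polynomial-growth ball. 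Banach's fixed-point theorem then yields $(Y^\delta,Z^\delta)$ with the stated bound.

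For the convergence statements, set $\Delta Y:=Y^\delta-Y$ and $\Delta Z:=Z^\delta-Z$. Applying It\^o's formula to $|\Delta Y|^2$ and splitting the driver difference as
\begin{align}
\Phi^\delta(u,X_u,Y^\delta_u)-\Phi(u,X_u,Y(u))=\tfrac1\delta\!\int_{u-\delta}^u\!\bigl[\Phi(v,X_v,Y^\delta(v))-\Phi(v,X_v,Y(v))\bigr]dv+\tfrac1\delta\!\int_{u-\delta}^u\!\bigl[\Phi(v,X_v,Y(v))-\Phi(u,X_u,Y(u))\bigr]dv,
\end{align}
the first piece contributes a Lipschitz-in-$Y$ term that is absorbed by the $-\ell$ dissipativity, while the second is a ``path-regularity'' error. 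For the qualitative limit, I would note that the second piece tends to $0$ pointwise as $\delta\to0$ by continuity of $u\mapsto\Phi(u,X_u,Y(u))$ along almost every path, combine with dominated convergence and the polynomial growth bound on $(Y,Y^\delta)$, and close the argument by a Gr\"onwall inequality on $\mathbb{E}[\|\Delta Y\|_T^2]$, extracting the $\Delta Z$ statement from the Burkholder--Davis--Gundy inequality applied to the BSDE for $\Delta Y$.

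For the explicit $\sqrt\delta$ rate under the additional H\"older-type hypothesis on $\Phi(\cdot,\cdot,0)$, the second piece above is now controlled by $C_5(1+\|X_u\|^{\rho-1})(\sqrt\delta+\|X_u-X_{(u-\delta)\wedge u}\|)$ plus a similar bound coming from $\varphi$ itself (which is already $C^{1,2}_p$, so smooth enough for path-wise Taylor expansion). Standard SDE estimates for $X$ yield $\mathbb{E}[\sup_{|u-v|\le\delta}\|X_u-X_v\|^2]\lesssim\delta$, and after H\"older's inequality in $u$ this produces a term of order $\sqrt\delta$ times a polynomial in $\|X\|_T$, which then feeds into the Gr\"onwall estimate to give the pathwise bound on $\|\Delta Y\|_T$ and the integrated bound on $\Delta Z$. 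The main obstacle throughout is the delay term: on the infinite horizon it destroys the usual Markov/memoryless structure, so the smallness condition in Assumption \ref{hypo: BSDE delay}(ii) has to be leveraged very carefully in the weighted-norm contraction to ensure both uniform-in-time existence and a rate independent of $T$ in the pathwise estimate.
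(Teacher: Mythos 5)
Your part (i) matches the paper (it is exactly a citation of Theorem \ref{thn:main}), and your overall decomposition of the driver difference for the convergence step is in the right spirit. However, there are genuine gaps in parts (ii) and in the convergence argument that your plan does not resolve. First, the direct infinite-horizon contraction: to absorb the delay term $\frac{\ell-r}{\delta}\int_{u-\delta}^u(\cdot)\,dv$ one needs an exponential weight $e^{Mu}$ with $M=2-2\ell+6(\ell-r)^2$, which is positive under Assumption \ref{hypo: BSDE delay}; the space $\{Y:\mathbb{E}\int_0^\infty e^{Mu}|Y(u)|^2\,du<\infty\}$ then \emph{excludes} the polynomially growing solutions you are trying to construct (since $\mathbb{E}[\|X\|_u^{2\rho}]$ itself grows exponentially), so the map $\mathcal{T}$ cannot be a contraction on a ball containing the target. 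This tension between the growing weight forced by the delay and the discounting needed for the infinite horizon is precisely the crux, and the paper resolves it by contracting only on each finite horizon $[0,n]$ with zero terminal data (Theorem \ref{prop: delay}), extracting \emph{pointwise} bounds $|Y^n(s)|\le L(1+\|X\|_s^\rho)$ uniform in $n$ via a double induction on the Picard iterates (the sequences $(a_k),(b_k),(\tilde a_k),(\tilde b_k)$, which control both $|Y^k(s)|$ and the modulus $|\mathbb{E}_s[Y^k(s_1)-Y^k(s_2)]|$ — the second smallness condition $e^\rho((\ell-r)^2+\frac12|\ell-r|\ell+2|\ell-r|)\delta<1$ is used exactly here), and only then letting $n\to\infty$. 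Relatedly, a contraction in a weighted $L^2$ space would give uniqueness only in that space, not in the class $\|Y^\delta\|\le L(1+\|X\|^\rho)$ claimed by the theorem; the paper's uniqueness proof is a separate Picard argument on the difference of two polynomially bounded solutions, yielding $|\hat Y(s)|\le Le^{-(\ell-L_8-1)(T-s)}(1+\|X\|_s^\rho)$ and letting $T\to\infty$.

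Second, your plan to "close the argument by a Gr\"onwall inequality on $\mathbb{E}[\|\Delta Y\|_T^2]$" for $\Delta Y=Y^\delta-Y$ does not address the terminal value: the BSDE for $\Delta Y$ on $[0,T]$ carries the unknown, non-small datum $\Delta Y(T)$, so a backward Gr\"onwall cannot start. The paper's route is a triangle inequality through the finite-horizon approximants $Y^n$ and $Y^{n,\delta}$ (both with zero terminal condition at $n$), whose distances to $Y$ and $Y^\delta$ decay like $e^{-(\ell-L_8-1)(n-T)}$, reducing everything to $\mathbb{E}[\|Y^{n,\delta}-Y^n\|_T^2]\to0$; that last step is again a Picard induction producing the conditional-expectation error term $\hat G^\delta(s)$, which is sent to $0$ by Doob's inequality, dominated convergence and the Lebesgue differentiation theorem — not by pathwise continuity alone. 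Your treatment of the $\sqrt\delta$ rate is closer to the paper's in spirit (it does come down to $\mathbb{E}_s[\|X-X_{s_1}\|_{s_2}^\rho]^{1/\rho}\lesssim\sqrt\delta$ plus the H\"older hypothesis on $\Phi(\cdot,\cdot,0)$), but it must be threaded through the same finite-horizon Picard machinery to keep the constants independent of $T$ and to avoid the terminal-value issue.
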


\section{Applications}
\label{sec:appli}

\subsection{Perpetual power index futures}\label{sec:PIS}

We study how to design funding rate functionals for perpetual power index futures.
Consider a market with $m+1$ assets consisting of
a
money market account
with constant short rate $r\ge0$
and
$m$ tradable assets given as a solution to the path-dependent SDE \eqref{eqn: path dep SDE}
for $x\in \mathbb{R}^m$ and non-anticipative functionals $\mu:[0,\infty)\times \Lambda\to \mathbb{R}^m$, $\sigma:[0,\infty)\times \Lambda\to \mathbb{R}^{m\times m}$. We take Assumptions \ref{hypo: SDE} and \ref{sharpe} to be valid.

We examine an issuer seeking to keep the perpetual future price aligned with the market's power index.
More precisely, the perpetual future price at time $s$ is anchored to  $\varphi(s,X_s)$ for all $s\ge0$ where
$\varphi(s,\gamma)=\varphi(s,\gamma_1,\cdots,\gamma_m):=c_0+\sum_{i=1}^mc_i\gamma_i^{p_i}(s)$ for given $c_0,c_1,\cdots,c_m\in\mathbb{R}$ and $p_1,\cdots,p_m\in\mathbb{N}.$
By 
Theorem \ref{thn:main},  
a funding rate functional $\Phi: [0,\infty)\times \Lambda\times \mathbb{R} \to\mathbb{R}$ defined as
\begin{equation} 
\begin{aligned}
	\Phi(s,\gamma,y)
	:=H(\varphi(s,\gamma),y)-\frac{1}{2}\textnormal{tr}(\sigma\sigma^{\top}\partial_{xx}\varphi)(s,\gamma)
	 -r\gamma(s)(c_1p_1\gamma_1^{p_1-1}(s),\cdots,c_mp_m\gamma_m^{p_m-1}(s) )+ry
\end{aligned}
\end{equation}
where  the matrix $\partial_{xx}\varphi(s,\gamma)$ is   
\begin{align}
	\begin{pmatrix}
		c_1p_1(p_1-1)\gamma_1^{p_1-2}(s) & \cdots &0 \\ \vdots & \ddots & \vdots \\ 0&\cdots  & c_mp_m(p_m-1)\gamma_1^{p_m-2}(s)
	\end{pmatrix}
\end{align}
gives the desired perpetual future price if $H$  
satisfies Assumption \ref{Hypo:H} with the constant  $\ell>\inf_{K>0}(K+\frac{1}{2}(\frac{r}{\sqrt{2K}}+M_{2\vee \rho}C_3)^2)\rho$ for $\rho:=\max_{1\le i\le m}p_i$. The risk-neutral pricing BSDE \eqref{eqn:BSDE}
has a unique solution $(Y,Z)$
in $\mathbb{S}^2(0,\infty;\mathbb{R})\times \mathbb{H}^2(0,\infty;\mathbb{R}^m)$ such that $|\!|Y|\!|\le L(1+|\!|X|\!|^\rho)$ for some constant $L>0$ and this unique solution is
$$(Y,Z)=(\varphi(s,X_s),(c_1 p_1X_1(s),\cdots, c_m p_mX_m(s))\sigma(s,X_s))_{s\ge0}\,.$$
The perpetual future price $Y$ with this funding rate functional $\Phi$ coincides with the 
market's power index
and the corresponding replicating portfolio is
$\phi(s):=Z(s)\sigma^{-1}(s,X_s)$ for $s\ge0.$

We examine the constant proportion funding rate functional \(\Phi\) and its path-dependent variant \(\Phi^\delta\) defined in \eqref{eqn:const_prop} and \eqref{eqn:variant}, respectively, with the constants \(\ell\) and \(\delta\) satisfying Assumption \ref{hypo: BSDE delay}. 
According to Theorem \ref{thm:approx}, the path-dependent funding rate \(\Phi^\delta\) generates uniquely the perpetual future price process \(Y^\delta\) and the replicating portfolio \(\phi^\delta(s) :=  Z^\delta(s)\sigma^{-1}(s, X_s)\). Furthermore, these  are close to the original future price \(Y\) and replicating portfolio \(\phi\), as described in \eqref{eqn:esti}.

As a special case, consider the traditional Black-Scholes model. Under the risk-neutral measure, the stock price follows
$dX(s)=rX(s)\,ds+\sigma X(s)\,dB(s)$
where $r\ge0$ denotes the short rate.
Suppose $m=1,$ $r=0.02$, $\sigma=0.3$, $c_0=0$, $c_1=1$, $p_1=1$,
then \eqref{eqn:ell} is satisfied for $\ell>0.26227$. The original funding rate $\Phi$ with this constant $\ell$ yields the desired perpetual future price and the replicating portfolio. For the path-dependent funding rate $\Phi^\delta$, we set $\delta=\frac{1}{1,095}$, which corresponds to an 8 hour period, a commonly used time interval in cryptocurrency markets. 
We choose $\ell$ such that $1.26227<\ell<15.75125$. Then these constants $\ell$ and $\delta$ satisfy Assumption \ref{hypo: BSDE delay}.
Let us compare the perpetual futures derived from the funding rate functionals \(\Phi\) and \(\Phi^\delta\).
According to \eqref{eqn:esti} and \eqref{eqn:estiZ}, we have that for all $T\ge0,$
\begin{align} 	
	|\!|Y^{\delta}-Y|\!|_T&\le (3.68432|\!|X|\!|_T+0.84216)\sqrt{\delta}\le 0.11134|\!|X|\!|_T+0.02545 	
\end{align}
and
\begin{align}
&\quad\mathbb{E}\Big[\int_0^T|Z^\delta(u)-Z(u)|^2\,du\Big]\\
	&\le \big( (0.41021+82.041T)\mathbb{E}[|\!|X|\!|_T^2]+(0.06227+31.97983T)\mathbb{E}[|\!|X|\!|_T]+0.00236+2.78067T\big) \sqrt{\delta} \\
	&\le  (0.01239+2.47927T)\mathbb{E}[|\!|X|\!|_T^2]+ (0.00188+0.96642T)\mathbb{E}[|\!|X|\!|_T]+0.00007+0.084031T\,.
\end{align}

As an additional application, we consider an issuer seeking to keep the perpetual future price aligned with \( 
X_1^{p_1} X_2^{p_2} \cdots X_m^{p_m}
\)
where \( p_1, p_2, \ldots, p_m \in \mathbb{N} \). Most of the results are analogous to those obtained for the previous power index futures.
Define
\(
\varphi(s,\gamma) := \prod_{i=1}^m \gamma_i^{p_i}(s)
\) and 
\(
\rho := \sum_{i=1}^m p_i.
\)
While our analysis holds for any \( m \in \mathbb{N} \), for simplicity, we focus on the case \( m=2 \) in the following discussion.
We introduce a funding rate functional \(\Phi: [0,\infty) \times \Lambda \times \mathbb{R} \to \mathbb{R}\) defined as
$\Phi(s,\gamma,y)=H(\varphi(s,\gamma),y)
-\frac{1}{2}\textnormal{tr}(\sigma\sigma^{\top} \partial_{xx}\varphi)(s,\gamma)-r\gamma(s)(p_1\gamma_1^{p_1-1}(s)\gamma_2^{p_2}(s),p_2\gamma_1^{p_1}(s)\gamma_2^{p_2-1}(s) )+ry$
where the matrix  $\partial_{xx}\varphi(s,\gamma)$ is  \begin{align}
		\begin{pmatrix}
			p_1(p_1-1)\gamma_1^{p_1-2}(s)\gamma_2^{p_2}(s) & p_1p_2\gamma_1^{p_1-1}(s)\gamma_2^{p_2-1}(s) \\ p_1p_2\gamma_1^{p_1-1}(s)\gamma_2^{p_2-1}(s) & p_2(p_2-1)\gamma_1^{p_1}(s)\gamma_2^{p_2-2}(s) 
		\end{pmatrix}
\end{align}
and  \( H \) satisfies Assumption \ref{Hypo:H} with the constant  
$\ell>\inf_{K>0}(K+\frac{1}{2}(\frac{r}{\sqrt{2K}}+M_{2\vee \rho }C_3)^2)\rho$ for $\rho:=p_1+p_2$.
The perpetual future price \( Y \), associated with this funding rate \(\Phi\), coincides with $X_1^{p_1} X_2^{p_2}$. The corresponding replicating portfolio is
$\phi(s)=Z(s)\sigma^{-1}(s,X_s)=(p_1X_1^{p_1-1}(s)X_2^{p_2}(s),p_2X_1^{p_1}(s)X_2^{p_2-1}(s))\sigma^{-1}(s,X_s)$
for $s\ge0$.
As in the case of power index futures, the funding rate functionals $\Phi$ and $\Phi^\delta$, with the parameters $\ell$ and $\delta$ satisfying Assumption \ref{hypo: BSDE delay},  yield similar perpetual futures prices $Y$ and $Y^\delta$, as well as similar replicating portfolios $Z$ and $Z^\delta$.

\subsection{Perpetual foreign exchange futures}
One of the commonly traded futures in 
cryptocurrency markets is a perpetual foreign exchange future.
We consider 
an issuer seeking to keep
the perpetual future price aligned with the foreign exchange rate.
Suppose $m=1$ and let $r_d$ and $r_f$ be the domestic and foreign short rates, respectively, and $U=(U(s))_{s\ge0}$ be the exchange rate. We assume that $U$ is a stochastic process satisfying
$	U(s)=U(0)+\int_0^s b(u,U_u)\,du+\int_0^s v(u,U_u)\,dW(u)
$
for $U(0)\in \mathbb{R}$ and non-anticipative functionals $b:[0,\infty)\times \Lambda\to \mathbb{R}$, $v:[0,\infty)\times \Lambda\to \mathbb{R}$. The process
$X=(U(s)e^{r_f s})_{s\ge0}$
is a wealth process and satisfies
$
X(s)=X(0)+\int_0^s \mu(u,X_u) \,du+\int_0^s  \sigma(u,X_u)\,dW(u) $
where
$\mu(s,\gamma):=r_f\gamma(s)+e^{r_fs}b(s,e^{-r_f(\cdot\wedge s)}\gamma_s)$ and $\sigma(s,\gamma):=e^{r_fs}v(s,e^{-r_f(\cdot\wedge s)}\gamma_s).$
Assume that these non-anticipative functionals $\mu$ and $\sigma$ satisfy Assumptions \ref{hypo: SDE} and \ref{sharpe}. 
Define $\varphi(s,\gamma):=e^{-r_fs}\gamma(s)$, then $U(s)=\varphi(s,X_s)$ for $s\ge0.$

We examine an issuer seeking to keep the perpetual future price aligned with the foreign exchange rate.
By 
Theorem \ref{thn:main},  
a funding rate functional $\Phi: [0,\infty)\times \Lambda\times \mathbb{R} \to\mathbb{R}$ defined as
$\Phi(s,\gamma,y):=H(\varphi(s,\gamma),y)
-(r_d-r_f)\varphi(s,\gamma)+r_d y$
yields the desired perpetual future if $H$ satisfies Assumption \ref{Hypo:H} for the constant  $\ell>\inf_{K>0}(K+\frac{1}{2}(\frac{r_d}{\sqrt{2K}}+M_{2}C_3)^2)$.
The perpetual future price derived from this funding rate function $\Phi$ coincides with the 
foreign exchange rate $U$
and the replicating portfolio is
$\phi(s):=e^{-r_fs}$ for $s\ge0.$

We compare the constant proportion funding rate functional \(\Phi\) with its path-dependent variant \(\Phi^\delta\)
for the parameters \(\ell\) and \(\delta\) satisfying Assumption \ref{hypo: BSDE delay}. The path-dependent funding rate \(\Phi^\delta\) uniquely generates the perpetual future price process \(Y^\delta\) and the associated replicating portfolio
\(
\phi^\delta.
\)
Furthermore, these approximations are close to the original future price \(Y=U\) and the replicating portfolio \(\phi(s)=e^{-r_fs},s\ge0\) as described in \eqref{eqn:esti}.

\subsection{Geometric mean constant funds}\label{sec:CFMM}

Following \cite{angeris2020improved}, \cite{evans2020liquidity}, and \cite{angeris2023primer},  
we investigate a geometric mean constant function market
maker (CFMM). 
Consider the 
multi-dimensional Black-Scholes model.
Assume that the short rate is a constant $r\ge0$ and $m$ tradable assets are given as
$  dX(s)=D(X(s))\mu\,ds+D(X(s))\sigma\,dW(s)  $
where $\mu=(\mu_1\cdots\mu_m)^\top\in\mathbb{R}^{m\times 1}$ and
\begin{align}
	\sigma
	=\begin{pmatrix}\sigma_1\\
		\vdots
		\\ \sigma_m
	\end{pmatrix}
	=\begin{pmatrix}\sigma_{11} &\cdots &\sigma_{1m} 
		\\ \vdots & \ddots &\vdots 
		\\ \sigma_{m1}&\cdots & \sigma_{mm}
	\end{pmatrix}
	\in\mathbb{R}^{m\times m}
\end{align}
and $\sigma$ is invertible. Here, for any \( x = (x_1, \cdots, x_m) \), we denote by \( D(x) \)  the diagonal matrix whose \( i \)-th diagonal entry is \( x_i \). It is evident that
Assumptions \ref{hypo: SDE}-\ref{sharpe} are met. 
Under the risk-neutral measure, the process $X$ satisfies
$dX(s)=D(X(s))r{\bf{1}}\,ds+D(X(s))\sigma\,dB(s)$
where ${\bf{1}}$ denotes the $m$ dimensional column vector with all entries equal to $1.$

If one deposits a unit amount into a geometric mean CFMM, then the value of this deposit at time $s\ge0$ is
\begin{align}
	Y(s):=\Big(\frac{X_1(s)}{X_1(0)}\Big)^{p_1}\Big(\frac{X_2(s)}{X_2(0)}\Big)^{p_2}\cdots\Big(\frac{X_m(s)}{X_m(0)}\Big)^{p_m}  
\end{align}
where $p_1,\cdots,p_m$ are positive constants with 
$\sum_{i=1}^mp_i=1.$ For simplicity, we assume $X_1(0)=\cdots=X_m(0)=1$.  Imagine that an issuer designs a funding rate to make the perpetual future price aligned with this value. Define 
$\varphi(s,\gamma):=\gamma_1^{p_1}(s)\gamma_2^{p_2}(s)\cdots \gamma_m^{p_m}(s)$
then the funding rate
$  \Phi(s,\gamma,y):=H(\varphi(s,\gamma),y)-(r-\kappa)\varphi(s,\gamma)+ry  
$
with
$\kappa:=\frac{1}{2}\sum_{i=1}^mp_i|\sigma_i|^2-\frac{1}{2}|\sum_{i=1}^mp_i\sigma_i|^2$ yields the desired perpetual future if $H$ satisfies Assumption \ref{Hypo:H} for the constant  $\ell>\inf_{K>0}(K+\frac{1}{2}(\frac{r}{\sqrt{2K}}+M_{2}C_3)^2)$.
The perpetual future price derived from this funding rate function $\Phi$ coincides with $Y$
and the replicating portfolio is
$\phi =(p_1X_1^{p_1-1}X_2^{p_2}\cdots X_m^{p_m},\cdots,p_mX_1^{p_1}X_2^{p_2}\cdots X_m^{p_m-1}).$  
As in the previous sections, the funding rate functionals $\Phi$ and $\Phi^\delta$, with the parameters $\ell$ and $\delta$ satisfying Assumption \ref{hypo: BSDE delay},  
yield similar perpetual futures prices $Y$ and $Y^\delta$, as well as similar replicating portfolios $Z$ and $Z^\delta$.

Most results of our paper cannot be applied directly to prove these
because
the function $\varphi$ is not differentiable at the origin nor the partial derivative  has polynomial growth at the origin, thus $\varphi$ is not in $C_{p}^{1,2}([0,\infty)\times \Lambda)$.
To avoid this, we detour the strategy.
The main idea is to construct a wealth process useful for this analysis. Observe that  
$
Y(s)=\prod_{i=1}^m X_i^{p_i}(s)
=\prod_{i=1}^me^{(rp_i-\frac{1}{2}p_i|\sigma_i|^2)s+p_i\sigma_iB_s}=e^{-\kappa s}e^{rs-\frac{1}{2}|\Sigma|^2s+\Sigma B_s} 
$
for $s\ge0$
where $\Sigma:=p\sigma=\sum_{i=1}^mp_i\sigma_i\in \mathbb{R}^{1\times m}$
and $p:=(p_1,\cdots,p_m)\in \mathbb{R}^{1\times m}.$
Define
$\hat{B}(s):=\frac{\Sigma}{|\Sigma|} B(s)$
and
$\hat{X}(s):=e^{rs-\frac{1}{2}|\Sigma|^2s+|\Sigma| \hat{B}(s)}$, then $Y(s)=e^{-\kappa s}\hat{X}(s)$, $\hat{B}$ is a one-dimensional Brownian motion and   
$d\hat{X}(s)=r\hat{X}(s)\,ds+|\Sigma| \hat{X}(s)\,d\hat{B}(s)$ for $s\ge0.$ 
It can be shown that the process $\hat{X}$ is the wealth process of the self-financing portfolio 
{ \begin{equation}
		\begin{aligned}
			\hat{\pi}(s)&:=
			\hat{X}(s)pD^{-1}(X(s))=\hat{X}(s)D^{-1}(X(s))p^\top\\ 
			&=e^{\kappa s}(p_1X_1^{p_1-1}(s)X_2^{p_2}(s)\cdots X_m^{p_m}(s),\cdots,p_mX_1^{p_1}(s)X_2^{p_2}(s)\cdots X_m^{p_m-1}(s))\\
			&=e^{\kappa s}\Big(p_1\frac{Y(s)}{X_1(s)},\cdots,p_m\frac{Y(s)}{X_m(s)} \Big)\,.
		\end{aligned}
\end{equation}}
As mentioned in Section \ref{sec:F}, the results of our paper are applicable to wealth process models.
To design a funding rate to make the perpetual future price aligned with $Y$, we define
$\hat{\varphi}(s,\hat{\gamma}):=e^{-\kappa s}\hat{\gamma}(s)$,
then it is evident that $\hat{\varphi}\in C_{1}^{1,2}([0,\infty)\times \Lambda)$.
The funding rate functional
$  \hat{\Phi}(s,\hat{\gamma},y):=H(\hat{\varphi}(s,\hat{\gamma}),y)-(r-\kappa)\hat{\varphi}(s,\hat{\gamma})+ry
$
gives the desired perpetual future price and its replicating portfolio is $\hat{\phi}(s)=e^{-\kappa s}$ for $s\ge0.$
Because 
$\varphi(s,X_s)=\hat{\varphi}(s,\hat{X}_s)$ and $\Phi(s,X_s,Y(s))=\hat{\Phi}(s,\hat{X}_s,Y(s))$ for $s\ge0,$
two infinite-horizon BSDEs derived from $\Phi$ and $\hat{\Phi}$ coincides, thus $\Phi$ and $\hat{\Phi}$ induce the same perpetual future prices. In addition,  
because $\hat{X}$ is the wealth process from $\hat{\pi}$, holding $\hat{\phi}$ number of $\hat{X}$ indicates holding the portfolio $\hat{\phi}\hat{\pi}$, which is equal to $\phi.$


\section{Conclusion}
\label{sec:con}

This study focuses on analyzing the funding rate mechanism for perpetual future contracts traded in cryptocurrency markets. 
Our findings indicate that, through careful design of funding rates, the perpetual futures can be kept consistent with their target values. 
Furthermore, we construct replicating portfolios for perpetual futures, providing issuers with a robust strategy to hedge their exposures. In addition, we introduce path-dependent funding rates suitable for practical implementation and examine the discrepancies between the original and path-dependent funding rates.

There are several potential directions for extending this work. The parameter $\ell$ introduced in Assumption \ref{Hypo:H} \eqref{label: 2-4} must be sufficiently large to ensure the uniqueness of perpetual future prices. In particular, the condition given in \eqref{eqn:ell} is necessary. However, this condition does not represent the optimal lower bound, and the authors believe there is significant room for improvement. Since $\ell$ plays a crucial role in the funding mechanism, identifying tighter lower bounds would be a valuable contribution.
Another possible extension is to explore a broader class of path-dependent funding rates. This paper focuses on a path-dependent version of the constant proportion funding rate, but in practice, a variety of funding rate structures may be employed. Analyzing more general forms of path-dependent funding rates would therefore be an important and relevant direction for future research.


$ $

\appendix

\section{Proof of Theorem \ref{thm: BSDE no delay}}
\label{app:3}

The following proposition is a variant of \cite[Lemma 2.1]{confortola2019backward} tailored to our context.
This proposition will be 
used to prove Theorem \ref{thm: BSDE no delay}.

\begin{prop}  \label{prop:X_esti}
	Let Assumptions \ref{hypo: SDE}-\ref{hypo: SDE_r} hold and $X$ be a solution to \eqref{eqn: path dep SDE}. Then  $X\in\mathbb{S}^p(0,\infty;\mathbb{R}^m)$ for any $p\ge 1,$ and there exist positive constants $L_6,$ which depends only on $p$, and $L_7, L_8,L_9$, which depend only on $C_1,C_r,C_3,p,$ such that
	\begin{align}
		&\mathbb{E}_s[|\!|X|\!|^p_T]\le (L_6|\!|X|\!|^p_s+L_7)e^{L_8(T-s)}\,,\label{eqn: path dep SDE mean_app}\\
		&\mathbb{E}_s[|\!|X-X_s|\!|^p_{s+\delta}]\le L_9(1+\mathbb{E}_s[|\!|X|\!|^p_{s+\delta}]\delta^{\frac{p}{2}}
	\end{align}
	for all $T \in(0,\infty)$, $\delta\in(0,1)$ and $s\in [0,T].$
	
	
\end{prop}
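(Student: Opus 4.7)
The plan is to follow the classical Gronwall--BDG strategy for moment estimates of SDEs with linearly growing coefficients, carried out under the risk-neutral dynamics written in Section on risk-neutral pricing BSDEs, namely $X(v) = X(s) + \int_s^v r(u, X_u) X(u)\, du + \int_s^v \sigma(u, X_u)\, dB(u)$ for $v \ge s$. Under Assumptions \ref{hypo: SDE} and \ref{hypo: SDE_r}, the drift is controlled by $|r(u,X_u)X(u)| \le C_r |X(u)|$ and the diffusion by $|\sigma(u,X_u)| \le C_1 + C_3 \|X\|_u$ (via the Lipschitz bound combined with $|\sigma(u,0)| \le C_1$), so no dependence on $C_2$ or the Lipschitz constant of $r$ appears. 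This is why the constants are allowed to depend only on $p, C_1, C_r, C_3$.

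\textbf{First estimate for $p \ge 2$.} Fix $w \in [s, T]$. Use $(a+b+c)^p \le 3^{p-1}(a^p + b^p + c^p)$ to split the three contributions to $|X(v)|^p$, then take $\sup_{v \in [s, w]}$ and $\mathbb{E}_s$. The $|X(s)|^p$ term contributes at most $3^{p-1} \|X\|_s^p$; the drift term is estimated by Jensen as $3^{p-1}(w - s)^{p-1} C_r^p \int_s^w \mathbb{E}_s[\|X\|_u^p]\,du$; the stochastic integral is controlled by BDG with the constant $M_p$ and then Jensen (convexity of $x \mapsto x^{p/2}$ for $p \ge 2$), yielding a similar integral bound with $\sigma$ replaced by its polynomial growth bound. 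Combining these gives $\mathbb{E}_s[\|X\|_w^p] \le C'(\|X\|_s^p + 1) + C'' \int_s^w \mathbb{E}_s[\|X\|_u^p]\, du$ where $C', C''$ depend only on $p, C_1, C_r, C_3$ (and $T-s$ through a bounded factor). Gronwall's inequality then yields the desired form $(L_6\|X\|_s^p + L_7)e^{L_8(T-s)}$ with $L_6 = 3^{p-1}$ depending only on $p$. For $p \in [1, 2)$, I would apply Jensen to the $p=2$ version: $\mathbb{E}_s[\|X\|_T^p] \le \mathbb{E}_s[\|X\|_T^2]^{p/2}$, and then use $(a+b)^{p/2} \le a^{p/2} + b^{p/2}$ together with $\|X\|_s^p = (\|X\|_s^2)^{p/2}$ to put the estimate back into the prescribed form.

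\textbf{Second estimate.} Since $X_s(u) = X(u \wedge s)$, one has $\|X - X_s\|_{s+\delta} = \sup_{u \in [s, s+\delta]} |X(u) - X(s)|$, and $X(u) - X(s) = \int_s^u r(v,X_v)X(v)\,dv + \int_s^u \sigma(v,X_v)\,dB(v)$. For $p \ge 2$, the drift term is bounded pointwise by $\delta^p C_r^p \|X\|_{s+\delta}^p$, and since $\delta \in (0,1)$ and $p \ge 2$ this is at most $\delta^{p/2} C_r^p \|X\|_{s+\delta}^p$. BDG plus Jensen give the stochastic-integral bound $M_p \delta^{p/2-1} \mathbb{E}_s[\int_s^{s+\delta} |\sigma(v,X_v)|^p\,dv] \le C \delta^{p/2}(1 + \mathbb{E}_s[\|X\|_{s+\delta}^p])$, producing the second inequality. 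For $p \in [1, 2)$, apply Jensen to the $p=2$ estimate and then use the first estimate to bound $\mathbb{E}_s[\|X\|_{s+\delta}^2]$ (noting $\|X\|_s \le \|X\|_{s+\delta}$ pathwise) to recover the $\mathbb{E}_s[\|X\|_{s+\delta}^p]$ form on the right.

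\textbf{Main obstacle.} The chief technical difficulty is the careful bookkeeping of constants to match the claimed dependencies, particularly ensuring $L_6$ depends only on $p$ after Gronwall is applied, and ensuring the transfer of the estimates from $p \ge 2$ to $p \in [1, 2)$ does not introduce dependence on $T$, on higher moments of $X$, or on constants outside the allowed list. This is largely a matter of tracking how the Jensen step interacts with the form of the right-hand side, and of using the $p = 2$ form of the first estimate as a workhorse for the sub-quadratic range of $p$.
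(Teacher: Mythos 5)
Your second estimate and the reductions from $p\ge 2$ to $p\in[1,2)$ are in line with the paper's argument, and your identification of which constants may appear is correct. The genuine gap is in the Gronwall step of the first estimate. After the Jensen/H\"older bound on the drift, the coefficient multiplying $\int_s^w \mathbb{E}_s[|\!|X|\!|_u^p]\,du$ contains the factor $(w-s)^{p-1}$ (and the BDG term similarly contributes $(w-s)^{p/2-1}$ for $p>2$). This factor is not bounded uniformly in $T$, so Gronwall produces a bound of the form $\exp\bigl(c\,(T-s)^{p}\bigr)$ rather than $e^{L_8(T-s)}$ with $L_8$ independent of $T$. Since the proposition asserts the inequality for all $T\in(0,\infty)$ with fixed $L_6,L_7,L_8$, your argument as written does not deliver the claimed form. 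The paper avoids this by working at the level of $\bigl(\mathbb{E}[|\!|X|\!|_T^p\boldsymbol{1}_A]\bigr)^{1/p}$ for $A\in\mathcal{F}_s$ via Minkowski's inequality, so that the drift and diffusion contributions stay first order in time, then multiplying by the weight $e^{-K(T-s)}$ and invoking a quadratic Gronwall-type inequality (\cite[Corollary 2]{butler1971generalization}); this yields an exponent linear in $T-s$ with the explicit rate $K+\tfrac12(\tfrac{C_r}{\sqrt{2K}}+M_pC_3)^2$ per unit of $p$, which is moreover exactly the rate compared against $\ell$ in Assumption \ref{hypo: BSDE no delay}. Your route could be repaired by proving the estimate on intervals of length one and iterating, which restores the exponential-linear form, but that step is missing, and even then it would not reproduce the specific rate the paper relies on downstream.

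A second, smaller point: Gronwall requires the a priori finiteness of $\mathbb{E}_s[|\!|X|\!|_u^p]$, which you assume implicitly. The paper secures this first, identifying $X$ with the $\mathbb{S}^p$ solution of the rewritten (risk-neutral drift) SDE through a localization argument and \cite[Theorem 7, Chapter 5]{protter}; some such step, or a localization inside your Gronwall argument, is needed before the estimates can be closed.
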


\begin{proof}
	
	We first prove that  $X\in \mathbb{S}^p(0,\infty;\mathbb{R}^m)$ for any $p\ge1$. Observe that the SDE
	\begin{align}
		\tilde{X}(s)=x+\int_0^sr(u,X_u)\tilde{X}(u)\,du+\int_0^s\sigma(u,\tilde{X}_u)\,dB(u)\,, \,\,s\ge0\label{NewSDE}
	\end{align}
	has a solution $\tilde{X}$ in $\mathbb{S}^p(0,\infty;\mathbb{R}^m)$ by \cite[Theorem 7, Chapter 5]{protter}. 
	By showing $X=\tilde{X}$, we conclude that  $X\in \mathbb{S}^p(0,\infty;\mathbb{R}^m)$. For each $n\in \mathbb{N}$, define a stopping time 
	\begin{align}
		\tau_n=\inf\{s\ge0 \, |\, |X(s)|\ge n \text{ or }|\tilde{X}(s)|\ge n\,\}\,,
	\end{align}
	and let $\hat{X}:=X-\tilde{X}$. 
	Because both $X$ and $\tilde{X}$ are solutions to \eqref{NewSDE}, 
	we have 
	\begin{align}
		\hat{X}(s\wedge \tau_n)=\int_0^{s\wedge\tau_n}r(u,X_u)\hat{X}(u)\,du+\int_0^{s\wedge \tau_n}\sigma(u,X_u)-\sigma(u,\tilde{X}_u)\,dB(u)\,.
	\end{align}
	From the BDG inequality and Jensen's inequality, it follows that
	\begin{equation} 
		\begin{aligned}
			&\quad\; \mathbb{E}[\sup_{0\le r\le s}|\hat{X}(r\wedge \tau_n)|^2]\\
			&\le 2\mathbb{E}\Big[\sup_{0\le r\le s} \Big|\int_0^{r\wedge\tau_n}r(u,X_u)\hat{X}(u)\,du \Big|^2+\sup_{0\le r\le s}\Big|\int_0^{r\wedge\tau_n}\sigma(u,\tilde{X}_u)-\sigma(u,X_u)\,dB(u)\Big|^2\Big]\\
			&\le 
			L\Big(\mathbb{E}\Big[\Big(\int_0^s \sup_{0\le r\le u}|\hat{X}(r\wedge \tau_n)|\,du\Big)^2\Big] +\mathbb{E}\Big[\int_0^T \sup_{0\le r\le u}|\hat{X}(r\wedge \tau_n)|^2\,du\Big]\Big)\\
			&\le L(1+s)\mathbb{E}\Big[\int_0^s \sup_{0\le r\le u}|\hat{X}(r\wedge \tau_n)|^2\,du\Big]\,
		\end{aligned}
	\end{equation}
	for some positive constant $L$.
	Define a function $\Psi:[0,\infty)\to [0,\infty)$ as $$\Psi(s):=\mathbb{E}[\sup_{0\le r\le s}|\hat{X}(r\wedge \tau_n)|^2]\,,\;s\ge0\,.$$  Applying Gr\"{o}nwall's inequality to $\Psi$, we have $\Psi(s)=0$ for all $s\ge 0$, which implies
	$X(s\wedge\tau_n)=\tilde{X}(s\wedge\tau_n)$ for all $s\ge 0$.
	Letting $n\to \infty$, we obtain $X=\tilde{X}$ and thus $X\in \mathbb{S}^p(0,\infty;\mathbb{R}^m).$

	For the first inequality in \eqref{eqn: path dep SDE mean_app}, we prove it for $p\ge 2.$ The case with   $0<p<2$ is directly obtained by Jensen's inequality.
	Using 
	\begin{align}
		X(r)=X(s)+\int_s^r r(u,X_u)X(u)\,du+\int_s^r\sigma(u,X_u)\,dB(u)\,,\;0\le s\le r\,,
	\end{align}
	one can easily show that   for $A\in \mathcal{F}_s$,
	\begin{align}
		|\!|X|\!|_T \boldsymbol{1}_A
		\le|\!|X|\!|_s \boldsymbol{1}_A
		+\sup_{r\in[s,T]}\Big|\int_s^r r(u,X_u)X(u)\,du\Big|  \boldsymbol{1}_A+\sup_{r\in[s,T]}\Big|\int_s^r\sigma(u,X_u)\,dB(u)\Big| \boldsymbol{1}_A\,.
	\end{align}
	According to the Minkowski inequality and the BDG inequality, it follows that
	\begin{align}
		(\mathbb{E}[|\!|X|\!|^p_T \boldsymbol{1}_A])^{\frac{1}{p}} &\le(\mathbb{E}[|\!|X|\!|^p_s \boldsymbol{1}_A])^{\frac{1}{p}}+\Big(\mathbb{E}\Big[\Big(\int_s^T C_r|\!|X|\!|_u\,du \boldsymbol{1}_A\Big)^p\Big]\Big)^{\frac{1}{p}}\\
		&\quad+M_p\Big(\mathbb{E}\Big[\Big(\int_s^T(C_1+C_3|\!|X|\!|_u)^2\,du\Big)^{\frac{p}{2}} \boldsymbol{1}_A\Big]\Big)^{\frac{1}{p}}\\
		& \le(\mathbb{E}[|\!|X|\!|^p_s \boldsymbol{1}_A])^{\frac{1}{p}} +C_1(\mathbb{E}[\boldsymbol{1}_A])^{\frac{1}{p}}M_p(T-s)^{\frac{1}{2}}\\
		&\quad +\int_s^T(\mathbb{E}[(C_r|\!|X|\!|_u\boldsymbol{1}_A)^p])^{\frac{1}{p}}\,du+M_p\Big(\int_s^T(\mathbb{E}[(C_3|\!|X_u|\!|_u\boldsymbol{1}_A)^p])^{\frac{2}{p}} \,du\Big)^{\frac{1}{2}} 
	\end{align}
	where $M_p$ is the constant from the BDG inequality.
	For $K>0$, by multiplying $e^{-K(T-s)}$, we have
	\begin{align}
		e^{-K(T-s)}(\mathbb{E}[|\!|X|\!|^p_T\boldsymbol{1}_A])^{\frac{1}{p}} &\le(\mathbb{E}[|\!|X|\!|^p_s\boldsymbol{1}_A])^{\frac{1}{p}} +C_1(\mathbb{E}[\boldsymbol{1}_A])^{\frac{1}{p}}M_p(T-s)^{\frac{1}{2}}\\
		&\quad +\int_s^Te^{-K(T-s)}(\mathbb{E}[(C_r|\!|X|\!|_u\boldsymbol{1}_A)^p])^{\frac{1}{p}}\,du\\
		&\quad+M_p\Big(\int_s^Te^{-2K(T-s)}(\mathbb{E}[(C_3|\!|X_u|\!|_u\boldsymbol{1}_A)^p])^{\frac{2}{p}} \,du\Big)^{\frac{1}{2}}\,.
	\end{align}
	Observe that 
	\begin{align}
		\int_s^Te^{-K(T-s)}(\mathbb{E}[(|\!|X|\!|_u\boldsymbol{1}_A)^p])^{\frac{1}{p}}\,du&\le \int_s^Te^{-K(T-u)}e^{-K(u-s)}(\mathbb{E}[(|\!|X|\!|_u\boldsymbol{1}_A)^p])^{\frac{1}{p}}\,du\\
		&\le \frac{1}{\sqrt{2K}}\Big(\int_s^T(e^{-K(u-s)}(\mathbb{E}[(|\!|X|\!|_u\boldsymbol{1}_A)^p])^{\frac{1}{p}})^2\,du\Big)^{\frac{1}{2}}
	\end{align}
	and 
	\begin{align}
		\Big(\int_s^Te^{-2K(T-s)}(\mathbb{E}[(|\!|X_u|\!|_u\boldsymbol{1}_A)^p])^{\frac{2}{p}} \,du\Big)^{\frac{1}{2}}\le \Big(\int_s^T(e^{-K(u-s)}(\mathbb{E}[(|\!|X_u|\!|_u\boldsymbol{1}_A)^p])^{\frac{1}{p}})^2 \,du\Big)^{\frac{1}{2}}\,.
	\end{align}
	Thus,
	\begin{align}
		e^{-K(T-s)}(\mathbb{E}[|\!|X|\!|^p_T\boldsymbol{1}_A])^{\frac{1}{p}} &\le(\mathbb{E}[|\!|X|\!|^p_s\boldsymbol{1}_A])^{\frac{1}{p}} +C_1(\mathbb{E}[\boldsymbol{1}_A])^{\frac{1}{p}}M_p(1+(T-s))
		\\
		&\quad +\Big(\frac{C_r}{\sqrt{2K}}+M_pC_3\Big)\Big(\int_s^T(e^{-K(u-s)}(\mathbb{E}[(|\!|X|\!|_u\boldsymbol{1}_A)^p])^{\frac{1}{p}}\,)^2du\Big)^{\frac{1}{2}}\,.
	\end{align}
	Then, \cite[Corollary 2]{butler1971generalization} yields
	\begin{align}
		&e^{-K(T-s)}(\mathbb{E}[|\!|X|\!|^p_T\boldsymbol{1}_A])^{\frac{1}{p}}\\ \le&\;\big((\mathbb{E}[|\!|X|\!|^p_s\boldsymbol{1}_A])^{\frac{1}{p}} +C_1(\mathbb{E}[\boldsymbol{1}_A])^{\frac{1}{p}}M_p(1+(T-s)) \big)e^{1+\frac{1}{2}(\frac{C_r}{\sqrt{2K}}+M_pC_3)^2(T-s)}\,.
	\end{align}
	Using the inequality $T-s\le\frac{1}{\epsilon}e^{\epsilon(T-s)}$ for any $\epsilon>0$,
	we obtain
	\begin{align}\label{eqn:bb}
		(\mathbb{E}[|\!|X|\!|^p_T\boldsymbol{1}_A])^{\frac{1}{p}} &\le  ((\mathbb{E}[|\!|X|\!|^p_s\boldsymbol{1}_A])^{\frac{1}{p}}e +L(\mathbb{E}[\boldsymbol{1}_A])^{\frac{1}{p}})e^{(\epsilon+K+\frac{1}{2}(\frac{C_r}{\sqrt{2K}}+M_pC_3)^2)(T-s)}\,
	\end{align} 
	for some positive constant $L$. 
	Because this holds for all $A\in \mathcal{F}_s,$ it follows that
	\begin{align}\label{eqn:X esi}
		\mathbb{E}_s[|\!|X|\!|_T^p]\le (L_6|\!|X|\!|^{p}_s+L_7)e^{L_8(T-s)} 
	\end{align} 
	for positive constants $L_6,$ which  depends only on $p$, and $L_7$ and $L_8,$ which depend only on $C_1,C_r,C_3,p.$

	Now we prove the second inequality. From the BDG inequality and Jensen's inequality, we have 
	\begin{align}
		&\;\quad\mathbb{E}_s[|\!|X-X_s|\!|^p_{s+\delta}]=\mathbb{E}_s\Big[\sup_{s\le r\le s+\delta}|X(r)-X(s)|^p\Big]\\
		&\le L\mathbb{E}_s\Big[\Big(\int_s^{s+\delta}|r(u,X_u)X(u)|\,du\Big)^p+\sup_{s\le r\le s+\delta}\Big|\int_s^r\sigma(u,X_u)\,dB(u)\Big|^p\Big]\\
		&\le L\Big(\mathbb{E}_s\Big[\Big(\int_s^{s+\delta}|r(u,X_u)X(u)|\,du\Big)^p\Big]+\mathbb{E}_s\Big[\sup_{s\le r\le s+\delta}\Big(\int_s^r|\sigma(u,X_u)|^2\,du\Big)^{\frac{p}{2}}\Big]\Big) \\
		&\le L\mathbb{E}_s\Big[\delta^{p-1}\int_s^{s+\delta}|X(u)|^p\,du+\delta^{\frac{p}{2}-1}\int_s^{s+\delta}|\sigma(u,X_u)|^p\,du\Big] \\
		&\le L(1+\mathbb{E}_s[|\!|X|\!|^p_{s+\delta}]\delta^{\frac{p}{2}}
	\end{align}
	for some positive constant $L$, which depends only on $C_1,C_r,C_3,p$ and may change line by line. 
	This completes the proof. 
\end{proof}

We now prove   Theorem \ref{thm: BSDE no delay}.

\begin{proof}	
	In this proof, $L$ denotes a generic constant depending only on $C_1,C_r, C_3,C_4,\ell,\rho$ and may differ line by line.
	We first prove the uniqueness of solutions. Suppose there are two solutions $(Y^1,Z^1)$ and $(Y^2,Z^2)$ to \eqref{eqn:BSDE}. Define three processes
	\begin{align}\label{eqn:aa}
		&\hat{Y}(s)=Y^1(s)-Y^2(s)\,,\\
		&\hat{Z}(s)=Z^1(s)-Z^2(s)\,,\\
		&\alpha(s)=\frac{f(s,X_s,Y^1(s))-f(s,X_s,Y^2(s))}{\hat{Y}(s)}\boldsymbol{1}_{\{|\hat{Y}(s)|>0\}}-\ell\boldsymbol{1}_{\{|\hat{Y}(s)|=0\}}\,.
	\end{align}
	It can be easily shown that $\alpha(s)\le -\ell$ and 
	\begin{align}
		\hat{Y}(s)=\hat{Y}(T)+\int_s^T\alpha(u)\hat{Y}(u)\,du-\int_s^T \hat{Z}(u)\,dB(u)\,.
	\end{align}
	By Proposition \ref{prop:X_esti}, 
	there are constants $L_6$ and $L_7$, depending only on $C_1,C_r,C_3,\rho$, such that  \begin{align}
		\mathbb{E}_s[|\!|X|\!|_T^\rho]\le (L_6|\!|X|\!|_s^\rho+L_7)e^{(\epsilon+K+\frac{1}{2}(\frac{C_r}{\sqrt{2K}}+M_{\rho\vee 2}C_3)^2)\rho(T-s)}
	\end{align}
	for any $\epsilon>0$, $K>0$, $T>0$ and $s\in [0,T].$
	As $\ell>\inf_{K>0}\{(K+\frac{1}{2}(\frac{C_r}{\sqrt{2K}}+M_{\rho\vee 2}C_3)^2)\rho\}$,  for some constant $L_8$ with $0<L_8<\ell$, we have 
	\begin{align}\label{eqn: lemma}
		\mathbb{E}_s[|\!|X|\!|^\rho_T]\le (L_6|\!|X|\!|^\rho_s+L_7)e^{L_8(T-s)}
	\end{align}
	for any   $T>0$ and $s\in [0,T].$ 
	It follows that
	\begin{align}
		|\hat{Y}(s)|=\Big|\mathbb{E}_s\Big[e^{\int_s^T\alpha(u)\,du}\hat{Y}(T)\Big]\Big|&\le \mathbb{E}_s\Big[e^{-\ell(T-s)}L(1+|\!|X|\!|^\rho_T)\Big]\\
		&\le Le^{-(\ell-L_8)(T-s)}(|\!|X|\!|^\rho_s+1)\,.
	\end{align}
	Letting $T\to \infty$, we have $Y^1-Y^2=\hat{Y}=0$. Moreover, this directly yields $Z^1=Z^2$.
	
	Now we prove the existence of solutions. For each $n\in \mathbb{N}$, there exists a unique solution $(Y^n(s),Z^n(s))_{0\le s\le n}$ to the BSDE
	\begin{align}\label{eqn: n path dep BSDE}
		Y^n(s)=\int_s^nf(u,X_u,Y^n(u))\,du-\int_s^nZ^n(u)\,dB(u)\,.
	\end{align}
	We extend this solution $(Y^n(s),Z^n(s))_{0\le s\le n}$ to $(Y^n(s),Z^n(s))_{0\le s<\infty}$ by defining 
	$       Y^n(s)=Z^n(s)=0$ for all $s>n$.
	Then 
	\begin{align}
		Y^n(s)=Y^n(T)+\int_s^Tf(u,X_u,Y^n(u))-\boldsymbol{1}_{\{u>n\}}f(u,X_u,0)\,du-\int_s^TZ^n(u)\,dB(u)\,.
	\end{align}
	For $m>n$, define three processes
	\begin{align}
		&\tilde{Y}(s)=Y^m(s)-Y^n(s)\,,\\
		&\tilde{Z}(s)=Z^m(s)-Z^n(s)\,,\\
		&\tilde{\alpha}(s)=\frac{f(s,X_s,Y^m(s))-f(s,X_s,Y^n(s))}{\tilde{Y}(s)}\boldsymbol{1}_{\{|\tilde{Y}(s)|>0\}}-\ell\boldsymbol{1}_{\{|\tilde{Y}(s)|=0\}}\,.
	\end{align}
	It can be easily checked that $\tilde{\alpha}\le -\ell$ and
	\begin{align}
		\tilde{Y}(s)=\int_s^m\tilde{\alpha}(u)\tilde{Y}(u)+\boldsymbol{1}_{\{u>n\}}f(u,X_u,0)\,du-\int_s^m\tilde{Z}(u)\,dB(u)\,.
	\end{align}
	Then It\^o's formula and \eqref{eqn: lemma} yield
	\begin{align}
		|\tilde{Y}(s)|&=\mathbb{E}_s\Big[\int_s^me^{\int_s^u\tilde{\alpha}(v)\,dv} \boldsymbol{1}_{\{u>n\}}f(u,X_u,0)\,du\Big]\\
		&\le\mathbb{E}_s\Big[\int_n^mC_4e^{-\ell(u-s)}(1+|\!|X|\!|^\rho_u) \,du\Big]\\
		&\le L(1+|\!|X|\!|^\rho_s)(e^{-(\ell-L_8)(n-s)}-e^{-(\ell-L_8)(m-s)})\,.
	\end{align}
	Therefore for $0\le T\le n\le m$,
	\begin{align}
		\lim_{n,m\to \infty}\mathbb{E}[\sup_{0\le s\le T}|Y^n(s)-Y^m(s)|^2]=0\,.
	\end{align}
	The sequence $(Y^n)_{n\in\mathbb{N}}$  is a Cauchy sequence in $\mathbb{S}^2(0,T;\mathbb{R})$ for each $T>0$. Because $\mathbb{S}^2(0,T;\mathbb{R})$ is complete, the limit $Y:=\lim_{n\to \infty}Y^n$ exists. 
	Applying It\^o's formula, we have
	\begin{align}
		|\tilde{Y}(0)|^2+\mathbb{E}\Big[\int_0^T|\tilde{Z}(u)|^2\,du\Big]=\mathbb{E}\Big[|\tilde{Y}(T)|^2+\int_0^T2\tilde{\alpha}(u)|\tilde{Y}(u)|^2\,du\Big]\,.
	\end{align}
	From the inequality $\tilde{\alpha}(s)\le -\ell$, 
	\begin{align}
		\mathbb{E}\Big[\int_0^T|Z^m(u)-Z^n(u)|^2\,du\Big]&= \mathbb{E}\Big[\int_0^T|\tilde{Z}(u)|^2\,du\Big]=\mathbb{E}[|\tilde{Y}(T)|^2]\\
		&\le L(1+\mathbb{E}[|\!|X|\!|^{2\rho}_T])(e^{-(\ell-L_8)(n-s)}-e^{-(\ell-L_8)(m-s)})^2\,.
	\end{align}  
	Thus, $(Z^n)_{n\in\mathbb{N}}$ is a Cauchy sequence in 	$\mathbb{H}^2(0,T;\mathbb{R}^m)$. The limit $Z:=\lim_{n\to \infty}Z^n$ exists in 	$\mathbb{H}^2(0,T;\mathbb{R}^m)$.  Because $(Y^n(s),Z^n(s))_{0\le s\le n}$ satisfies 
	\begin{align}
		Y^n(s)=Y^n(T)+\int_s^Tf(u,X_u,Y^n(u))\,du-\int_s^TZ^n(u)\,dB(u)\,,\;0\le s\le T,
	\end{align}
	the Lebesgue dominated convergence theorem implies that 
	the pair $(Y,Z)$ is a solution to \eqref{eqn:BSDE}.
	
	Now we prove that there exists a constant $L>0$ such that
	$|\!|Y|\!|\le L(1+|\!|X|\!|^\rho)$.
	As $Y=\lim_{n\to \infty}Y^n$, it suffices to prove   $|\!|Y^n|\!|\le L(1+|\!|X|\!|^\rho)$. Define
	\begin{align}
		\overline{\alpha}(s):=\frac{f(s,X_s,Y^n(s))-f(s,X_s,0)}{{Y}^n(s)}\boldsymbol{1}_{\{|{Y}^n(s)|>0\}}-\ell\boldsymbol{1}_{\{|{Y}^n(s)|=0\}}\,,
	\end{align}
	then   
	\begin{align}
		Y^n(s)=\int_s^n(\overline{\alpha}(u)Y^n(u)+f(u,X_u,0))\,du-\int_s^n Z^n(u)\,dB(u)\,,\;0\le s\le n\,.
	\end{align}
	Using It\^o's formula, \eqref{eqn: lemma} and the inequality $\overline{\alpha}(s)<-\ell$, we have
	\begin{align}\label{eqn:finite_esti}
		|Y^n(s)|\le \mathbb{E}_s\Big[\int_s^ne^{-\ell(u-s)}|f(u,X_u,0)|\,du \Big]\le L(1+|\!|X|\!|^{\rho}_s)\,.
	\end{align}
	This completes the proof.
\end{proof}

\section{Feynman-Kac formula}
\label{app:FK}

We present 
the notions of path-dependent PDEs (PPDEs) and provide the proof of the Feynman-Kac formula stated in Theorem \ref{thm: fey}.
For
non-anticipative functionals $r:[0,\infty)\times \Lambda\to \mathbb{R}$, $\sigma:[0,\infty)\times \Lambda \to \mathbb{R}^{m\times m}$ and   \(f: [0, \infty) \times \Lambda \times \mathbb{R} \to \mathbb{R} \), 
consider the PPDE
\begin{align}
	&-\partial_s \varphi(s,\gamma)-\frac{1}{2}\textnormal{tr}(\sigma\sigma^{\top}\partial_{xx}\varphi)(s,\gamma)- r(s,\gamma)\partial_{x}\varphi(s,\gamma)\gamma(s)-f(s,\gamma,\varphi(s,\gamma))=0\label{eqn: path dependent pde}
\end{align}
for  $(s,\gamma)\in [0,\infty)\times \Lambda$.
The definition of classical solutions to PPDEs is as follows. Classical solutions are often referred to simply as solutions.

\begin{definition}
	Let $\varphi\in C^{1,2}_p([0,\infty)\times \Lambda)$. 
	\begin{enumerate}[(i)]
		\item We say $\varphi$ is a classical subsolution (supersolution, respectively) to the PPDE \eqref{eqn: path dependent pde} if
		\begin{equation} 
			\begin{aligned}
				-\partial_s \varphi(s,\gamma)-\frac{1}{2}\textnormal{tr}(\sigma\sigma^{\top}\partial_{xx}\varphi)(s,\gamma)- r(s,\gamma)\partial_{x}\varphi(s,\gamma)\gamma(s)&-f(s,\gamma,\varphi(s,\gamma))\le  0\\
				&\quad\;\; (\ge 0,\text{ respectively}) 
			\end{aligned}   
		\end{equation}
		for all $(s,\gamma)\in [0,\infty)\times \Lambda$.
		\item  We say $\varphi$ is a classical solution to the  PPDE \eqref{eqn: path dependent pde} if $\varphi$ is both a classical subsolution and classical supersolution. 
	\end{enumerate}
\end{definition}

For $s\ge0$, define $\Lambda_s := C([0,s];\mathbb{R}^m)$ and $\Lambda^s_{s+1}:=C([s,s+1];\mathbb{R}^m)$ equipped with the supremum norm. Let $(\mathcal{F}_u^s)_{u\ge 0}$ be the filtration generated by $((B(u)-B(s))\boldsymbol{1}_{u\ge s})_{u\ge 0}.$
Denote as $\mathcal{T}^s_{s+1,+}$ the set of $(\mathcal{F}_u^s)_{0\le u\le s+1}$-stopping times such that $\tau>s$.
For  $(s,\gamma)\in [0,\infty)\times \Lambda$ and $\varphi\in C_p([0,\infty)\times\Lambda)$, we define the spaces
\begin{align}
	\underline{\mathcal{A}}\varphi(s,\gamma):=\{\psi\in& C^{1,2}_p([0,\infty)\times \Lambda)\,|\, \text{there exists }\tau\in \mathcal{T}_{s+1,+}^s\\
	&\text{ such that }0=\psi(s,\gamma)-\varphi(s,\gamma)
	=\min_{\tilde{\tau}\in \mathcal{T}_{s+1}^s}\mathbb{E}[(\psi-\varphi)(\tau\wedge\tilde{\tau},X^{s,\gamma})]\}\,,\\
	\overline{\mathcal{A}}\varphi(s,\gamma):=\{\psi\in &C^{1,2}_p([0,\infty)\times \Lambda)\,|\, \text{there exists }\tau\in \mathcal{T}_{s+1,+}^s\\
	&\text{ such that } 0=\psi(s,\gamma)-\varphi(s,\gamma)
	=\max_{\tilde{\tau}\in \mathcal{T}_{s+1}^s}\mathbb{E}[(\psi-\varphi)(\tau\wedge\tilde{\tau},X^{s,\gamma})]\}\,,
\end{align}
where   $X^{s,\gamma}$ is a solution to the SDE \eqref{eqn: fey sde}.

\begin{definition}
	Let $\varphi\in C_p([0,\infty)\times\Lambda)$. 
	\begin{enumerate}[(i)]
		\item We say $\varphi$ is a viscosity subsolution (superslution, respectively) to the PPDE \eqref{eqn: path dependent pde} if 
		\begin{equation} 
			\begin{aligned}
				-\partial_s \psi(s,\gamma)-\frac{1}{2}\textnormal{tr}(\sigma\sigma^{\top}\partial_{xx}\psi)(s,\gamma)- r(s,\gamma)\partial_{x}\psi(s,\gamma)\gamma(s)
				&-f(s,\gamma,\psi(s,\gamma))\le 0 \\
				&\quad\;\;(\ge 0\,, \text{respectively}) 
			\end{aligned}
		\end{equation}
		for all $(s,\gamma)\in [0,\infty)\times \Lambda$ and $\psi\in \underline{\mathcal{A}}\varphi(s,\gamma)$ ($\psi\in \overline{\mathcal{A}}\varphi(s,\gamma)$, respectively).
		\item We say $\varphi$ is a viscosity solution to the PPDE \eqref{eqn: path dependent pde} if $\varphi$ is both a viscosity subsolution and viscosity supersolution.
	\end{enumerate}
\end{definition}

\begin{theorem}\label{thm : clasic vis}
	Suppose that the functionals $r, \sigma,f$ are continuous and  $\varphi\in C^{1,2}_p([0,\infty)\times\Lambda)$. Then $\varphi$ is a classical subsolution (supersolution, respectively) to the PPDE \eqref{eqn: path dependent pde} if and only if  $\varphi$ is a viscosity subsolution (supersolution, respectively) to the PPDE \eqref{eqn: path dependent pde}. 
\end{theorem}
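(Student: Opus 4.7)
The plan is to establish both directions, relying on two fundamental tools: the functional Itô formula for $C^{1,2}_p$ functionals (from \cite{cont2013functional} and \cite{dupire2019functional}) applied to $\psi-\varphi$, and the trivial observation that $\varphi$ itself can serve as its own test function. I will write the proof for the subsolution case; the supersolution case follows by reversing all inequalities, swapping $\min$ with $\max$, and swapping $\underline{\mathcal{A}}$ with $\overline{\mathcal{A}}$.

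For the easy implication (viscosity $\Rightarrow$ classical), I will simply take $\psi=\varphi\in C^{1,2}_p([0,\infty)\times\Lambda)$ as a test function. Then $\psi-\varphi\equiv 0$, so for any stopping time $\tau\in\mathcal{T}^s_{s+1,+}$ (e.g., $\tau=s+1$) and every $\tilde\tau\in\mathcal{T}^s_{s+1}$ we have $\mathbb{E}[(\psi-\varphi)(\tau\wedge\tilde\tau,X^{s,\gamma})]=0$, and $\psi(s,\gamma)-\varphi(s,\gamma)=0$; hence $\varphi\in\underline{\mathcal{A}}\varphi(s,\gamma)$. The viscosity subsolution inequality with this test function is exactly the classical subsolution inequality.

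For the converse (classical $\Rightarrow$ viscosity), fix $(s,\gamma)$ and $\psi\in\underline{\mathcal{A}}\varphi(s,\gamma)$, with associated $\tau\in\mathcal{T}^s_{s+1,+}$. Writing
\begin{equation}
\mathcal{L}g(u,\eta):=\partial_s g(u,\eta)+\tfrac{1}{2}\mathrm{tr}(\sigma\sigma^\top\partial_{xx}g)(u,\eta)+r(u,\eta)\partial_x g(u,\eta)\eta(u),
\end{equation}
I will apply the functional Itô formula to $\psi-\varphi\in C^{1,2}_p$ along the path of $X^{s,\gamma}$ between $s$ and $\tau\wedge(s+h)$ for small $h>0$. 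Since $\partial_x(\psi-\varphi)\sigma$ has polynomial growth and $X^{s,\gamma}$ admits moments of all orders on $[s,s+1]$ by Proposition \ref{prop:X_esti}, the $dB$ term is a true martingale on this stopped interval. Taking expectations and using the minimality property with $\tilde\tau=s+h\in\mathcal{T}^s_{s+1}$ yields
\begin{equation}
0\le\mathbb{E}\big[(\psi-\varphi)(\tau\wedge(s+h),X^{s,\gamma})\big]=\mathbb{E}\Big[\int_s^{\tau\wedge(s+h)}\mathcal{L}(\psi-\varphi)(u,X^{s,\gamma}_u)\,du\Big].
\end{equation}
Rewriting the integral as $\int_s^{s+h}\mathbf{1}_{\{u<\tau\}}\mathcal{L}(\psi-\varphi)(u,X^{s,\gamma}_u)\,du$ and invoking Fubini, dividing by $h$, and passing to the limit $h\to 0^+$ using the continuity of $(u,\gamma)\mapsto\mathcal{L}(\psi-\varphi)(u,\gamma)$, the right-continuity at $s$ of $X^{s,\gamma}$, the fact that $\tau>s$ a.s., and dominated convergence (with polynomial-growth envelope controlled by $\sup_{u\le s+1}\|X^{s,\gamma}_u\|^q$), I obtain $\mathcal{L}(\psi-\varphi)(s,\gamma)\ge 0$. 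Combining with the classical inequality $-\mathcal{L}\varphi(s,\gamma)-f(s,\gamma,\varphi(s,\gamma))\le 0$ and using $\psi(s,\gamma)=\varphi(s,\gamma)$ gives $-\mathcal{L}\psi(s,\gamma)-f(s,\gamma,\psi(s,\gamma))\le 0$, which is the viscosity subsolution inequality at $(s,\gamma)$ with test function $\psi$.

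The main technical obstacle is the infinitesimal limit $h\to 0^+$: the upper endpoint $\tau\wedge(s+h)$ depends on $h$ and is random, so care is needed to swap limit and expectation. I will handle this by the indicator-rewriting trick above, which reduces the problem to the right-continuity at $u=s$ of the deterministic function $u\mapsto\mathbb{E}[\mathbf{1}_{\{u<\tau\}}\mathcal{L}(\psi-\varphi)(u,X^{s,\gamma}_u)]$; this right-continuity follows from pathwise right-continuity of the integrand together with the moment bounds guaranteeing a uniform $L^1$ majorant. Everything else is bookkeeping, and the supersolution version is entirely parallel.
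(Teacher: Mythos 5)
Your proof is correct, but the second (harder) direction takes a genuinely different route from the paper. The paper argues by contradiction: assuming the viscosity subsolution inequality fails at $(s,\gamma)$ with margin $M>0$, it constructs a stopping time $\hat{\tau}\in\mathcal{T}^s_{s+1,+}$ localizing the region where $\mathcal{L}\psi+f(\cdot,\cdot,\varphi)\le -M/2$, applies the functional It\^o formula to both $\psi$ and $\varphi$ up to $\hat{\tau}$, and contradicts the minimality property $\mathbb{E}[(\psi-\varphi)(\tau\wedge\hat\tau,X^{s,\gamma})]\ge 0$. You instead argue directly: applying the functional It\^o formula to $\psi-\varphi$ on $[s,\tau\wedge(s+h)]$, using $\tilde\tau=s+h$ in the minimality condition, dividing by $h$ and letting $h\to 0^+$ gives the pointwise derivative inequality $\mathcal{L}(\psi-\varphi)(s,\gamma)\ge 0$, after which the classical inequality for $\varphi$ and the identity $\psi(s,\gamma)=\varphi(s,\gamma)$ (so the $f$ terms coincide) yield the viscosity inequality for $\psi$. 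Your version buys a cleaner logical structure (no contradiction, no margin-$M/2$ stopping time, and a reusable intermediate fact that any test function satisfies $\mathcal{L}\psi\ge\mathcal{L}\varphi$ at the touching point), at the price of having to justify the interchange of limit and expectation as $h\to 0^+$; your indicator-rewriting plus dominated convergence with the polynomial-growth envelope and the moment bounds of Proposition \ref{prop:X_esti} handles this correctly, since $\tau>s$ almost surely and all integrands are continuous along continuous paths with respect to the pseudometric $d$. The paper's contradiction scheme, standard in the Ekren--Touzi--Zhang path-dependent viscosity literature, avoids any infinitesimal limit entirely. The easy direction (taking $\psi=\varphi$ as its own test function) is identical in both arguments.
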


\begin{proof}
	We only prove the subsolution property.
	Assume that $\varphi$ is a viscosity subsolution to \eqref{eqn: path dependent pde}. Since   $\varphi\in \underline{\mathcal{A}}\varphi(s,\gamma)$ for $(s,\gamma)\in [0,\infty)\times\Lambda$, it follows  that $\varphi$ is a classical subsolution.
	
	Now we show that if $\varphi$ is a classical subsolution then $\varphi$ is a viscosity subsolution. Suppose $\varphi$ is not a viscosity subsolution. Then, there exists $(s,\gamma)\in [0,\infty)\times \Lambda$ and $\psi\in \underline{\mathcal{A}}\varphi(s,\gamma)$ such that 
	\begin{align}
		-M:=\partial_s \psi(s,\gamma)+\frac{1}{2}\textnormal{tr}(\sigma\sigma^{\top}\partial_{xx}\psi)(s,\gamma)+ r(s,\gamma)\partial_{x}\psi(s,\gamma)\gamma(s)+f(s,\gamma,\psi(s,\gamma))<0\,.
	\end{align}
	From the  definition of $\underline{\mathcal{A}}\varphi(s,\gamma)$, there exists $\tau\in \mathcal{T}^s_{s+1,+}$ such that
	\begin{align}
		0=\psi(s,\gamma)-\varphi(s,\gamma)=\min_{\tilde{\tau}\in \mathcal{T}_{s+1}^s}\mathbb{E}[(\psi-\varphi)(\tau\wedge\tilde{\tau},X^{s,\gamma})]\,.  
	\end{align}
	Define six processes 
	\begin{align}
		&Y^1(v)=\psi(v,X_v^{s,\gamma})\,,\,\,Z^1(v)=\sigma^{\top}(v,X_v^{s,\gamma})\partial_{x}\psi(v,X_v^{s,\gamma}),\,\\
		&Y^2(v)=\varphi(v,X_v^{s,\gamma})\,,\,\,Z^2(v)=\sigma^{\top}(v,X_v^{s,\gamma})\partial_{x}\varphi(v,X_v^{s,\gamma})\,, \,\\ 
		&\hat{Y}(v)=Y^1(v)-Y^{s,\gamma}(v)\,,\,\,\hat{Z}(v)=Z^1(v)-Z^{s,\gamma}(v)
	\end{align}
	for $s\le v\le s+1$
	and a stopping time 
	\begin{align}
		\hat{\tau}:=(s+1)\wedge\tau\wedge\inf\Big\{v>s\,\Big|\, &\partial_v \psi(v,X_v^{s,\gamma})+\frac{1}{2}\textnormal{tr}(\sigma\sigma^{\top}\partial_{xx}\psi)(v,X_v^{s,\gamma})\\&+ r(v,X_v^{s,\gamma})\partial_{x}\psi(v,X_v^{s,\gamma})X^{s,\gamma}(v)+f(v,X_v^{s,\gamma},Y^2(v))>-\frac{M}{2}\Big\}\,.
	\end{align}
	By definition,  we have $\hat{\tau}\in \mathcal{T}^s_{s+1,+}$ since $\varphi,\psi\in C^{1,2}_p([0,\infty)\times\Lambda).$
	By Theorem \ref{thm: functional ito}, we have
	\begin{align}
		Y^1(s)=\psi(s+1,X^{s,\gamma}_{s+1})-\int_s^{s+1}&\partial_u\psi(u,X_u^{s,\gamma})+\frac{1}{2}\textnormal{tr}(\sigma\sigma^{\top}\partial_{xx}\psi)(u,X_u^{s,\gamma})\\
		&+r(u,X_u^{s,\gamma})\partial_{x}\psi(u,X_u^{s,\gamma})X^{s,\gamma}(u)\,du-\int_s^{s+1}Z^1(u)\,dB(u)\,,\\
		Y^2(s)=\varphi(s+1,X^{s,\gamma}_{s+1})-\int_s^{s+1}&\partial_u\varphi (u,X_u^{s,\gamma})+\frac{1}{2}\textnormal{tr}(\sigma\sigma^{\top}\partial_{xx}\varphi)(u,X_u^{s,\gamma})\\&
		+r(u,X_u^{s,\gamma})\partial_{x}\varphi(u,X_u^{s,\gamma})X^{s,\gamma}(u)\,du-\int_s^{s+1}Z^2(u)\,dB(u)\,.
	\end{align}     
	Then
	\begin{align}
		\hat{Y}(\hat{\tau})&= \int_s^{\hat{\tau}}\partial_u \psi(u,X_u^{s,\gamma})+\frac{1}{2}\textnormal{tr}(\sigma\sigma^{\top}\partial_{xx}\psi)(u,X_u^{s,\gamma})+ r(u,X^{s,\gamma}_u)\partial_{x}\psi(u,X_u^{s,\gamma})X^{s,\gamma}(u)\,du\\
		&-\int_s^{\hat{\tau}}\partial_u \varphi(u,X_u^{s,\gamma})+\frac{1}{2}\textnormal{tr}(\sigma\sigma^{\top}\partial_{xx}\varphi)(u,X_u^{s,\gamma})+ r(u,X_u^{s,\gamma})\partial_{x}\varphi(u,X_u^{s,\gamma})X^{s,\gamma}(u)\,du\\&+\int_s^{\hat{\tau}}\hat{Z}(u)\,dB(u)\,.
	\end{align}
	We have
	\begin{align}
		&\partial_v \psi(v,X_v^{s,\gamma})+\frac{1}{2}\textnormal{tr}(\sigma\sigma^{\top}\partial_{xx}\psi)(v,X_v^{s,\gamma})+ r(v,X^{s,\gamma}_v)\partial_{x}\psi(v,X_v^{s,\gamma})X^{s,\gamma}(v)\\
		&+f(v,X_v^{s,\gamma},Y^2(v))\le-\frac{M}{2} \,\label{eqn:4}
	\end{align}
	for $v\in[s,\hat{\tau}]$ from the definition of $\tilde{\tau}$.
	Since $\varphi$ is a classical solution to \eqref{eqn: path dependent pde},
	\begin{align}
		&\partial_v \varphi(v,X_v^{s,\gamma})+\frac{1}{2}\textnormal{tr}(\sigma\sigma^{\top}\partial_{xx}\varphi)(v,X_v^{s,\gamma})+ r(v,X_v^{s,\gamma})\partial_{x}\varphi(v,X_v^{s,\gamma})X^{s,\gamma}(v)\\
		&+f(v,X_v^{s,\gamma},\varphi(v,X_v^{s,\gamma}))\ge 0\,.\label{eqn:5}
	\end{align}
	Using \eqref{eqn:4} and \eqref{eqn:5}, we obtain 
	\begin{align}
		\hat{Y}(\hat{\tau})&\le \int_s^{\hat{\tau}} -\frac{M}{2}\,du+\int_s^{\hat{\tau}}\hat{Z}(u)\,dB(u)\,,
	\end{align}
	which implies that 
	$\mathbb{E}[\hat{Y}(\hat{\tau})]<0.$	This is a contradiction since $\psi \in \underline{\mathcal{A}}\varphi(s,\gamma)$ yields that
	\begin{align}
		0=\min_{\tilde{\tau}\in \mathcal{T}^s}\mathbb{E}[(\psi-\varphi)(\tau\wedge\tilde{\tau},X^{s,\gamma})]\}\le \mathbb{E}[(\psi-\varphi) (\tau\wedge\hat{\tau},X^{s,\gamma})]=\mathbb{E}[\hat{Y}(\hat{\tau})]<0\,.
	\end{align}
\end{proof}

We now provide the proof of Theorem \ref{thm: fey}.

\begin{proof}	
	First we show that \eqref{eqn: fey bsde} has a unique solution $(Y^{s,\gamma}(v),Z^{s,\gamma}(v))_{v\ge s}$ such that $|\!|Y^{s,\gamma}|\!|\le L(1+|\!|X^{s,\gamma}|\!|^{\rho})$ for some constant $L>0.$
	The proof follows a similar approach to that of Theorem \ref{thm: BSDE no delay}, so we will only outline the main idea.
	For each $n\in \mathbb{N}$, consider the  BSDE
	\begin{align}
		Y^{n,s,\gamma}(v)=\int_v^n f(u,X^{s,\gamma}_u,Y^{n,s,\gamma}(u))\,du -\int_v^nZ^{n,s,\gamma}(u)\,dB(u)\,,\;s\le v\le n\,.
	\end{align}
	There exist positive constants $L_6,L_7,L_8,L$ such that
	\begin{align}
		&L_8<\ell\,,\\
		&\mathbb{E}_v[|\!|X^{s,\gamma}|\!|^\rho_T]\le (L_6|\!|X^{s,\gamma}|\!|^\rho_v+L_7)e^{L_8(T-v)}\,,\\
		&|Y^{n,s,\gamma}(v)-Y^{m,s,\gamma}(v)|\le L(1+|\!|X^{s,\gamma}|\!|^{\rho}_v)(e^{-(\ell-L_8)(n-v)}-e^{-(\ell-L_8)(m-v)})\label{eqn: 1}
	\end{align}
	for all  $T>0$ and $v\in [s,T].$      
	Using these inequalities, one can verify that a sequence $(Y^{n,s,\gamma})_{n\in \mathbb{N}}$ ($(Z^{n,s,\gamma})_{n\in \mathbb{N}}$, respectively) converges in $\mathbb{S}^2(0,T;\mathbb{R})$ (in $\mathbb{H}^2(0,T;\mathbb{R}^d)$, respectively) for each $T>0$. Then,  $(Y^{s,\gamma},Z^{s,\gamma}):=\lim_{n\to \infty}(Y^{n,s,\gamma},Z^{n,s,\gamma})$ is a unique solution to \eqref{eqn: fey bsde}. This proves \eqref{eqn: fey i}.

	For \eqref{eqn: fey ii}, we verify that $Y^{0,x}(v)=\varphi(v,X_v^{0,x})$ and $\varphi=\varphi(s,\gamma)$ is a viscosity solution to \eqref{eqn: fey pde}. For each $n\in \mathbb{N}$, define $\varphi^n(s,\gamma):=Y^{n,s,\gamma}(s)$. By \cite[Theorem 4.5, Theorem 4.7]{cordoni2020stochastic}, $\varphi^n$ is a continuous function and $Y^{n,0,x}(v)=\varphi^n(v,X_v^{0,x})$. Since 
	the sequence $(\varphi^n)_{n\in\mathbb{N}}$ converges locally uniformly on $[0,\infty)\times \Lambda$ 
	by     
	\eqref{eqn: 1},
	the function  $\varphi:[0,\infty)\times\Lambda\to \mathbb{R}$ given as 
	$\varphi(s,\gamma):=\lim_{n\to\infty}\varphi^n(s,\gamma)$  for $(s,\gamma)\in [0,\infty)\times\Lambda$
	is also  continuous.
	In addition,  for $\omega\in \Omega$
	\begin{align}
		\varphi(v,X_v^{0,x}(\omega))=\lim_{n\to \infty}\varphi^n(v,X_v^{0,x}(\omega))=\lim_{n\to \infty}Y^{n,0,x}(v,\omega)=Y^{0,x}(v,\omega)\,.
	\end{align}  
	Now we will demonstrate that 
	$\varphi$ is a viscosity solution to \eqref{eqn: fey pde}. For our purposes, we will focus on proving that 
	$\varphi$ is a viscosity subsolution, as a similar argument will show that 
	$\varphi$ is also a viscosity supersolution. Assume that 
	$\varphi$ is not a viscosity subsolution. We will derive a contradiction from this assumption.
	There exist $(s,\gamma)\in [0,\infty)\times \Lambda$ and $\psi\in \underline{\mathcal{A}}\varphi(s,\gamma)$ such that 
	\begin{align}
		-M:=\partial_s \psi(s,\gamma)+\frac{1}{2}\textnormal{tr}(\sigma\sigma^{\top}\partial_{xx}\psi)(s,\gamma)+ r(s,\gamma)\partial_{x}\psi(s,\gamma)\gamma(s)+f(s,\gamma,\psi(s,\gamma))<0\,.
	\end{align}
	From the definition of $\underline{\mathcal{A}}\varphi(s,\gamma)$, there exists $\tau\in \mathcal{T}^s_{s+1,+}$ such that  
	\begin{equation}
		\label{eqn:vis} 0=\psi(s,\gamma)-\varphi(s,\gamma)=\min_{\tilde{\tau}\in \mathcal{T}_{s+1}^s}\mathbb{E}[(\psi-\varphi)(\tau\wedge\tilde{\tau},X^{s,\gamma})\,.
	\end{equation}
	Let $C_f$ denote the Lipschitz constant of the function $f$ presented in \eqref{eqn:Lip_f}, i.e.,
	$$
	|f(v, \gamma, y) - f(v, \gamma, y')| \leq C_f |y - y'|
	$$
	for all $(s, \gamma) \in [0, \infty) \times \Lambda$ and $y, y' \in \mathbb{R}$.
	Define four processes 
	\begin{align}
		&Y^1(v):=\psi(v,X_v^{s,\gamma})\,,\,\,Z^1(v):=\sigma^{\top}(v,X_v^{s,\gamma})\partial_{x}\psi(v,X_v^{s,\gamma})\,,\\
		&\hat{Y}(v):=Y^1(v)-Y^{s,\gamma}(v)\,,\,\,\hat{Z}(v):=Z^1(v)-Z^{s,\gamma}(v)  
	\end{align}
	for $s\le v\le s+1$,  
	and a stopping time
	\begin{align}
		\hat{\tau}:=(s+1)\wedge\tau\wedge\inf \Big\{v>s\,\Big|\, &\partial_v \psi(v,X_v^{s,\gamma})+\frac{1}{2}\textnormal{tr}(\sigma\sigma^{\top}\partial_{xx}\psi)(v,X_v^{s,\gamma})\\&+ r(v,X^{s,\gamma}_v)\partial_{x}\psi(v,X_v^{s,\gamma})X^{s,\gamma}(v)+f(v,X_v^{s,\gamma},\psi(v,X_v^{s,\gamma}))\\&+C_f|\hat{Y}(v)|>-\frac{M}{2}\Big\}\,.
	\end{align}
	By definition, it is evident that $\hat{\tau}\in \mathcal{T}^s_{s+1,+}$.
	By Theorem \ref{thm: functional ito}, $(Y^1(v),Z^1(v))_{s\le v\le s+1}$ satisfies
	\begin{align}
		Y^1(s)=\psi(s+1,X^{s,\gamma}_{s+1})-\int_s^{s+1}&\partial_u\psi(u,X_u^{s,\gamma})+\frac{1}{2}\textnormal{tr}(\sigma\sigma^{\top}\partial_{xx}\psi)(u,X_u^{s,\gamma})\\&+r(u,X^{s,\gamma}_u)\partial_{x}\psi(u,X_u^{s,\gamma})X^{s,\gamma}(u)\,du-\int_s^{s+1}Z^1(u)\,dB(u)\,.
	\end{align}
	Then,
	\begin{align}
		\hat{Y}(s)=\hat{Y}(\hat{\tau})-\int_s^{\hat{\tau}}&\partial_u \psi(u,X_u^{s,\gamma})+\frac{1}{2}\textnormal{tr}(\sigma\sigma^{\top}\partial_{xx}\psi)(u,X_u^{s,\gamma})+ r(u,X_u^{s,\gamma})\partial_{x}\psi(u,X_u^{s,\gamma})X^{s,\gamma}(u)\\
		&+f(u,X_u^{s,\gamma},Y^{s,\gamma}(u))\,du-\int_s^{\hat{\tau}}\hat{Z}(u)\,dB(u)\,.
	\end{align} 
	Observe that $\hat{Y}(s)=0$ and
	\begin{align}
		&\partial_v \psi(v,X_v^{s,\gamma})+\frac{1}{2}\textnormal{tr}(\sigma\sigma^{\top}\partial_{xx}\psi)(v,X_v^{s,\gamma})+ r(v,X^{s,\gamma}_v)\partial_{x}\psi(v,X_v^{s,\gamma})X^{s,\gamma}(v)\\
		&+f(v,X_v^{s,\gamma},\psi(v,X_v^{s,\gamma}))+C_f|\hat{Y}(v)|\le-\frac{M}{2} \text{ for } v\in[s,\hat{\tau}]\,.
	\end{align}
	Thus,
	\begin{align}
		\hat{Y}(\hat{\tau})&\le \int_s^{\hat{\tau}} -\frac{M}{2}-C_f|\hat{Y}(u)|-f(u,X_u^{s,\gamma},Y^1(u))+f(u,X_u^{s,\gamma},Y^{2}(u))\,du-\int_s^{\hat{\tau}}\hat{Z}(u)\,dB(u)\,.
	\end{align}
	Observe that 
	$\mathbb{E}[\hat{Y}(\hat{\tau})]=\mathbb{E}[ \psi(\hat{\tau},X_{\hat{\tau}})-\varphi(\hat{\tau},X_{\hat{\tau}}))] \ge0$, which is derived from \eqref{eqn:vis}, and
	$$-C_f|\hat{Y}(u)|-f(u,X_u^{s,\gamma},Y^1(u))+f(u,X_u^{s,\gamma},Y^{2}(u))<0\,.$$  This gives a contradiction. 
	
	For \eqref{eqn: fey iii}, we suppose $\varphi\in C_\rho^{1,2}([0,\infty)\times\Lambda)$.  Theorem \ref{thm: functional ito}  and \eqref{eqn: fey pde} imply
	\begin{align}
		\varphi(v,X_v^{0,x})&=\varphi(T,X_T^{0,x})-\int_v^T \partial_u\varphi(u,X_u^{0,x})+\frac{1}{2}\textnormal{tr}(\sigma\sigma^{\top}\partial_{xx}\varphi)(u,X_u^{0,x})\\
		&\quad\quad\quad\quad\quad\quad\quad+r(u,X_u^{0,x})\partial_{x}\varphi(u,X_u^{0,x})X^{0,x}(u)\,du-\int_v^T\sigma^{\top}\partial_{x}\varphi(u,X_u^{0,x})\,dB(u)\\
		&=\varphi(T,X_T^{0,x})+\int_v^T f(u,X_u^{0,x},\varphi(s,X_s^{0,x}))\,du-\int_v^T\sigma^{\top}\partial_{x}\varphi(u,X_u^{0,x})\,dB(u)\,.
	\end{align}
	It is evident that $|\varphi(v,X_v^{0,x})|\le L(1+|\!|X^{0,x}|\!|^\rho_v)$ since $\varphi\in C_\rho^{1,2}([0,\infty)\times\Lambda)$.
	The uniqueness of  Theorem \ref{thm: BSDE no delay} gives the desired result.
\end{proof}

\section{Proofs of the main results in Section  \ref{sec:inst}}
\label{app:fundingrate}

We first state the functional It\^o formula,  which is used to show Theorem \ref{thn:main}. The proof of this formula is in \cite{bally2016stochastic}.
\begin{theorem}\label{thm: functional ito}
	Let $X$ be a continuous semimartingale. Then for any $F\in C^{1,2}_{p}([0,\infty)\times\Lambda)$, we have
	\begin{align}
		F(s,X_s)=F(0,X_0)+\int_0^s \partial_u F(u,X_u)\,du+\int_0^s \partial_x F(u,X_u)\,dX(u)+\frac{1}{2}\int_0^s \textnormal{tr}\partial^2_{xx}F(u,X_u)\,d[X](u)\,.
	\end{align}
\end{theorem}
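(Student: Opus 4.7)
The plan is to follow the classical proof scheme of the functional It\^o formula, as in Dupire's original argument and the rigorous treatments of \cite{bally2016stochastic} and \cite{cont2013functional}. The idea is to introduce a refining sequence of partitions $\pi_n=\{0=t_0^n<t_1^n<\cdots<t_{N_n}^n=s\}$ with mesh tending to zero, telescope the increment $F(s,X_s)-F(0,X_0)$ along the partition, and split each term of the telescope into a \emph{horizontal} piece that freezes the path and advances only time, and a \emph{vertical} piece that perturbs the current value with time held fixed. Writing $\tilde X^{u}_{t_i}$ for the path on $[0,u]$ equal to $X$ on $[0,t_i]$ and constantly equal to $X(t_i)$ on $[t_i,u]$ (for $u\in[t_i,t_{i+1}]$), the decomposition reads
\begin{align}
F(s,X_s)-F(0,X_0)=\sum_i \bigl(F(t_{i+1},\tilde X^{t_{i+1}}_{t_i})-F(t_i,X_{t_i})\bigr)+\sum_i\bigl(F(t_{i+1},X_{t_{i+1}})-F(t_{i+1},\tilde X^{t_{i+1}}_{t_i})\bigr).
\end{align}

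For the first (horizontal) sum, each increment involves only the time variable on a frozen path, so the fundamental theorem of calculus applied to the horizontal derivative gives $F(t_{i+1},\tilde X^{t_{i+1}}_{t_i})-F(t_i,X_{t_i})=\int_{t_i}^{t_{i+1}}\partial_s F(u,\tilde X^{u}_{t_i})\,du$, and the resulting Riemann-type sum converges to $\int_0^s\partial_sF(u,X_u)\,du$ because $\partial_sF$ is continuous in the pseudometric $d$ and $d((u,\tilde X^{u}_{t_i}),(u,X_u))\to 0$ uniformly on $[0,s]$ a.s.\ as the mesh vanishes. For the second (vertical) sum, I would view each term as the increment of the finite-dimensional map $y\mapsto F(t_{i+1},\tilde X^{t_i}_{t_i}+(y-X(t_i))\mathbf 1_{[t_i,t_{i+1}]})$ as $y$ runs from $X(t_i)$ to $X(t_{i+1})$. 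Applying the classical vector-valued It\^o formula to this map along the semimartingale $(X(u))_{u\in[t_i,t_{i+1}]}$ produces the stochastic integral of $\partial_xF$ against $dX$ and half of the trace integral of $\partial_{xx}F$ against $d[X]$ on $[t_i,t_{i+1}]$, up to endpoint/freezing remainders. Summing over $i$ and sending the mesh to zero yields the last two terms of the claimed identity.

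The hard part is controlling the passage to the limit, since the approximating integrands use frozen paths $\tilde X^{u}_{t_i}$ while the target integrands use the true path $X_u$, and the stochastic integral term is particularly delicate. To handle this I would first localize by stopping times $\tau_k=\inf\{u:|X(u)|+[X](u)\ge k\}$ to reduce to the case where $X$ and $[X]$ are uniformly bounded on $[0,s\wedge\tau_k]$. On this set, joint continuity of $\partial_xF$ and $\partial_{xx}F$ in $d$ gives uniform convergence $\partial_xF(u,\tilde X^{u}_{t_i})\to\partial_xF(u,X_u)$ and analogously for $\partial_{xx}F$; combined with the It\^o isometry and dominated convergence one obtains convergence in $L^2$ of the stochastic integrals and in $L^1$ of the Lebesgue integrals. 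Finally, one removes the localization using the polynomial growth of order $p$ of $F$ and its derivatives together with $X\in\mathbb{S}^p(0,\infty;\mathbb R^m)$ established in Proposition~\ref{prop:X_esti}, which supplies the uniform integrability needed to pass from the stopped formula on $[0,s\wedge\tau_k]$ to $[0,s]$ by letting $k\to\infty$.
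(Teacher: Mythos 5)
The paper does not prove this theorem; it states it and cites Bally, Caramellino, and Cont (2016). Your proposal reconstructs the standard Dupire / Cont--Fourni\'e proof via a horizontal--vertical telescope, so you are aiming the right way, but there is a genuine gap in the vertical step that is precisely the technical heart of the rigorous proofs.

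Your telescope runs over the actual path $X$, and you compute the vertical increment $F(t_{i+1},X_{t_{i+1}})-F(t_{i+1},X_{t_i})$ by applying It\^o to the map $g_i(y)=F\bigl(t_{i+1},X_{t_i}+(y-X(t_i))\mathbf 1_{[t_i,t_{i+1}]}\bigr)$. This fails for two reasons. First, $g_i(X(t_{i+1}))\ne F(t_{i+1},X_{t_{i+1}})$: the argument of $g_i(X(t_{i+1}))$ is constant on $[t_i,t_{i+1}]$ whereas $X_{t_{i+1}}$ carries the actual trajectory there; the resulting per-step ``freezing'' remainder is a mid-path perturbation, neither horizontal nor vertical in Dupire's sense, so $C^{1,2}_p$ regularity supplies no rate and no reason the sum of remainders vanishes. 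Second, $g_i'$ is the derivative of $F(t_{i+1},\cdot)$ in the direction $\mathbf 1_{[t_i,t_{i+1}]}$, which is not the Dupire vertical derivative $\partial_xF$ (the latter perturbs $\mathbf 1_{[t_{i+1},\infty)}$, i.e.\ only the endpoint). Already for $F(s,\gamma)=\int_0^s\gamma(u)\,du$ one has $\partial_xF\equiv 0$ but $g_i'=t_{i+1}-t_i$, so the integrand fed into the classical It\^o formula is not the one in the statement. Your localization paragraph addresses integrability, not these mesh-refinement errors.

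The standard fix, used in the reference the paper cites, is to replace $X$ \emph{throughout} the telescope by a c\`adl\`ag piecewise-constant approximation $X^n$ built from the partition and decompose $F(T,X^n_T)-F(0,X^n_0)$. Then each vertical increment is a genuine point perturbation at $t_{i+1}$ (the jump of $X^n$), the interpolating map hits both endpoints exactly, its derivative is exactly $\partial_xF$, and all approximation error is pushed into a single limit $X^n\to X$ at the end, for which continuity of $F$ and its Dupire derivatives in $d$ together with your localization and polynomial-growth arguments do suffice. If you redo the telescope around $X^n$ instead of $X$, the rest of your outline goes through.
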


The proof of Theorem \ref{thn:main} is as follows.
\begin{proof}
	For (i), we first show that the map $y_1\mapsto H(y_1,0)$ has linear growth. More precisely,
	there exists a constant $C>0$ such that $|H(y_1,0)|\le C|y_1|$ for all $y_1\in \mathbb{R}$.
	From \eqref{label: 2-2} in Assumption \ref{Hypo:H},
	the function 
	$H$ can be written as  $H(y_1,y_2)=g(y_1,y_2)(y_1-y_2)$
	for some  function $g:\mathbb{R}^2\to\mathbb{R}$.
	Thus, it suffices to show that  the map $y_1\mapsto g(y_1,0)$ is bounded. By \eqref{label: 2-3} in Assumption \ref{Hypo:H}, there exists a constant $C_H>0$ such that $$|(y_1-y_2)g(y_1,y_2)-(y_1-y_2')g(y_1,y_2')|=|H(y_1,y_2)-H(y_1,y_2')|\le C_H|y_2-y_2'|$$ for all $y_1,y_2,y_2'\in \mathbb{R}$. Setting $y_1=y_2$ and $y_2'=0$  yields $|y_1g(y_1,0)|\le C_H|y_1|$, which implies $|g(y_1,0)|\le C_H $ for $y_1\neq0.$ 
	Hence, the map $y_1 \mapsto g(y_1, 0)$ is bounded, and the desired linear growth condition for $H(y_1, 0)$ follows.
	
	Now observe that
	$$\Phi(s,\gamma,0)=H(\varphi(s,\gamma),0)
	-\partial_s \varphi(s,\gamma)-\frac{1}{2}\textnormal{tr}(\sigma\sigma^{\top}\partial_{xx}\varphi)(s,\gamma)- r(s,\gamma)\partial_{x}\varphi(s,\gamma)\gamma(s)\,.$$
	The terms $H(\varphi(s,\gamma),0)$, $\partial_s \varphi(s,\gamma)$, 
	$\textnormal{tr}(\sigma\sigma^{\top}\partial_{xx}\varphi)(s,\gamma)$, 
	$r(s,\gamma)\partial_{x}\varphi(s,\gamma)\gamma(s)$
	have polynomial growth  of orders $p$, $p$, $p+2$, $p+1$ at most, respectively, in $(s,\gamma)$. 
	This implies that the map   $(s,\gamma) \mapsto \Phi(s,\gamma,0)$ has polynomial growth. There exist constants \(\rho \ge 1\) and $C_\Phi>0$  be constants such that
	\(
	|\Phi(s,\gamma,0)| \leq C_{\Phi} ( 1 + |\!|\gamma|\!|_s^\rho ) \)
	for all \((s, \gamma) \in [0, \infty) \times {\Lambda}\).  
	One may choose $\rho\ge p.$

	We now show (ii). Recall the BSDE \eqref{eqn:BSDE} 
	\begin{align}
		Y(s)
		&=Y(T)-\int_s^T(r(u,X_u)Y(u)-\Phi(u,X_u,Y(u)))\,du-\int_s^TZ(u)\,dB(u)\,.
	\end{align}
	By Theorem \ref{thm: BSDE no delay},
	this BSDE 
	has a unique solution $(Y,Z)$ in $\mathbb{S}^2(0,\infty;\mathbb{R})\times\mathbb{H}^2(0,\infty;\mathbb{R}^m)$ such that  $|Y|\le L(1+|\!|X|\!|^\rho)$ for some positive constant $L$.
	Define 
	$$(Y',Z')=(\varphi(s,X_s),(\partial_x \varphi\,\sigma)(s,X_s))_{s\ge0}\,,$$
	then $(Y',Z')$ is also a solution to the BSDE \eqref{eqn:BSDE}  because 
	\begin{equation}
		\begin{aligned}
			Y'(s)&=Y'(T)+\int_s^T-\partial_u \varphi(u,X_u)-\frac{1}{2}\textnormal{tr}(\sigma\sigma^{\top}\partial_{xx}\varphi)(u,X_u)- r(u,X_u)\partial_{x}\varphi(u,X_u)X(u)\,du\\ 
			&\quad\quad\quad\quad-\int_s^TZ'(u)\,dB(u)\\
			&=Y'(T)+\int_s^TH(\varphi(u,X_u),Y'(u))-\partial_u \varphi(u,X_u)\\
			&\quad\quad\quad\quad-\frac{1}{2}\textnormal{tr}(\sigma\sigma^{\top}\partial_{xx}\varphi)(u,X_u)- r(u,X_u)\partial_{x}\varphi(u,X_u)X(u)\,du-\int_s^TZ'(u)\,dB(u)\\
			&=Y'(T)-\int_s^Tr(u,X_u)Y'(u)-\Phi(u,X_u,Y'(u)) \,du-\int_s^TZ'(u)\,dB(u)\,.
		\end{aligned}
	\end{equation}
	As $\varphi\in C_p^{1,2}([0,\infty)\times {\Lambda})$, we know that
	$(Y',Z')\in\mathbb{S}^2(0,\infty;\mathbb{R})\times\mathbb{H}^2(0,\infty;\mathbb{R}^m)$ 
	and $|Y'|\le L(1+|\!|X|\!|^p)$ for some positive constant $L$. Since $p\le \rho,$ by the uniqueness in Theorem \ref{thm: BSDE no delay}, we have 
	$Y=Y'$ and $Z=Z'.$
\end{proof}

The proof of Corollary \ref{cor:main} is as follows.
\begin{proof}
	This is directly obtained by showing 
	\(
	|\Phi(s,\gamma,0)| \leq C_{\Phi} ( 1 + |\!|\gamma|\!|_s^{p+2} ) \)
	for all \((s, \gamma) \in [0, \infty) \times {\Lambda}\). 
	Because the terms $H(\varphi(s,\gamma),0)$, $\partial_s \varphi(s,\gamma)$, 
	$\textnormal{tr}(\sigma\sigma^{\top}\partial_{xx})\varphi(s,\gamma)$, 
	$r(s,\gamma)\partial_{x}\varphi(s,\gamma)\gamma(s)$
	have polynomial growth  of orders $p$, $p$, $p+2$, $p+1$ at most,  respectively, the map
	$\Phi(s,\gamma,0)$ has polynomial growth of order $p+2$ at most.
\end{proof}

\section{Proofs of the main results in Section  \ref{section: delay}}
\label{app:delay}

\subsection{Finite-horizon delayed BSDEs}\label{C.1}

In this section, we study finite-horizon delayed BSDEs. The following proposition will be used to prove \eqref{thm:delay_2} in Theorem \ref{thm:approx}.

\begin{theorem}\label{prop: delay} Let Assumptions \ref{hypo: SDE},\ref{sharpe} and \ref{hypo: BSDE delay} hold. 
	Then  for any $T>0$ and  $\xi\in L^2(\mathcal{F}_T;\mathbb{R})$,
	there exists a unique solution $(Y,Z)$ in $\mathbb{S}^2(0,T;\mathbb{R})\times \mathbb{H}^2(0,T;\mathbb{R}^m)$ to the BSDE
	\begin{align} 
		Y(s)&=\xi+\int_s^Tg(u,X_u)-r Y(u)-\frac{\ell-r}{\delta}\int_{u-\delta}^uY(v)\,dv\,du\\
		&\quad-\int_s^TZ(u)\,dB(u)\,,\;0\le s\le T\,.\label{eqn: BSDE}
	\end{align}
	In particular, if $\xi\equiv0$ then there is a constant $L>0$ depending only on $C_1,C_3,C_\Phi, r, \ell$ such that $|Y(s)|\le L(1+|\!|X|\!|^{\rho}_s)$  for all $s\in [0,T]$ where $\rho$ is the constant in Assumption \ref{hypo: BSDE delay}.
\end{theorem}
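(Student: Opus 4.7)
The plan is to cast \eqref{eqn: BSDE} as a fixed-point equation on $\mathbb{S}^2(0,T;\mathbb{R})$ via a linear Picard iteration that freezes the delay term. For each $y\in\mathbb{S}^2(0,T;\mathbb{R})$, extended to $[-\delta,0]$ by $y(v):=y(0)$, the equation
\begin{align}
 Y(s)=\xi+\int_s^T\!\Big(g(u,X_u)-rY(u)-\frac{\ell-r}{\delta}\int_{u-\delta}^u y(v)\,dv\Big)du-\int_s^T Z(u)\,dB(u)
\end{align}
is a standard Lipschitz BSDE with terminal value $\xi\in L^2(\mathcal F_T;\mathbb{R})$ and driver Lipschitz in $Y$ with constant $|r|$. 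Its free term lies in $\mathbb{H}^2(0,T;\mathbb{R})$ because $|g(u,X_u)|\le C_\Phi(1+|\!|X|\!|_u^\rho)$ and $X\in\mathbb{S}^p(0,T;\mathbb{R}^m)$ for every $p\ge 1$ by Proposition~\ref{prop:X_esti}. The Pardoux-Peng theorem yields a unique $(Y^y,Z^y)\in\mathbb{S}^2\times\mathbb{H}^2$, so the map $\Gamma(y):=Y^y$ is well-defined, and the theorem reduces to showing $\Gamma$ has a unique fixed point.

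For the contraction I would take $y^1,y^2\in\mathbb{S}^2$ and set $\bar Y:=Y^{y^1}-Y^{y^2}$, $\bar Z:=Z^{y^1}-Z^{y^2}$, $\bar y:=y^1-y^2$. Since $\bar Y(T)=0$, applying It\^o's formula to $e^{\beta s}|\bar Y(s)|^2$ for a weight $\beta>0$ and taking expectations produces
\begin{align}
\mathbb{E}\!\int_0^T\!\!e^{\beta s}\big((\beta+2r)|\bar Y(s)|^2+|\bar Z(s)|^2\big)ds=-2\mathbb{E}\!\int_0^T\!\!e^{\beta s}\bar Y(s)\frac{\ell-r}{\delta}\!\int_{s-\delta}^s\!\!\bar y(v)\,dv\,ds.
\end{align}
Young's inequality on the cross term (with parameter chosen in tandem with $\beta$), Cauchy-Schwarz to bound $\bigl|\int_{s-\delta}^s\bar y(v)\,dv\bigr|^2\le\delta\int_{s-\delta}^s|\bar y(v)|^2\,dv$, and Fubini in the form $\int_0^T\!e^{\beta s}\!\int_{s-\delta}^s\!h(v)\,dv\,ds\le\tfrac{e^{\beta\delta}-1}{\beta}\!\int_{-\delta}^T\!e^{\beta v}h(v)\,dv$ lead to an inequality of the shape $\|\bar Y\|_{\beta}^2\le\tfrac{1}{3}\tfrac{e^{\beta\delta}-1}{\beta\delta}\|\bar y\|_{\beta}^2$ modulo a boundary term produced by the constant extension on $[-\delta,0]$. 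The choice $\beta:=|6(\ell-r)^2-2\ell+2|$ is forced by calibrating the Young parameter against the drift coefficient $2r$ so that the coefficient of $|\bar Y|^2$ on the left is $\ge 1$ after absorption, and Assumption~\ref{hypo: BSDE delay}(ii) guarantees the resulting prefactor is strictly less than $1$. A Banach fixed point in $(\mathbb{S}^2,\|\cdot\|_\beta)$, followed by a standard BDG upgrade from the weighted $\mathbb{H}^2$ norm to the $\mathbb{S}^2$ sup norm, gives the unique solution $(Y,Z)$.

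For the polynomial bound when $\xi\equiv 0$, I would apply It\^o to $e^{-rs}Y(s)$ to obtain the representation
\begin{align}
 Y(s)=\mathbb{E}_s\!\int_s^T\!e^{-r(u-s)}\Big(g(u,X_u)-\frac{\ell-r}{\delta}\!\int_{u-\delta}^u\!Y(v)\,dv\Big)du,
\end{align}
substitute $|g(u,X_u)|\le C_\Phi(1+|\!|X|\!|_u^\rho)$ together with the conditional moment estimate \eqref{eqn: path dep SDE mean_app}, and close a Gronwall-type iteration on the ratio $|Y(s)|/(1+|\!|X|\!|_s^\rho)$. The second inequality in Assumption~\ref{hypo: BSDE delay}(ii), $e^\rho\bigl((\ell-r)^2+\tfrac{1}{2}|\ell-r|\ell+2|\ell-r|\bigr)\delta<1$, is exactly what allows absorbing the delay integral into the left-hand side at each iteration, while Assumption~\ref{hypo: BSDE delay}(i) plays the same role as the condition $\ell>\inf_{K>0}(K+\cdots)\rho$ in Theorem~\ref{thm: BSDE no delay}, providing enough ``effective decay'' to override the polynomial growth of $\mathbb{E}_s[|\!|X|\!|_u^\rho]$. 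Equivalently one may apply the contraction of the previous paragraph inside the closed invariant set $\mathcal{K}_L:=\{y:|y(s)|\le L(1+|\!|X|\!|_s^\rho)\text{ a.s.\ for } s\in[0,T]\}$ for $L$ depending only on $C_1,C_3,C_\Phi,r,\ell$, verify $\Gamma(\mathcal{K}_L)\subset\mathcal{K}_L$, and place the fixed point in $\mathcal{K}_L$.

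The main obstacle is the exact constant tracking in the contraction step. The factor $\tfrac{1}{3}$ and the expression $6(\ell-r)^2-2\ell+2$ appearing inside the exponential point to a particular Young split and to simultaneously absorbing the drift term $2r$ and the absorption slack of size $1$ on the left-hand side, so that the coefficient of $|\bar y|^2$ after Fubini lands at precisely $\tfrac{1}{3}$. Handling the boundary contribution from the constant extension $\bar y(v)=\bar y(0)$ on $[-\delta,0]$ is the most delicate bookkeeping item, but it can be absorbed into the same contraction constant at the cost of a routine modification of $\beta$. Everything else is standard BSDE estimation in the Delong-Imkeller spirit, adapted to the specific driver appearing here.
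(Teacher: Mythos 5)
Your existence-and-uniqueness argument is essentially the paper's, with one variation in how the Picard map freezes the delay: the paper keeps the full damping $-\ell Y(u)$ in the linearized BSDE and freezes only the centred term $\frac{\ell-r}{\delta}\int_{u-\delta}^u(U(v)-U(u))\,dv$, whereas you keep only $-rY(u)$ and freeze the whole integral. Your contraction still closes under Assumption \ref{hypo: BSDE delay}: with the Young split $2ab\le 3a^2+\tfrac13 b^2$ the coefficient of $|\bar Y|^2$ on the left becomes $\beta+2r-3(\ell-r)^2=3(\ell-r)^2-2(\ell-r)+2\ge\tfrac{5}{3}>1$ for $\beta=6(\ell-r)^2-2\ell+2$, and the Fubini factor is the same $\tfrac13\tfrac{e^{|\beta|\delta}-1}{|\beta|\delta}<1$, with the boundary term handled exactly as in the paper by including $|y(0)|^2$ in the weighted norm. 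So the first half is fine.

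The gap is in the bound $|Y(s)|\le L(1+|\!|X|\!|_s^\rho)$ for $\xi\equiv0$, where $L$ must be independent of $T$ (this uniformity is what the paper later uses to let $n\to\infty$ in the proof of Theorem \ref{thm:approx}\eqref{thm:delay_2}). Your representation discounts only at rate $r$, but $\mathbb{E}_s[|\!|X|\!|_u^\rho]$ grows like $e^{L_8(u-s)}$ with $L_8$ of order $\big(\inf_{K>0}(K+\tfrac12(\tfrac{r}{\sqrt{2K}}+M_{\rho\vee2}C_3)^2)\big)\rho$, which in general exceeds $r$; then $\int_s^Te^{(L_8-r)(u-s)}\,du$ diverges with $T$, so no Gronwall iteration on $|Y(s)|/(1+|\!|X|\!|_s^\rho)$ can yield a $T$-free constant, and the same obstruction defeats invariance of your set $\mathcal{K}_L$ under your $\Gamma$. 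To get usable decay one must first write $\frac{\ell-r}{\delta}\int_{u-\delta}^uY(v)\,dv=(\ell-r)Y(u)+\frac{\ell-r}{\delta}\int_{u-\delta}^u(Y(v)-Y(u))\,dv$ and absorb $(\ell-r)Y(u)$ into an $e^{-\ell(u-s)}$ discount (which does beat $e^{L_8(u-s)}$ since $L_8<\ell-1$) --- precisely the decomposition you discarded. Even then, bounding the residual $\tfrac1\delta\int_{u-\delta}^u|Y(v)-Y(u)|\,dv$ crudely by $2\sup|Y|$ over all of $[s,T]$ feeds a term of size roughly $\tfrac{2|\ell-r|}{\ell}a_k$ into the recursion for the growth coefficient $a_{k+1}$, which is not $O(\delta)$ and need not contract. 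The paper's resolution is a second inductive quantity, the time-regularity estimate $|\mathbb{E}_s[Y^k(s_1)-Y^k(s_2)]|\le(s_2-s_1)(\tilde a_k\mathbb{E}_s[|\!|X|\!|_{s_2}^\rho]+\tilde b_k)$, used on $[s+\delta,T]$ to gain a factor $\delta$, combined with the crude bound on $[s,s+\delta]$ which gains $\delta$ from the interval length; this is exactly where the condition $e^\rho((\ell-r)^2+\tfrac12|\ell-r|\ell+2|\ell-r|)\delta<1$ enters. That two-quantity induction is the substance of the second half of the theorem and is absent from your proposal.
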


\begin{proof}
	
	We prove the existence and uniqueness of solutions to \eqref{eqn: BSDE} through the Banach fixed-point theorem. Define a map $\Gamma:\mathbb{S}^2(0,T;\mathbb{R})\to \mathbb{S}^2(0,T;\mathbb{R})$ by $\Gamma(U)=Y$ that satisfies 
	\begin{align}
		Y(s)=\mathbb{E}\Big[\xi+\int_s^Tg(u,X_u)-\ell Y(u)-\frac{\ell-r}{\delta}\int_{u-\delta}^uU(v)-U(u)\,dv\,du\Big| \mathcal{F}_s\Big]\,,\;0\le s\le T\,.
	\end{align}
	To verify that this map $\Gamma$ is well-defined, it suffices to check that the BSDE
	\begin{equation}
		\begin{aligned}
			Y(s)=\xi+\int_s^Tg(u,X_u)-\ell Y(u)&-\frac{\ell-r}{\delta}\int_{u-\delta}^uU(v)-U(u)\,dv\,du\\
			&\quad\quad -\int_s^TZ(u)\,dB(u)\,,\;0\le s\le T
		\end{aligned}
	\end{equation}
	has a unique solution $(Y,Z)$ in $\mathbb{S}^2(0,T;\mathbb{R})\times\mathbb{H}^2(0,T;\mathbb{R}^m)$.
	This is directly obtained from  \cite[Theorem 4.2.1, Theorem 4.3.1]{zhang2017backward} because the process $$\big(g(s,X_s)-\frac{\ell-r}{\delta}\int_{s-\delta}^sU(v)-U(u)\,dv\big)_{0\le s\le T}$$ is in $\mathbb{H}^2(0,T;\mathbb{R})$ 
	for $U\in \mathbb{S}^2(0,T;\mathbb{R})$, which is easily observed by 
	\begin{align}
		&\int_0^T\int_{u-\delta}^u|U(v)|^2\,dv\,du\le T\delta|\!|U|\!|_T\,.
	\end{align} 
	
	Now we verify that $\Gamma$ is a contraction map with respect to the norm 
	\begin{align}
		|\!|Y|\!|:=\Big(\mathbb{E}\Big[|Y(0)|^2+\int_0^Te^{M u}|Y(u)|^2\,du\Big]\Big)^{\frac{1}{2}}
	\end{align}
	where $M:=2-2\ell+6(\ell-r)^2$.
	Let $U,U'\in  \mathbb{S}^2(0,T;\mathbb{R})$ and $Y=\Gamma(U)$, $Y'=\Gamma(U')$.
	For simplicity, we define
	$$\hat{U}(u)=U(u)-U'(u)\,,\;
	\hat{Y}(u)=Y(u)-Y'(u)\,,\; \hat{Z}(u)=Z(u)-Z'(u)\,.$$
	Applying It\^{o}'s formula to $e^{Ms}|\hat{Y}(s)|^2$, we have
	\begin{align}
		&\quad|\hat{Y}(0)|^2+\int_0^T(M+2\ell) e^{Mu}|\hat{Y}(u)|^2\,du+\int_0^Te^{M u}|\hat{Z}(u)|^2\,du\\
		&=-\int_0^T\frac{2(\ell-r)}{\delta}e^{M u}\hat{Y}(u)\int_{u-\delta}^u\hat{U}(v)-\hat{U}(u)\,dv\,du-\int_0^T2e^{M u}\hat{Y}(u)\hat{Z}(u)\,dB(u)\,.
	\end{align}
	From the inequality $3ab\leq 3 a^2+\frac{1}{3}b^2$ for all $a,b>0$ and the Jensen inequality, it follows that
	\begin{align}
		&\quad|\hat{Y}(0)|^2+\int_0^T2e^{M u}|\hat{Y}(u)|^2\,du+\int_0^Te^{M u}|\hat{Z}(u)|^2\,du+\int_0^T2e^{M u}\hat{Y}(u)\hat{Z}(u)\,dB(u)\\
		&\le \int_0^T\frac{1}{3}e^{M u}|\hat{U}(u)|^2+\frac{e^{M u}}{3\delta}\int_{u-\delta}^u|\hat{U}(v)|^2\,dv\,du\,.
	\end{align}
	Observe that
	\begin{align}
		\int_0^T\frac{e^{M u}}{\delta}\int_{u-\delta}^u|\hat{U}(v)|^2\,dv\,du&=\int_0^T\frac{e^{M u}}{\delta}\int_{-\delta}^0|\hat{U}(v+u)|^2\,dv\,du\\
		&=\int_{-\delta}^0\int_0^T\frac{e^{M u}}{\delta}|\hat{U}(v+u)|^2\,du\,dv\\
		&=\int_{-\delta}^0\int_v^{T+v}\frac{e^{M(u-v)}}{\delta}|\hat{U}(u)|^2\,du\,dv\\
		&\le \int_{-\delta}^0 e^{-M v} \,dv (\int_{-\delta}^{0}\frac{e^{M u}}{\delta}|\hat{U}(0)|^2\,du+\int_0^T\frac{e^{M u}}{\delta}|\hat{U}(u)|^2\,du)\\
		&\le \frac{e^{|M|\delta}-1}{|M|\delta}(|\hat{U}(0)|^2+\int_0^Te^{M u}|\hat{U}(u)|^2\,du)\,.
	\end{align}
	Then we obtain
	\begin{align}
		&\mathbb{E}\Big[|\hat{Y}(0)|^2+\int_0^T 2e^{M u}|\hat{Y}(u)|^2\,du+\int_0^Te^{M u}|\hat{Z}(u)|^2\,du\Big]\\
		&\le \frac{1}{3}\mathbb{E}\Big[\int_0^Te^{Mu}|\hat{U}(u)|^2\,du\Big]+\frac{e^{M \delta}-1}{3M \delta}\mathbb{E}\Big[|\hat{U}(0)|^2+\int_0^Te^{M u}|\hat{U}(u)|^2\,du\Big] \,\\
		&\le \frac{e^{|M|\delta}-1}{3|M|\delta}\mathbb{E}\Big[|\hat{U}(0)|^2+2\int_0^Te^{M u}|\hat{U}(u)|^2\,du\Big]\,.
	\end{align}
	Thus, the map $\Gamma$ is a contraction map.

	We now construct a solution   
	$(Y,Z)$ to \eqref{eqn: BSDE}  in $\mathbb{S}^2(0,T;\mathbb{R})\times\mathbb{H}^2(0,T;\mathbb{R}^m)$.
	Because $\Gamma$ is a contraction map, by the Banach fixed point theorem, there exists a unique fixed point  $Y$ in the completion of $\mathbb{S}^2(0,T;\mathbb{R})$ with respect to the norm $|\!|\cdot|\!|$. Then 
	\begin{align}
		Y(s)=\mathbb{E}\Big[\xi+\int_s^Tg(u,X_u)-r Y(u)-\frac{\ell-r}{\delta}\int_{u-\delta}^uY(v)\,dv\,du\Big|\mathcal{F}_s\Big]\,,\;0\le s\le T\,.
	\end{align}
	It can be shown that the process inside the conditional expectation belongs to $\mathbb{H}^2(0,T;\mathbb{R}^m)$
	using that
	\begin{align}
		&\quad\int_0^T\frac{1}{\delta}\int_{u-\delta}^u|Y(v)|^2\,dv\,du\le \int_0^T\frac{1}{\delta}\int_{-\delta}^T|Y(v)|^2\,dv\,du\le L(1+T)\Big(|Y(0)|^2+\int_0^T|Y(u)|^2\,du\Big)\label{eqn: 7}
	\end{align}
	for some positive constant $L$.
	The martingale representation theorem yields that there exists a unique $Z\in \mathbb{H}^2(0,T;\mathbb{R}^m)$ such that 
	\begin{align}
		Y(s)=\xi+\int_s^Tg(u,X_u)-r Y(u)-\frac{\ell-r}{\delta}\int_{u-\delta}^uY(v)\,dv\,du-\int_s^TZ(u)\,dB(u)\,,\;0\le s\le T\,.
	\end{align}
	Thus, $(Y,Z)$ satisfies \eqref{eqn: BSDE}.
	By a simple calculation, we have
	
	\begin{align}
		\mathbb{E}[|\!|Y|\!|^2_T]&\le L\mathbb{E}\Big[|\xi|^2+\int_0^T|g(u,X_u)|^2+|Y(u)|^2+\frac{1}{\delta}\int_{u-\delta}^u|Y(v)|^2\,dv\,du+\int_0^T|Z(u)|^2\,du\Big]\\
		&\le L(1+T)\mathbb{E}\Big[|\xi|^2+|Y(0)|^2+\int_0^T|g(u,X_u)|^2+2|Y(u)|^2+|Z(u)|^2\,du\Big]
	\end{align}
	for some positive constant $L$. Therefore the solution $(Y,Z)$ belongs to $\mathbb{S}^2(0,T;\mathbb{R})\times\mathbb{H}^2(0,T;\mathbb{R}^m)$.

	We now prove that 
	if $\xi\equiv0$ then there is a constant $L>0$ depending only on $C_1,C_3,C_\Phi, r, \ell,\rho$ such that $|Y(s)|\le L(1+|\!|X|\!|^{\rho}_s)$  for all $s\ge0.$ 
	We construct a sequence of processes $(Y^k,Z^k)_{k\in\mathbb{N}}$ inductively. Define $(Y^0,Z^0)=(0,0)$
	and for $k\in \mathbb{N}$ let $(Y^k,Z^k)$ be a solution to 
	\begin{align}
		Y^k(s)&=\int_s^Tg(u,X_u)
		-\ell Y^k(u)-\frac{\ell-r}{\delta}\int_{u-\delta}^uY^{k-1}(v)-Y^{k-1}(u)\,dv\,du\\
		&\quad-\int_s^TZ^k(u)\,dB(u)\,.\label{eqn:8}
	\end{align}
	Because $Y^k=\Gamma(Y^{k-1})$ and $Y$ is a fixed point of $\Gamma,$ we know    
	$Y$ is the limit of $Y^k$. Thus it suffices to show that there exists a constant $L>0$, which is independent to $k$, such that $|Y^k(s)|\le L(1+|\!|X|\!|^{\rho}_s)$ for all $s\in [0,T].$

	To prove this, we need the inequality \eqref{eqn:bbb}  as a lemma.    From \eqref{eqn:bb}, it follows that  for any $\epsilon>0$ and $K>0$ there is a constant $L>0$ satisfying  
	\begin{align}
		(\mathbb{E}[|\!|X|\!|^\rho_T\boldsymbol{1}_A])^{\frac{1}{\rho}} \le  (e(\mathbb{E}[|\!|X|\!|^\rho_s\boldsymbol{1}_A])^{\frac{1}{\rho}} +L(\mathbb{E}[\boldsymbol{1}_A])^{\frac{1}{\rho}}) e^{(\epsilon+K+\frac{1}{2}(\frac{r}{\sqrt{2K}}+M_{\rho\vee 2}C_3)^2)(T-s)}\,.\label{eqn:31} 
	\end{align}    
	Expanding the term $
	(e(\mathbb{E}[|\!|X|\!|^\rho_s\boldsymbol{1}_A])^{\frac{1}{\rho}} +L(\mathbb{E}[\boldsymbol{1}_A])^{\frac{1}{\rho}})^\rho$ with  
	Young's inequality 
	\begin{align}
		ca^ib^{\rho-i}= (ua^i)(\frac
		{cb^{\rho-i}}{u})\le \frac{iu^{\frac{\rho}{i}}}{\rho}a^\rho+\frac{(\rho-i)c^{\frac{\rho}{\rho-i}}}{\rho u^{\frac{\rho}{\rho-i}}}b^\rho
	\end{align}
	for $a,b,c,u>0$  and $i=1,\cdots ,\rho-1$, we obtain that for any $\kappa>0$ there is $C_\kappa>0$ such that     
	\begin{align}
		\mathbb{E}[|\!|X|\!|^\rho_T\boldsymbol{1}_A] \le  ((1+\kappa)e^\rho(\mathbb{E}[|\!|X|\!|^\rho_s\boldsymbol{1}_A] +C_\kappa\mathbb{E}[\boldsymbol{1}_A]) e^{(\epsilon+K+\frac{1}{2}(\frac{r}{\sqrt{2K}}+M_{\rho\vee 2}C_3)^2)\rho(T-s)}\,.
	\end{align}
	Since $\ell>1+\inf_{K>0}(K+\frac{1}{2}(\frac{r}{\sqrt{2K}}+M_{\rho\vee 2}C_3)^2)\rho$ and $e^\rho((\ell-r)^2+\frac{1}{2}|\ell-r|\ell+2|\ell-r|)\delta<1$, for some positive constants $L_6,L_7,L_8$ satisfying $L_8<\ell-1$ and $L_6((\ell-r)^2+\frac{1}{2}|\ell-r|\ell+2|\ell-r|)\delta<1$, we have 
	\begin{align}\label{eqn:bbb}
		\mathbb{E}_s[|\!|X|\!|_T^\rho]\le (L_6|\!|X|\!|_s^\rho+L_7)e^{L_8(T-s)}\,.
	\end{align}

	We construct four sequences $(a_k)_{k\ge0},(b_k)_{k\ge0},(\tilde{a}_k)_{k\ge0},(\tilde{b}_k)_{k\ge0}$   that satisfy   
	\begin{equation}\label{eqn:ineq}
		\begin{aligned}
			&|Y^{k}(s)|\le a_{k}|\!|X|\!|^{\rho}_s+b_{k} \,,\\
			&|\mathbb{E}_s[Y^{k}(s_1)-Y^{k}(s_2)]\le (s_2-s_1)(\tilde{a}_{k}\mathbb{E}_s[|\!|X|\!|^{\rho}_{s_2}]+\tilde{b}_{k})
		\end{aligned}
	\end{equation} for all $0\le s\le s_1\le s_2$. 
	Define $a_0=b_0=\tilde{a}_0=\tilde{b}_0=0$ then \eqref{eqn:ineq} is satisfied with $Y^0=0.$    
	From \eqref{eqn:finite_esti},  there are positive constants $a_1$ and $b_1$,  depending only on $C_1,C_3, C_\Phi,r,\ell, \rho$, such that $|Y^1(s)|\le a_1|\!|X|\!|^{\rho}_s+b_1$.
	Given $a_0,b_0,\tilde{a}_0,\tilde{b}_0, a_1,b_1$, 
	we define inductively    
	\begin{align}
		a_{k+1}&=a_1+2|\ell-r| L_6\delta a_k+\frac{1}{2}|\ell-r| L_6\delta \tilde{a}_k\,,\label{eqn:17}\\
		b_{k+1}&=b_1+2|\ell-r| L_7\delta a_k+2|\ell-r| \delta b_k+\frac{1}{2}|\ell-r| L_7\delta \tilde{a}_k+\frac{1}{2}|\ell-r|\delta \tilde{b}_k\,,
	\end{align}
	and
	\begin{align}
		&\tilde{a}_{k+1}=C_\Phi+\ell a_{k+1}+2|\ell-r| a_k\,,\label{eqn: 16}\\ 
		&\tilde{b}_{k+1}=C_\Phi+\ell b_{k+1}+2|\ell-r| b_k\,.
	\end{align}
	Applying It\^o's formula,  we have
	\begin{align}
		Y^{k+1}(s)&=\mathbb{E}_s\Big[\int_s^Te^{-\ell(u-s)}g(u,X_u)-\frac{(\ell-r) e^{-\ell(u-s)}}{\delta}\int_{u-\delta}^uY^{k}(v)-Y^{k}(u)\,dv\,du\Big]\,.
	\end{align}
	Using \eqref{eqn:bbb}, each term inside the conditional expectation of the above equation can be estimated as 
	\begin{align}
		&\quad\mathbb{E}_s\Big[\int_s^Te^{-\ell(u-s)}g(u,X_u)\Big]\\
		&\le\mathbb{E}_s\Big[\int_s^Te^{-\ell(u-s)}C_\Phi(1+|\!|X|\!|^{\rho}_u)\Big] \le a_1|\!|X|\!|^{\rho}_s+b_1\,,\\
		&\quad\mathbb{E}_s\Big[\int_s^{s+\delta}\frac{(\ell-r) e^{-\ell(u-s)}}{\delta}\int_{u-\delta}^uY^{k}(v)-Y^{k}(u)\,dv\,du\Big]\\
		&\le \mathbb{E}_s\Big[\int_s^{s+\delta}2|\ell-r| e^{-\ell(u-s)}(a_k|\!|X|\!|^{\rho}_u+b_k)\,du\Big] \\
		&\le 2|\ell-r| L_6\frac{1-e^{-(\ell-L_8)\delta}}{\ell-L_8}a_k|\!|X|\!|^{\rho}_s+2|\ell-r| L_7\frac{1-e^{-(\ell-L_8)\delta}}{\ell-L_8}a_k+2|\ell-r|\frac{1-e^{-\ell\delta}}{\ell}b_k\,,\\
		&\quad\mathbb{E}_s\Big[\int_{s+\delta}^{T}\frac{(\ell-r) e^{-\ell(u-s)}}{\delta}\int_{u-\delta}^uY^{k-1}(v)-Y^{k-1}(u)\,dv\,du\Big]\\
		&\le
		\int_{s+\delta}^T\frac{1}{2}|\ell-r| e^{-\ell(u-s)}\delta(\tilde{a}_k\mathbb{E}_s[|\!|X|\!|^{\rho}_u]+\tilde{b}_k)\,du\\
		&\le \frac{|\ell-r| L_6e^{-(\ell-L_8)\delta}}{2(\ell-L_8)}\delta \tilde{a}_k|\!|X|\!|^{\rho}_s+\frac{|\ell-r| L_7e^{-(\ell-L_8)\delta}}{2(\ell-L_8)}\delta \tilde{a}_k+|\ell-r|\frac{e^{-\ell\delta}}{2\ell}\delta \tilde{b}_k\,.
	\end{align} 
	Observe that $\ell-L_8>1$ and $\frac{1-e^{a\delta}}{a}\le \delta$ for any $a>0$. 
	Thus,
	we have $$|Y^{k+1}(s)|\le a_{k+1}|\!|X|\!|^{\rho}_s+b_{k+1}\,.$$ By \eqref{eqn:8},  
	\begin{align}
		&\;\quad\big|\mathbb{E}_s[Y^{k+1}(s_1)-Y^{k+1}(s_2)]\big|\\
		&=\Big|\mathbb{E}_s\Big[\int_{s_1}^{s_2}g(u,X_u)
		-\ell Y^{k+1}(u)-\frac{\ell-r}{\delta}\int_{u-\delta}^uY^{k}(v)-Y^{k}(u)\,dv\,du\Big]\Big|\\
		&\le (s_2-s_1)((C_\Phi+\ell a_{k+1}+2|\ell-r| a_k)\mathbb{E}_s[|\!|X|\!|^{\rho}_{s_2}]+C_\Phi+\ell b_{k+1}+2|\ell-r| b_k )\,.
	\end{align}
	Thus, we obtain  $$|\mathbb{E}_s[Y^{k+1}(s_1)-Y^{k+1}(s_2)]\le (s_2-s_1)(\tilde{a}_{k+1}\mathbb{E}_s[|\!|X|\!|^{\rho}_{s_2}]+\tilde{b}_{k+1})\,.$$ 
	By substituting \eqref{eqn: 16} into \eqref{eqn:17}, we have 
	\begin{align}\label{eqn:recur}
		a_{k+1}&=a_1+\frac{1}{2}|\ell-r| L_6C_\Phi\delta+L_6(\frac{1}{2}|\ell-r|\ell+2|\ell-r|)\delta a_k+L_6(\ell-r)^2\delta a_{k-1}\,,\\
		b_{k+1}&=b_1+\frac{1}{2}|\ell-r| C_\Phi \delta+2|\ell-r| L_7\delta a_k+\frac{1}{2}|\ell-r| L_7\delta \tilde{a}_k+2|\ell-r| \delta b_k+\frac{1}{2}|\ell-r|\ell \delta b_k \\
		&\quad+|\ell-r|^2\delta b_{k-1}\,.
	\end{align}

	Now we prove the convergence of the sequence $(a_k)_{k\ge0}$ by verifying that it is  increasing and bounded above. From  the initial condition $0=a_0<a_1$ and the recursive relation, it follows that $a_0<a_1<a_2$ and 
	\begin{align}
		a_{k+2}-a_{k+1}=L_6(\frac{1}{2}|\ell-r|\ell+2|\ell-r|)\delta (a_{k+1}-a_k)+L_6(\ell-r)^2\delta (a_k-a_{k-1})\,, \text{ for }k\ge1.
	\end{align}
	This equation implies that $a_{k+2}-a_{k+1}> 0$ whenever $a_{k-1}< a_k< a_{k+1}$. Therefore by mathematical induction, the sequence $(a_k)_{k\ge 0}$ is  increasing. To show that this sequence is bounded above, recall the condition $L_6((\ell-r)^2+\frac{1}{2}|\ell-r|\ell+2|\ell-r|)\delta<1$.
	Define
	\begin{align}
		a:=\frac{a_1+\frac{1}{2}|\ell-r| L_6C_\Phi\delta}{1-L_6((\ell-r)^2+\frac{1}{2}|\ell-r|\ell+2|\ell-r|)\delta}
	\end{align}
	then $a>0$ and $a_0,a_1\le a.$
	By \eqref{eqn:recur}, if  $a_{k-1},a_k\le a$  then 
	\begin{align}
		a_{k+1}&=a_1+\frac{1}{2}|\ell-r| L_6C_\Phi\delta+L_6(\frac{1}{2}|\ell-r|\ell+2|\ell-r|)\delta a_k+L_6(\ell-r)^2\delta a_{k-1}\\
		&\le a_1+\frac{1}{2}|\ell-r| L_6C_\Phi\delta+L_6(\frac{1}{2}|\ell-r|\ell+2|\ell-r| +(\ell-r)^2)\delta a\\
		&\le a\,.
	\end{align}
	By induction, $a_k \le a$ for all $k \ge 0$, thus the sequence $(a_k)_{k\ge0}$ is bounded above by $a$.
	Since $(a_k)_{k\ge0}$ is  increasing and bounded above, it converges to a positive constant. Moreover, by the recursive relation in \eqref{eqn:recur}, it can be shown that  
	$	\lim_{k\to\infty}a_k=a.
	$ 
	Similarly, the sequences $(b_k)_{k\ge0}$ and $(\tilde{a}_k)_{k\ge0}$  also converge to positive constants. We have
	\begin{align} 
		\lim_{k\to \infty}\tilde{a}_k=C_{\Phi}+(\ell+2|\ell-r|)\frac{a_1+\frac{1}{2}|\ell-r| L_6C_\Phi\delta}{1-L_6((\ell-r)^2+\frac{1}{2}|\ell-r|\ell+2|\ell-r|)\delta} 
	\end{align}
	and
	\begin{align}\label{eqn:b_k}
		\lim_{k\to \infty}b_k=\frac{b_1+\frac{1}{2}|\ell-r| C_\Phi \delta+2|\ell-r| L_7\delta a +\frac{1}{2}|\ell-r| L_7\delta(C_\phi+(\ell+2|\ell-r|)a}{1-((\ell-r)^2+\frac{1}{2}|\ell-r|\ell+2|\ell-r|)\delta}\,.
	\end{align} 
	Taking $k\to\infty$ to the first inequality in  \eqref{eqn:ineq},
	we obtain the desired results.
\end{proof}

The above proof verifies the convergence of the Picard iteration 
$(Y^k)_{k\in\mathbb{N}}$ 
to the unique solution $Y$ of the BSDE \eqref{eqn: BSDE}
through the Banach fixed point theorem. 
It can also be shown that the sequence \((Z^k)_{k \in \mathbb{N}}\) converges to the unique solution \(Z\) of the BSDE. However, we omit the convergence of the sequence $(Z^k)_{k \in \mathbb{N}}$ in the following corollary, as it is not required for our purposes.
\begin{corollary} \label{cor:banach}
	Let $(Y^k,Z^k)_{k\in\mathbb{N}}$ be the
	Picard iteration for the BSDE \eqref{eqn: BSDE}. More precisely, define $(Y^0,Z^0)=(0,0)$
	and for $k\in \mathbb{N}$ let $(Y^k,Z^k)$ be a solution to 
	\begin{align}
		Y^k(s)=\int_s^Tg(u,X_u)
		-\ell Y^k(u)-\frac{\ell-r}{\delta}\int_{u-\delta}^uY^{k-1}(v)-Y^{k-1}(u)\,dv\,du-\int_s^TZ^k(u)\,dB(u)\,.\label{eqn:9}
	\end{align}
	Then $(Y^k)_{k\in\mathbb{N}}$ converges  to the unique solution $Y$ of the BSDE \eqref{eqn: BSDE} $\mathbb{Q}\otimes ds$-almost surely. 
\end{corollary}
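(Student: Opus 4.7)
The plan is to leverage the contraction structure already established in the proof of Theorem \ref{prop: delay} and upgrade the resulting norm convergence to $\mathbb{Q}\otimes ds$-almost sure convergence via a Borel--Cantelli / Tonelli argument. First, I would recall that, under Assumption \ref{hypo: BSDE delay}, the map $\Gamma:\mathbb{S}^2(0,T;\mathbb{R})\to \mathbb{S}^2(0,T;\mathbb{R})$ sending $U\mapsto Y$ (where $Y$ is the solution to the linear auxiliary BSDE with the delay term frozen at $U$) is a strict contraction with respect to the weighted norm
\begin{equation}
|\!|U|\!|:=\Big(\mathbb{E}\Big[|U(0)|^2+\int_0^T e^{Mu}|U(u)|^2\,du\Big]\Big)^{1/2}, \qquad M=2-2\ell+6(\ell-r)^2,
\end{equation}
with some contraction constant $c\in(0,1)$ coming from the assumed inequality $\tfrac{1}{3}\tfrac{e^{|M|\delta}-1}{|M|\delta}<1$, and that $Y$ is its unique fixed point. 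By the recursive definition in \eqref{eqn:9}, $Y^{k}=\Gamma(Y^{k-1})=\Gamma^{k}(0)$ for $k\ge 1$.

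Second, the Banach fixed point theorem then yields the geometric estimate $|\!|Y^k-Y|\!|\le c^{k}|\!|Y|\!|$, so that $\sum_{k=0}^\infty |\!|Y^k-Y|\!|^2 \le |\!|Y|\!|^2\sum_{k=0}^\infty c^{2k}<\infty$. Unpacking the definition of $|\!|\cdot|\!|$ and applying Tonelli's theorem to interchange the sum and the expectation gives
\begin{equation}
\mathbb{E}\Big[\sum_{k=0}^\infty \int_0^T e^{Mu}|Y^k(u)-Y(u)|^2\,du\Big]=\sum_{k=0}^\infty \mathbb{E}\Big[\int_0^T e^{Mu}|Y^k(u)-Y(u)|^2\,du\Big]<\infty.
\end{equation}

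Third, this finiteness forces $\sum_{k=0}^\infty \int_0^T e^{Mu}|Y^k(u)-Y(u)|^2\,du<\infty$ $\mathbb{Q}$-almost surely, and a further application of Tonelli on $\Omega\times[0,T]$ with the product measure $\mathbb{Q}\otimes ds$ yields $\sum_{k=0}^\infty e^{Mu}|Y^k(u)-Y(u)|^2<\infty$ for $\mathbb{Q}\otimes ds$-a.e.\ $(\omega,u)\in\Omega\times[0,T]$. Since $e^{Mu}>0$, the summand must tend to zero, which delivers $Y^k(u)\to Y(u)$ $\mathbb{Q}\otimes ds$-almost surely, as required. There is no substantive obstacle in this argument: the heavy lifting (the explicit contraction bound and the existence/uniqueness of the fixed point $Y\in\mathbb{S}^2(0,T;\mathbb{R})$) was already carried out in Theorem \ref{prop: delay}, and the geometric rate of Banach iteration is precisely strong enough to turn $L^2(\mathbb{Q}\otimes ds)$-convergence into almost-sure convergence of the full sequence, without needing to extract a subsequence.
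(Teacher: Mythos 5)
Your proposal is correct and follows essentially the same route as the paper, which simply invokes the Banach fixed point theorem applied to the contraction $\Gamma$ constructed in the proof of Theorem \ref{prop: delay}. The only difference is that you make explicit the step the paper leaves implicit — namely that the geometric decay $|\!|Y^k-Y|\!|\le c^k|\!|Y|\!|$ makes $\sum_k|\!|Y^k-Y|\!|^2$ summable, so Tonelli and the positivity of the weight $e^{Mu}$ upgrade weighted $L^2(\mathbb{Q}\otimes ds)$ convergence to $\mathbb{Q}\otimes ds$-a.e.\ convergence of the full sequence rather than just a subsequence — which is exactly the right way to justify the stated conclusion.
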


\subsection{Proof of Theorem    \ref{thm:approx}}\label{L speci}

We now prove \eqref{thm:delay_1} and \eqref{thm:delay_2} in Theorem \ref{thm:approx}.
Part \eqref{thm:delay_1} 
is directly obtained by Theorem \ref{thn:main}.
The proof of \eqref{thm:delay_2} is as follows.

\begin{proof}
	In this proof, $L$ denotes a generic constant that depends only on $C_1,C_3, C_\Phi, r,\ell,\rho$ and may differ line by line and  $L_6,L_7,L_8$ are positive constants satisfying  
	\begin{equation} 
		\begin{aligned}\label{eqn:posi}
			&L_8<\ell-1\,,\\
			&L_6((\ell-r)^2+\frac{1}{2}|\ell-r|\ell+2|\ell-r|)\delta<1\,,\\
			&\mathbb{E}_s[|\!|X|\!|_T^\rho]\le (L_6|\!|X|\!|^\rho_s+L_7)e^{L_8(T-s)}\,.
		\end{aligned}     
	\end{equation}
	We first prove the uniqueness of solutions. Let $(Y^1,Z^1)$ and $(Y^2,Z^2)$ be two solutions to \eqref{eqn: infinie delay bsde} such that $|Y^i(s)|\le L(1+|\!|X|\!|^\rho_s)$ for $i=1,2.$ Define $\hat{Y}=Y^1-Y^2$ and $\hat{Z}=Z^1(s)-Z^2(s)$. Then we have
	\begin{align}
		\hat{Y}(s)=\hat{Y}(T)+\int_s^T-r \hat{Y}(u)-\frac{\ell-r}{\delta}\int_{u-\delta}^u\hat{Y}(v)\,dv\,du-\int_s^T\hat{Z}(u)\,dB(u)\,,\; 0\le s\le T<\infty\,.
	\end{align}
	Let $(\hat{Y}^k,\hat{Z^k})$ denote the Picard iteration of this BSDE. More precisely,
	define $(\hat{Y}^0,\hat{Z}^0)=(0,0)$ and for each $k\in \mathbb{N}$, let $(\hat{Y}^k,\hat{Z}^k)$ be a solution to 
	\begin{align}
		\hat{Y}^k(s)=\hat{Y}(T)+\int_s^T-\ell \hat{Y}^k(u)-\frac{\ell-r}{\delta}\int_{u-\delta}^u\hat{Y}^{k-1}(v)-\hat{Y}^{k-1}(u)\,dv\,du-\int_s^T\hat{Z}^k(u)\,dB(u)\,.
	\end{align}

	We construct four sequences $(a_k)_{k\ge0},(b_k)_{k\ge0},(\tilde{a}_k)_{k\ge0},(\tilde{b}_k)_{k\ge0}$   that satisfy   
	\begin{align}\label{eqn:sequ}
		&|\hat{Y}^k(s)|\le e^{-(\ell-L_8)(T-s)}(\sum_{j=0}^{k-1}\frac{(T-s)^{j}}{j!})(a_k|\!|X|\!|^{\rho}_s+b_k)\,,\\
		&|\mathbb{E}_s[\hat{Y}^k(s_1)-\hat{Y}^k(s_2)]|\le (s_2-s_1)e^{-(\ell-L_8)(T-s_2)}\Big(\sum_{j=0}^{k-1}\frac{(T-s_2)^{j}}{j!}\Big)(\tilde{a}_k\mathbb{E}_s[|\!|X|\!|^{\rho}_{s_2}]+\tilde{b}_k) 
	\end{align} 
	for all $0\le s\le s_1\le s_2$. 
	Define $a_0=b_0=\tilde{a}_0=\tilde{b}_0=0$ then \eqref{eqn:sequ} is satisfied with $\hat{Y}^0=0.$     
	It can be easily checked that there are constants $a_1$ and $b_1$, depending on $C_1,C_3, C_\Phi,r,\ell,\rho$, such that
	\begin{align}
		|\hat{Y}^1(s)|=|\mathbb{E}_s[e^{-\ell(T-s)}\hat{Y}(T)]|\le e^{-(\ell-L_8)(T-s)}(a_1|\!|X|\!|^{\rho}_s+b_1)\,.
	\end{align}
	Given $a_0,b_0,\tilde{a}_0,\tilde{b}_0, a_1,b_1$, 
	we define inductively    
	\begin{align}
		&a_{k+1}=a_1+2|\ell-r| L_6\delta a_k+\frac{1}{2}|\ell-r| L_6\delta \tilde{a}_k\,,\\ 
		&b_{k+1}=b_1+2|\ell-r| L_7\delta a_k+2|\ell-r| \delta b_k+\frac{1}{2}|\ell-r| L_7\delta \tilde{a}_k+\frac{1}{2}|\ell-r|\delta \tilde{b}_k\,,
	\end{align}
	and
	\begin{align}
		&\tilde{a}_{k+1}=\ell a_{k+1}+2|\ell-r| a_k\,, \\ 
		&\tilde{b}_{k+1}=\ell b_{k+1}+2|\ell-r| b_k\,.
	\end{align}
	According to It\^o's formula,
	\begin{align}
		\hat{Y}^{k+1}(s)=\mathbb{E}_s\Big[e^{-\ell(T-s)}\hat{Y}(T)-\int_s^T\frac{(\ell-r) e^{-\ell(u-s)}}{\delta}\int_{u-\delta}^u\hat{Y}^k(v)-\hat{Y}^k(u)\,dv\,du\Big]\,.
	\end{align}
	Decompose the right hand side of the above equation into 
	\begin{align}
		&\mathbb{E}_s[e^{-\ell(T-s)}\hat{Y}(T)]+\mathbb{E}_s\Big[\int_s^{s+\delta}-\frac{(\ell-r) e^{-\ell(u-s)}}{\delta}\int_{u-\delta}^u\hat{Y}^k(v)-\hat{Y}^k(u)\,dv\,du\Big]\\
		&+\mathbb{E}_s\Big[\int_{s+\delta}^{T}-\frac{(\ell-r) e^{-\ell(u-s)}}{\delta}\int_{u-\delta}^u\hat{Y}^k(v)-\hat{Y}^k(u)\,dv\,du\Big]\,.
	\end{align}
	Observe that
	\begin{align}
		&\mathbb{E}_s[e^{-\ell(T-s)}\hat{Y}(T)]\le  e^{-(\ell-L_8)(T-s)}(a_1|\!|X|\!|^{\rho}_s+b_1)\,,\\
		&\mathbb{E}_s\Big[\int_s^{s+\delta}\frac{|\ell-r| e^{-\ell(u-s)}}{\delta}\int_{u-\delta}^u\hat{Y}^k(v)-\hat{Y}^k(u)\,dv\,du\Big]\\
		&\le\mathbb{E}_s\Big[\int_s^{s+\delta}2|\ell-r| e^{-\ell(u-s)}e^{-(\ell-L_8)(T-u)}(\sum_{j=0}^{k-1}\frac{(T-u)^{j}}{j!})(a_k|\!|X|\!|^{\rho}_u+b_k)\,du\Big]\\
		&\le \int_s^{s+\delta}e^{-(\ell-L_8)(T-s)}\Big(\sum_{j=0}^{k-1}\frac{(T-u)^{j}}{j!}\Big)(2|\ell-r| a_kL_6|\!|X|\!|^{\rho}_s+2|\ell-r| L_7a_k+2|\ell-r| b_k)\,du\\
		&\le e^{-(\ell-L_8)(T-s)}(\sum_{j=0}^{k-1}\frac{(T-s)^{j}}{j!})(2|\ell-r| L_6\delta a_k|\!|X|\!|^{\rho}_s+2|\ell-r| L_7\delta  a_k+2|\ell-r|\delta b_k)\,,\\
		&\mathbb{E}_s\Big[\int_{s+\delta}^T\frac{|\ell-r| e^{-\ell(u-s)}}{\delta}\int_{u-\delta}^u(\hat{Y}^k(v)-\hat{Y}^k(u)\,dv\,du\Big]\\
		&\le 
		\int_s^T|\ell-r| e^{-\ell(u-s)} e^{-(\ell-L_8)(T-u)}\Big(\sum_{j=0}^{k-1}\frac{(T-u)^{j}}{j!}\Big)\frac{1}{2}\delta (\tilde{a}_k\mathbb{E}_s[|\!|X|\!|^{\rho}_{u}]+\tilde{b}_k)\,du\\
		&\le e^{-(\ell-L_8)(T-s)}\Big(\sum_{j=1}^{k}\frac{(T-s)^{j}}{j!}\Big)(\frac{1}{2}|\ell-r| L_6\delta \tilde{a}_k|\!|X|\!|^{\rho}_s+\frac{1}{2}|\ell-r| L_7\delta \tilde{a}_k+\frac{1}{2}|\ell-r|\delta \tilde{b}_k)\,,
	\end{align}
	where we have used that $e^{(\ell-L_8)(T-s)}\sum_{j=0}^k\frac{(T-s)^j}{j!}$ is an increasing function in $s$.
	Thus we have
	\begin{align}
		|\hat{Y}^{k+1}(s)|\le e^{-(\ell-L_8)(T-s)}\Big(\sum_{j=0}^{k}\frac{(T-s)^{j}}{j!}\Big)(a_{k+1}|\!|X|\!|^{\rho}_s+b_{k+1})\,.
	\end{align}
	For $s\le s_1\le s_2\le T$,
	\begin{align}
		&\;\quad\big|\mathbb{E}_s[\hat{Y}^{k+1}(s_1)-\hat{Y}^{k+1}(s_2)]\big|\\
		&=\Big|\mathbb{E}\Big[\int_{s_1}^{s_2} -\ell\hat{Y}^{k+1}(u)-\frac{\ell-r}{\delta}\int_{u-\delta}^u\hat{Y}^{k}(v)-\hat{Y}^{k}(u)\,dv\,du\Big]\Big|\\
		&\le (s_2-s_1)e^{-(\ell-L_8)(T-s_2)}\Big(\sum_{j=0}^{k}\frac{(T-s_2)^{j}}{j!}\Big)((\ell a_{k+1}+2|\ell-r| a_k)\mathbb{E}_s[|\!|X|\!|^{\rho}_{s_2}]+\ell b_{k+1}+2|\ell-r| b_k)\\
		&= (s_2-s_1)e^{-(\ell-L_8)(T-s_2)}\Big(\sum_{j=0}^{k}\frac{(T-s_2)^{j}}{j!}\Big)(\tilde{a}_{k+1}\mathbb{E}_s[|\!|X|\!|^{\rho}_{s_2}]+\tilde{b}_{k+1})\,.
	\end{align}

	We now verify $Y^1=Y^2$
	by taking $k\to\infty$ to \eqref{eqn:sequ}. 
	Following the arguments in \eqref{eqn:recur} and using  the inequalities in \eqref{eqn:posi},
	the limits
	\begin{align}
		&a:=\lim_{k\to \infty}a_k=\frac{a_1}{1-L_6((\ell-r)^2+\frac{1}{2}|\ell-r|\ell+2|\ell-r|)\delta}\,,\\
		&b:=\lim_{k\to \infty}b_k=\frac{b_1+aL_7((\ell-r)^2+\frac
			{1}{2}|\ell-r|\ell+2|\ell-r|)\delta}{1-((\ell-r)^2+\frac{1}{2}|\ell-r|\ell+2|\ell-r|)\delta} 
	\end{align} exist and  are positive. 
	By Corollary \ref{cor:banach}, the sequence $(\hat{Y}^k(s))_{k\ge0}$ converge to $\hat{Y}(s)$, thus we obtain $$
	|\hat{Y}(s)|\le Le^{-(\ell-L_8-1)(T-s)}(|\!|X|\!|^{\rho}_s+1)\,,\;s\ge0$$
	By letting $T\to\infty$, we have $Y^1=Y^2$. This induces  $Z^1=Z^2$.
	
	Now we prove the existence of solutions. For each $n\in \mathbb{N}$, consider the BSDE
	\begin{align}\label{bsde_n}
		Y^n(s)=\int_s^ng(u,X_u)-r Y^n(u)-\frac{\ell-r}{\delta}\int_{u-\delta}^uY^n(v)\,dv\,du-\int_s^nZ(u)\,dB(u)\,,\; 0\le s\le n \,.
	\end{align}
	For $n\le m$, let $\tilde{Y}(s):=Y^m(s)-Y^n(s)$ and $\tilde{Z}(s):=Z^m(s)-Z^n(s)$. Then for $s\le n$
	\begin{align}
		\tilde{Y}(s)=Y^m(n)+\int_s^n-r \tilde{Y}(u)-\frac{\ell-r}{\delta}\int_{u-\delta}^u\tilde{Y}(v)\,dv\,du-\int_s^n \tilde{Z}(u)\,dB(u)\,.
	\end{align}
	Observe that $|Y^m(n)|\le L(|\!|X|\!|^{\rho}_n+1)$  by Theorem \ref{prop: delay}.
	By a similar argument as above, it follows that 
	\begin{align}
		|\tilde{Y}(s)|=|Y^m(s)-Y^n(s)|\le Le^{-(\ell-L_8-1)(n-s)}(|\!|X|\!|^{\rho}_s+1)\,.
	\end{align}
	Therefore, for $0\le T\le n\le m$, we have
	\begin{align}
		\mathbb{E}[|\!|Y^n-Y^m|\!|^2_T]\le Le^{-2(\ell-L_8-1)(n-T)}(|\!|X|\!|^{2\rho}_T+1)\,.
	\end{align}
	This indicates that $(Y^n)_{n\in \mathbb{N}}$ is a Cauchy sequence in $\mathbb{S}^2(0,T;\mathbb{R})$ for each $T\ge0$ and we denote  as $Y^{\delta}$ the limit of $Y^n$. Moreover, by Theorem \ref{prop: delay}, there exists a constant $L>0$ such that $|Y^\delta|\le L(1+|\!|X|\!|^{\rho})$.
	The sequence   $(Z^n)_{n\in\mathbb{N}}$ is a Cauchy sequence in $\mathbb{H}^2(0,T;\mathbb{R}^m)$ for each $T\ge 0$
	because
	\begin{align}\label{eqn:z}
		\mathbb{E}\Big[\int_0^T|\tilde{Z}(u)|^2\,du\Big]&\le L\Big(\mathbb{E}\Big[|\tilde{Y}(T)|^2+\int_0^T|\tilde{Y}(u)|^2+\frac{1}{\delta}\int_{u-\delta}^u|\tilde{Y}(v)|^2\,dv\,du\Big]\Big)\\
		&\le L(1+T)(\mathbb{E}[|\!|X|\!|_T^{2\rho}]+1)e^{-2(\ell-L_8-1)(n-T)}\,.
	\end{align}
	which is obtained from It\^o's formula and the inequality $2ab\le a^2+b^2$ for $a,b>0$.
	Denote as $Z^{\delta}$   the limit of $Z^n$. 
	Because   $(Y^n(s),Z^n(s))_{0\le s\le n}$ satisfies 
	\begin{align}
		Y^n(s)= Y^n(T)+\int_s^Tg(u,X_u)-r Y^n(u)&-\frac{\ell-r}{\delta}\int_{u-\delta}^uY^n(v)\,dv\,du\\
		&-\int_s^TZ^n(u)\,dB(u)\,,\;0\le s\le T,
	\end{align}
	the Lebesgue dominated convergence theorem implies that 
	the pair $(Y^\delta,Z^\delta)$ is a solution to \eqref{eqn: infinie delay bsde}.
\end{proof}

We now prove \eqref{eqn:estimate} in Theorem \ref{thm:approx}.
\begin{proof} 
	We recall that $L$ denotes a generic constant that depends only on $C_1,C_3, C_\Phi, r,\ell,\rho$ and may differ line by line, and  $L_6,L_7,L_8$ are constants satisfying  
	\eqref{eqn:posi}.  By Theorems \ref{thm: BSDE no delay} and \eqref{thm:delay_2} in Theorem \ref{thm:approx}, there exists a unique solution $(Y,Z)$  to \eqref{eqn:BSDE}  and $(Y^\delta,Z^\delta)$ to \eqref{eqn: infinie delay bsde} in $\mathbb{S}^2(0,\infty;\mathbb{R})\times \mathbb{H}^2(0,\infty;\mathbb{R}^m)$. For each $n\in \mathbb{N}$, consider the finite-horizon BSDEs 
	\begin{equation}
		\begin{aligned}\label{eqn:Y^n}
			Y^n(s)&=\int_s^n\Phi(u,X_u,0)-\ell Y^n(u)\,du-\int_s^nZ^n(u)\,dB(u)\,,\\
			Y^{n,\delta}(s)&=\int_s^n\frac{1}{\delta}\int_{u-\delta}^u\Phi(v,X_v,0)\,dv-r Y^{n,\delta}(u)-\frac{\ell-r}{\delta}\int_{u-\delta}^u Y^{n,\delta}(v)\,dv\,du\\
			&\quad-\int_s^n Z^{n,\delta}(u)\,dB(u)\,.
		\end{aligned}
	\end{equation}
	Proposition \ref{prop: delay} guarantees the existence and uniqueness of  solutions to this BSDE. One can easily prove \begin{align}
		&\lim_{\delta\to 0}\mathbb{E}[|\!|Y^n-Y|\!|^2_T]\le L(1+|\!|X|\!|^{2\rho}_T)e^{-2(\ell-L_8)(n-T)}\,,\\
		&\lim_{\delta\to0}\mathbb{E}[|\!|Y^{n,\delta}-Y^{\delta}|\!|^2_T]\le L(1+|\!|X|\!|^{2\rho}_T)e^{-2(\ell-L_8-1)(n-T)}\,.
	\end{align}  
	Observe that
	\begin{equation}\label{eqn:Y}
		\begin{aligned}
			|\!|Y^{\delta}-Y|\!|_T&\leq |\!|Y^{\delta}-Y^{n,\delta}|\!|_T+ |\!|Y^{n,\delta}-Y^n|\!|_T+|\!|Y^n-Y|\!|_T\\
			& \leq L(1+|\!|X|\!|^\rho_T)e^{-(\ell-L_8)(n-T)}+|\!|Y^{n,\delta}-Y^n|\!|_T+L(1+|\!|X|\!|^{\rho}_T)e^{-(\ell-L_8-1)(n-T)}\,.
		\end{aligned}  	
	\end{equation}
	Then
	\begin{align}
		\mathbb{E}[|\!|Y^\delta-Y|\!|_T]&\le \big(\mathbb{E}[|\!|Y^\delta-Y|\!|_T^2]\big)^{\frac{1}{2}}\\
		&\le L(1+(\mathbb{E}[|\!|X|\!|^{2\rho}_T])^{\frac{1}{2}})e^{-(\ell-L_8)(n-T)}+(\mathbb{E}[|\!|Y^{n,\delta}-Y^n|\!|^2_T])^{\frac{1}{2}}\\
		&\quad+L(1+(\mathbb{E}[|\!|X|\!|^{2\rho}_T])^{\frac{1}{2}})e^{-(\ell-L_8-1)(n-T)}\,. \label{eqn:delta} 
	\end{align}

	We now show that  \begin{align}
		\lim_{\delta\to 0}\mathbb{E}[|\!|Y^{n,\delta}-Y^n|\!|^2_T]=0\; \textnormal{for } n\in \mathbb{N}\,.\label{eqn:n delta}
	\end{align} 
	Once this is proven, by taking $\delta\to 0$ and $n\to\infty$ to \eqref{eqn:delta}, we obtain the first inequality in \eqref{eqn:estimate}.  
	Let $\hat{Y}(s)=Y^{n,\delta}(s)-Y^n(s)$ and $\hat{Z}(s)=Z^{n,\delta}(s)-Z^n(s)$. Then we have
	\begin{align}\label{eqn:hat_Y}
		\hat{Y}(s)&=\int_s^n\frac{1}{\delta}\int_{u-\delta}^u \Phi(v,X_v,0)-\Phi(u,X_u,0)\,dv-\frac{\ell-r}{\delta}\int_{u-\delta}^uY^n(v)-Y^n(u)\,dv -r\hat{Y}(u)\\
		&\quad
		-\frac{\ell-r}{\delta}\int_{u-\delta}^u\hat{Y}(v)\,dv\,du-\int_s^n\hat{Z}(u)\,dB(u)\,,\; 0\le s\le n\,.
	\end{align}
	Let $(\hat{Y}^k,\hat{Z^k})$ denote the Picard iteration of this BSDE. More precisely, define $(\hat{Y}^0,\hat{Z}^{0})=(0,0)$ and for each $k\in \mathbb{N}$, let $(\hat{Y}^k,\hat{Z^k})$ be a solution to 
	\begin{align}\label{eqn:Picard}
		\hat{Y}^k(s)&=\int_s^n\frac{1}{\delta}\int_{u-\delta}^u \Phi(v,X_v,0)-\Phi(u,X_u,0)\,dv-\frac{\ell-r}{\delta}\int_{u-\delta}^uY^n(v)-Y^n(u)\,dv\\
		&\quad -\ell\hat{Y}^k(u)-\frac{\ell-r}{\delta}\int_{u-\delta}^u\hat{Y}^{k-1}(v)-\hat{Y}^{k-1}(u)\,dv\,du-\int_s^n\hat{Z}^k(u)\,dB(u)\label{eqn:11}\,,
	\end{align}
	and we define $$\hat{G}^{\delta}(s):=\mathbb{E}_s\Big[\int_s^n\frac{e^{-\ell(u-s)}}{\delta}\int_{u-\delta}^u |\Phi(v,X_v,0)-\Phi(u,X_u,0)|\,dv\,du \Big]$$ for $0\le s\le n$.
	
	We construct four sequences  $(a_k)_{k\ge0}$, $(b_k)_{k\ge0}$, $(\tilde{a}_k)_{k\ge0}$, $(\tilde{b}_k)_{k\ge0}$ that satisfy 
	\begin{align}\label{eqn :24}
		&|\hat{Y}^k(s)|\le \hat{G}^{\delta}(s)+a_k|\!|X|\!|^{\rho}_s+b_k\,,\\
		&|\mathbb{E}_s[\hat{Y}^k(s_1)-\hat{Y}^k(s_2)]\le (s_2-s_1)(\tilde{a}_k\mathbb{E}_s[|\!|X|\!|^{\rho}_{s_2}]+\tilde{b}_k)\,
	\end{align}
	for all $0\le s\le s_1\le s_2$. Let $a_0=b_0=\tilde{a}_0=\tilde{b}_0=0$ then \eqref{eqn :24} is satisfied with $\hat{Y}^0=0$. Observe that 
	\begin{align}
		&|Y^n(s)|\le L(1+|\!|X|\!|^\rho_s)\,,\\
		&|\mathbb{E}_s[Y^n(s_1)-Y^n(s_2)]|\le \Big|\mathbb{E}_s\Big[\int_{s_1}^{s_2}\Phi(u,X_u,0)-\ell Y^n(u)\,du\Big]\Big|\le L(1+\mathbb{E}_{s}[|\!|X|\!|^\rho_{s_2}])(s_2-s_1)\label{eqn: 13}
	\end{align}
	for $0\le s\le s_1\le s_2\le n$.
	Then we obtain
	\begin{align}
		|\hat{Y}^1(s)|=&\Big|\mathbb{E}_s\Big[\int_s^n\frac{e^{-\ell(u-s)}}{\delta}\int_{u-\delta}^u \Phi(v,X_v,0)-\Phi(u,X_u,0)\,dv\,du \\
		&\quad\quad-\int_s^n e^{-\ell(u-s)}\frac{\ell-r}{\delta}\int_{u-\delta}^u Y^n(v)-Y^n(u)\,dv\,du\Big]\Big|\\
		&\le \hat{G}^{\delta}(s)+ \Big|\mathbb{E}_s\Big[\int_s^{s+\delta} e^{-\ell(u-s)}\frac{|\ell-r|}{\delta}\int_{u-\delta}^u Y^n(v)-Y^n(u)\,dv\,du\Big]\Big|\\ &
		\quad+\mathbb{E}\Big[\int_{s+\delta}^n e^{-\ell(u-s)}\frac{|\ell-r|}{\delta}\int_{u-\delta}^u|\mathbb{E}_s[ Y^n(v)-Y^n(u)]|\,dv\,du\Big]\,\\
		&\le \hat{G}^{\delta}(s)+\mathbb{E}_s\Big[\int_s^{s+\delta}Le^{-\ell(u-s)}(1+|\!|X|\!|_u)\,du\Big]+\mathbb{E}_s\Big[\int_s^n L\delta e^{-\ell(u-s)}(1+|\!|X|\!|_u)\,du\Big]\\
		&\le \hat{G}^{\delta}(s)+L(1+|\!|X|\!|^\rho_s)\delta\\
		&\le \hat{G}^{\delta}
		(s)+a_1|\!|X|\!|^\rho_s+b_1
	\end{align}
	for $a_1:=L\delta$ and $b_1:=L\delta$.
	Given $a_0,b_0,\tilde{a}_0,\tilde{b}_0,a_1,b_1$, we define inductively 
	\begin{align}
		&a_{k+1}=L\delta+2|\ell-r| L_6\delta a_k+\frac{1}{2}|\ell-r|L_6\delta \tilde{a}_k\,,\\
		&b_{k+1}=L\delta+2|\ell-r|L_7\delta a_k+2|\ell-r|\delta b_k+\frac{1}{2}|\ell-r|L_7\delta \tilde{a}_k+\frac{1}{2}|\ell-r|\delta \tilde{b}_k\,,\\
		&\tilde{a}_{k+1}=L+\ell a_{k+1}+2|\ell-r| a_k\,,\\
		&\tilde{b}_{k+1}=L+\ell b_{k+1}+2|\ell-r| b_k\,.\label{eqn:30}
	\end{align}
	Observe that
	\begin{align}
		|Y^{k}(s)|&\le \hat{G}^{\delta}(s)+a_{k}|\!|X|\!|^\rho_s+b_{k}(s)\\
		&=\mathbb{E}_s\Big[\int_s^n\frac{e^{-\ell(u-s)}}{\delta}\int_{u-\delta}^u |\Phi(v,X_v,0)-\Phi(u,X_u,0)|\,dv\,du \Big]+a_{k}|\!|X|\!|^\rho_s+b_{k}\\
		&\le \mathbb{E}_s\Big[\int_s^n e^{-\ell(u-s)}2C_\Phi(1+|\!|X|\!|^\rho_u)\,du \Big]+a_{k}|\!|X|\!|^\rho_s+b_{k}\\
		&\le (a_{k}+L)|\!|X|\!|^\rho_s+b_{k}+L\,.
	\end{align}
	Applying I\^o's formula, we have
	\begin{equation} 
		\begin{aligned}
			\hat{Y}^{k+1}(s)&=\mathbb{E}_s\Big[\int_s^n \frac{e^{-\ell(u-s)}}{\delta}\int_{u-\delta}^u\Phi(v,X_v,0)-\Phi(u,X_u,0)\,dv\\
			&-\frac{(\ell-r) e^{-\ell(u-s)}}{\delta}\int_{u-\delta}^u Y^n(v)-Y^n(u)\,dv -\frac{(\ell-r) e^{-\ell(u-s)}}{\delta}\int_{u-\delta}^u \hat{Y}^k(v)-\hat{Y}^k(u)\,dv\,du \Big]\\
			&\le \hat{G}^{\delta}(s)+a_1|\!|X|\!|^\rho_s+b_1+\mathbb{E}_s\Big[\int_s^{s+\delta}\frac{|\ell-r| e^{-\ell(u-s)}}{\delta}\int_{u-\delta}^u \hat{Y}^k(v)-\hat{Y}^k(u)\,dv\,du \Big]\\
			&\quad+\mathbb{E}_s\Big[\int_{s+\delta}^{n}\frac{|\ell-r| e^{-\ell(u-s)}}{\delta}\int_{u-\delta}^u \hat{Y}^k(v)-\hat{Y}^k(u)\,dv\,du \Big]\\
			&\le\hat{G}^{\delta}(s)+a_{k+1}|\!|X|\!|^\rho_s+b_{k+1}\,.
		\end{aligned}    
	\end{equation}
	For $s\le s_1\le s_2\le n$,
	\begin{align}
		|\mathbb{E}_s[\hat{Y}^{k+1}(s_1)-\hat{Y}^{k+1}(s_2)]|&=\Big|\mathbb{E}_s\Big[\int_{s_1}^{s_2} \frac{1}{\delta}\int_{u-\delta}^u \Phi(v,X_v,0)-\Phi(u,X_u,0)\,dv\\
		&\quad\quad\quad\quad-\frac{\ell}{\delta}\int_{u-\delta}^uY^n(v)-Y^n(u)\,dv\\
		&\quad\quad\quad\quad-\ell\hat{Y}^{k+1}(u)-\frac{\ell-r}{\delta}\int_{u-\delta}^u\hat{Y}^{k}(v)-\hat{Y}^{k}(u)\,dv\,du\Big]\Big|\\
		&\le (s_2-s_1)(\tilde{a}_{k+1}\mathbb{E}_s[|\!|X|\!|^\rho_s]+\tilde{b}_{k+1})\,.
	\end{align}
	
	We now prove the equality \eqref{eqn:n delta}.
	Following the argument in \eqref{eqn:recur},
	the sequence $(a_k)_{k\ge 0}$   converges to a positive constant and the limit 
	\begin{equation} 
		a:=\lim_{k\to\infty}a_k= \frac{L\delta+\frac{1}{2}|\ell-r| L_6C_\Phi\delta}{1-L_6((\ell-r)^2+\frac{1}{2}|\ell-r|\ell+2|\ell-r|)\delta}
	\end{equation}   
	is bounded by $L\delta$ for some $L>0.$
	Similarly, the sequence $(b_k)_{k\ge 0}$  also converges to a positive constant, which is bounded by $L\delta$. 
	We recall that  $L$ is a generic constant and may differ line by line.
	Corollary \ref    {cor:banach} yields that 
	$\hat{Y}^k(s)\to\hat{Y}(s)=Y^{n,\delta}(s)-Y^n(s)$
	as  $k\to\infty.$
	Thus, by \eqref{eqn :24} we have    \begin{align}
		|Y^{n,\delta}(s)-Y^n(s)|\le \hat{G}^{\delta}(s)+L(1+|\!|X|\!|_s^\rho)\delta\,.\label{eqn:14}
	\end{align}
	By Doob's inequality and Jensen's inequality, we have
	\begin{align}
		&\quad\; (\mathbb{E}[(\sup_{0\le s\le T} 
		\hat{G}^{\delta}(s))^2])^{1+r}\\
		&=\Big(\mathbb{E}\Big[\Big(\sup_{0\le s\le T}\mathbb{E}_s\Big[\int_s^n\frac{e^{-\ell(u-s)}}{\delta}\int_{u-\delta}^u |\Phi(v,X_v,0)-\Phi(u,X_u,0)|\,dv\,du \Big]\Big)^2\Big]\Big)^{1+r}\\
		&\le e^{2\ell(1+r)T}\mathbb{E}\Big[\Big( \sup_{0\le s\le T}\mathbb{E}_s\Big[\int_0^n \frac{e^{-\ell u}}{\delta}\int_{u-\delta}^u |\Phi(v,X_v,0)-\Phi(u,X_u,0)|\,dv\,du\Big]\Big)^{2+2r} \Big]\\
		&\le e^{2\ell(1+r)T}(\frac{2+2r}{1+2r})^{2+2r}\sup_{0\le s\le T}\mathbb{E}\Big[\Big(\mathbb{E}_s\Big[\int_0^n \frac{e^{-\ell u}}{\delta}\int_{u-\delta}^u |\Phi(v,X_v,0)-\Phi(u,X_u,0)|\,dv\,du\Big]\Big)^{2+2r}\Big]\\
		&\le e^{2\ell(1+r)T}(\frac{2+2r}{1+2r})^{2+2r}\mathbb{E}\Big[\Big(\int_0^n \frac{e^{-\ell u}}{\delta}\int_{u-\delta}^u |\Phi(v,X_v,0)-\Phi(u,X_u,0)|\,dv\,du\Big)^{2+2r}\Big]\,
	\end{align}
	for some constant $r>0$. 
	Applying the dominated convergence theorem and the Lebesgue differentiation theorem to the right-hand side of the above inequality, \begin{align}
		\lim_{\delta\to 0}\mathbb{E}[(\sup_{0\le s\le T}\hat{G}^{\delta}(s))^2]=0\,.\label{eqn:13}   
	\end{align}
	Therefore by \eqref{eqn:14} and \eqref{eqn:13}, we obtain the  first equality in \eqref{eqn:estimate}.

	For the second equality in
	\eqref{eqn:estimate}, applying It\^o's formula to $(Y^\delta-Y)^2$, we have
	\begin{align}\label{eqn:Zesi}
		&\quad \int_0^T|Z^{\delta}(u)-Z(u)|^2\,du\\
		&\le L\Big( |Y^\delta(T)-Y(T)|^2+\int_0^T\frac{2|Y^{\delta}(u)-Y(u)|}{\delta}\int_{u-\delta}^u|\Phi(u,X_u,0)-\Phi(v,X_v,0)|\,dv\,du\\
		&\quad+\int_0^T\frac{2|Y^{\delta}(u)-Y(u)|}{\delta}\Big(\int_{u-\delta}^u|Y(u)-Y(v)|\,dv+\int_{u-\delta}^u|Y^\delta(v)-Y(v)|\,dv\,du\Big)\Big)\,.
	\end{align}
	Recall from Theorem \ref{thm: BSDE no delay} and \eqref{thm:delay_2} in Theorem \ref{thm:approx} that  $|Y(s)|\le L(1+|\!|X|\!|^\rho_s)$  and $|Y^\delta(s)|\le L(1+|\!|X|\!|_s^\rho)$. 
	Taking the expectation of both sides of \eqref{eqn:Zesi}, we obtain
	\begin{align}\label{eqn:Ez}
		&\quad\mathbb{E}\Big[\int_0^T|Z^{\delta}(u)-Z(u)|^2\,du\Big]\\
		&\le L\Big(\mathbb{E}[ |Y^\delta(T)-Y(T)|^2]+\mathbb{E}\Big[\int_0^T\frac{2|Y^{\delta}(u)-Y(u)|}{\delta}\int_{u-\delta}^u|\Phi(u,X_u,0)-\Phi(v,X_v,0)|\,dv\,du\Big]    \\
		&\quad+\mathbb{E}\Big[\int_0^T\frac{2|Y^{\delta}(u)-Y(u)|}{\delta}\Big(\int_{u-\delta}^u|Y(u)-Y(v)|\,dv+\int_{u-\delta}^u|Y^\delta(v)-Y(v)|\,dv\,du\Big)\Big)\Big]\Big)\\
		& \le L(1+T)(\mathbb{E}[|\!|Y^\delta-Y|\!|^2_T])^{\frac{1}{2}}(\mathbb{E}[1+|\!|X|\!|_T^{2\rho}])^{\frac{1}{2}}\,.
	\end{align}
	From \eqref{eqn:delta} and \eqref{eqn:n delta}, we have
	\begin{align}
		\lim_{\delta\to 0}\mathbb{E}[|\!|Y^\delta-Y|\!|_T^2]=0\,.
	\end{align}
	Therefore by letting $\delta \to 0$ to \eqref{eqn:Ez}, we obtain the second equality in 
	\eqref{eqn:estimate}\,.

\end{proof}

Finally we prove \eqref{eqn:esti} in Theorem \ref{thm:approx}.

\begin{proof} 
	Recall that $L$ denotes a generic constant that depends only on $C_1,C_3, C_\Phi, r,\ell,\rho$ and may differ line by line, and  $L_6,L_7,L_8$ are constants satisfying \eqref{eqn:posi}. 
	We first consider the case with $\rho>1.$
	To prove 
	the first inequality in   
	\eqref{eqn:esti}, we recall that $Y^n$ and $Y^{n,\delta}$ are solutions to the finite-horizon BSDEs \eqref{eqn:Y^n}.
	Consider 
	the
	Picard iteration 
	$(\hat{Y}^k,\hat{Z^k})$ in \eqref{eqn:Picard} for   $(\hat{Y}(s),\hat{Z}(s)):=(Y^{n,\delta}(s)-Y^n(s),Z^{n,\delta}(s)-Z^n(s))$.
	Corollary \ref    {cor:banach} yields that 
	$\hat{Y}^k(s)\to\hat{Y}(s)=Y^{n,\delta}(s)-Y^n(s)$
	as  $k\to\infty.$

	We construct four sequences $(a_k)_{k\ge0}$, $(b_k)_{k\ge0}$, $(\tilde{a}_k)_{k\ge0}$, $(\tilde{b}_k)_{k\ge0}$ that satisfy 
	\begin{align}\label{eqn:26}
		&|\hat{Y}^k(s)|\le a_k|\!|X|\!|^{\rho}_s+b_k\,,\\
		&|\mathbb{E}_s[\hat{Y}^k(s_1)-\hat{Y}^k(s_2)]\le (s_2-s_1)(\tilde{a}_k\mathbb{E}_s[|\!|X|\!|^{\rho}_{s_2}]+\tilde{b}_k)\,
	\end{align}
	for all $0\le s\le s_1\le s_2$. 
	Let $a_0=b_0=\tilde{a}_0=\tilde{b}_0=0$ then \eqref{eqn:26} is satisfied with $\hat{Y}^0=0$. Now we define $a_1$ and $b_1$.
	A direct calculation and  \eqref{eqn:posi} yield 
	\begin{align}
		(\mathbb{E}_s[|\!|X-X_{s_1}|\!|_{s_2}^\rho])^{\frac{1}{\rho}}&=(\mathbb{E}_s[\sup_{s_1\le r\le s_2}|X(r)-X(s_2)|^\rho])^{\frac{1}{\rho}}\\
		&\le C_1M_\rho(s_2-s_1)^{\frac{1}{2}}+(r(s_2-s_1)+C_3M_\rho(s_2-s_1)^{\frac{1}{2}})(\mathbb{E}_s[|\!|X|\!|_{s_2}^\rho])^{\frac{1}{\rho}}\label{eqn:pro1}\,,\\
		|Y^n(s)|&\le \mathbb{E}_s\Big[\int_s^n e^{-\ell(u-s)}|\Phi(u,X_u,0)|\,du\Big]\\
		&\le\mathbb{E}\Big[\int_s^ne^{-\ell(u-s)}C_\Phi(|\!|X|\!|^\rho_u+1)\,du\Big]\\
		&\le \int_s^ne^{-(\ell-L_8)(u-s)}C_\Phi(L_6|\!|X|\!|_s+L_7)+\int_s^n e^{-\ell(u-s)}C_\Phi\,du\\
		&\le\frac{C_\Phi L_6}{\ell-L_8}|\!|X|\!|_s^\rho+\frac{C_\Phi L_7}{\ell-L_8}+\frac{C_\Phi}{\ell}\label{eqn:pro2}
	\end{align}
	and
	\begin{align}
		|\mathbb{E}_s[Y^n(s_1)-Y^n(s_2)]|&\le \mathbb{E}_s\Big[\int_{s_1}^{s_2}|\Phi(u,X_u,0)|+\ell|Y^n(u)|\,du\Big]\\
		&\le (s_2-s_1)\Big(\Big(C_\Phi+\frac{\ell C_\Phi L_6}{\ell-L_8}\Big)|\!|X|\!|_s^\rho +2C_\Phi+\frac{\ell C_\Phi L_7}{\ell-L_8}\Big)\label{eqn:pro3}\,.
	\end{align} Then, using  the above inequalities and It\^o's formula, we obtain 
	\begin{equation}\label{eqn:pro5}
		\begin{aligned}
			|\hat{Y}^1(s)|&\le \mathbb{E}_s\Big[\int_s^{s+\delta} 
			\frac{e^{-\ell(u-s)}}{\delta}\int_{u-\delta}^u|\Phi(v,X_v,0)-\Phi(u,X_u,0)|\,dv\,du\Big]\\
			&\quad+\mathbb{E}_s\Big[\int_{s+\delta}^n 
			\frac{e^{-\ell(u-s)}}{\delta}\int_{u-\delta}^u|\Phi(v,X_v,0)-\Phi(u,X_u,0)|\,dv\,du\Big]\\
			&\quad+\Big|\mathbb{E}_s\Big[\int_s^{s+\delta} 
			\frac{(\ell-r) e^{-\ell(u-s)}}{\delta}\int_{u-\delta}^u Y^n(v)-Y^n(u)\,dv  \Big]\Big|\\
			&\quad+\Big|\mathbb{E}_s\Big[\int_{s+\delta}^n 
			\frac{(\ell-r) e^{-\ell(u-s)}}{\delta}\int_{u-\delta}^u Y^n(v)-Y^n(u)\,dv  \Big]\Big|\,\\
			&\le a(\rho)\sqrt{\delta} |\!|X|\!|_s+b(\rho) \sqrt{\delta}
		\end{aligned}    
	\end{equation}
	where 
	\begin{align}
		a(\rho)&=
		2C_\Phi L_6\sqrt{\delta}+\frac{L_6C_5}{\ell-L_8}\Big( \frac{3r\sqrt{\delta}}{2}+2M_\rho C_3+\frac{4(1+M_\rho C_1)}{3}\Big)\\
		&\quad +\frac{|\ell-r|C_\Phi L_6\sqrt{\delta}}{\ell-L_8}\Big(2L_6+\frac{\ell-L_8+\ell L_6}{2(\ell-L_8)} \Big)\,,\\ 
		b(\rho)&= C_\Phi\sqrt{\delta} \Big(2\Big(L_7+1+\frac{|\ell-r|L_7(1+L_6)}{\ell-L_8}+\frac{|\ell-r|}{\ell}\Big) +|\ell-r|\Big(\frac{L_7}{\ell-L_8}+\frac{\ell L_6 L_7}{2(\ell-L_8)^2}+\frac{1}{\ell}\Big)\Big)\\
		&\quad+ C_5\Big( 2(1+M_\rho C_1) +\frac{r\sqrt{\delta}}{2}+\frac{2M_\rho C_3(\ell-L_8+L_7)}{3(\ell-L_8)}+\frac{L_7(9r\sqrt{\delta}+8(1+M\rho C_1))}{6(\ell-L_8)}\Big)\,.
	\end{align}
	Thus, $a_1=a(\rho)\sqrt{\delta}$ and $b_1=b(\rho)\sqrt{\delta}$ yields the desired ones.
	Given $a_0,b_0,\tilde{a}_0,\tilde{b}_0,a_1,b_1$, we define inductively
	\begin{equation}  \begin{aligned}
			&a_{k+1}=a(\rho)\sqrt{\delta}+2|\ell-r| L_6\delta a_k+\frac{1}{2}|\ell-r| L_6\delta \tilde{a}_k\,,\\
			&b_{k+1}=b(\rho)\sqrt{\delta}+2|\ell-r| L_7\delta a_k+2|\ell-r| \delta b_k+\frac{1}{2}|\ell-r| L_7\delta \tilde{a}_k+\frac{1}{2}|\ell-r|\delta \tilde{b}_k\,,\\
			&\tilde{a}_{k+1}=2C_\Phi+\frac{2|\ell-r|C_{\Phi}L_6}{\ell-L_8}+\ell a_{k+1}+2|\ell-r| a_k\,,\\ 
			&\tilde{b}_{k+1}=2C_\Phi+ \frac{2|\ell-r|C_\Phi L_7}{\ell-L_8}+\frac{2|\ell-r|C_\Phi}{\ell}    +\ell b_{k+1}+2|\ell-r| b_k\,.
		\end{aligned} 
	\end{equation}
	This construction yields the desired four sequences $(a_k)_{k\ge0}$, $(b_k)_{k\ge0}$, $(\tilde{a}_k)_{k\ge0}$ and $(\tilde{b}_k)_{k\ge0}$.

	Following the argument in \eqref{eqn:recur}, the sequences 
	$(a_k)_{k\ge0}$ and $(b_k)_{k\ge0}$  converge to positive constants.
	A direct calculation yields
	\begin{equation} 
		\begin{aligned}
			a_{k+1}=a(\rho)\sqrt{\delta}+|\ell-r|L_6
			\delta C_\Phi(1+\frac{|\ell-r|L_6}{\ell-L_8})+L_6(\frac{1}{2}|\ell-r|\ell+2|\ell-r|)\delta a_k+L_6(\ell-r)^2\delta a_{k-1} 
		\end{aligned}    
	\end{equation}
	and this recursive relation implies that $  \lim_{k\to\infty}a_k=L_2\sqrt{\delta}$ where
	\begin{align} 
		L_2:= \frac{a(\rho) +|\ell-r|L_6
			C_\Phi(1+\frac{|\ell-r|L_6}{\ell-L_8})\sqrt{\delta}}{1-L_6\delta((\ell-r)^2+\frac{|\ell-r|\ell}{2}+2|\ell-r|)}\,.
	\end{align}     
	Similarly, the sequence $(b_k)_{k\ge0}$ converges to $L_1\sqrt{\delta}$ where $L_1$ is a constant defined as
	\begin{equation} 
		\begin{aligned}
			L_1=\frac{b(\rho)+a(\rho)|\ell-r|L_7\delta(2+\frac{\ell}{2}+|\ell-r|)+|\ell-r|C_\Phi  (L_7+1+\frac{L_6L_7|\ell-r|}{\ell-L_8}+\frac{L_7|\ell-r|}{\ell-L_8}+\frac{|\ell-r|}{\ell})\sqrt{\delta}}{1-\delta\big((\ell-r)^2+\frac{|\ell-r|\ell}{2}+2|\ell-r|\big)}.
		\end{aligned}    
	\end{equation}
	Since $L_6((\ell-r)^2+\frac{1}{2}|\ell-r|\ell+2|\ell-r|)\delta<1$, the constants $L_1$ and $L_2$ are positive.    
	Letting $k \to \infty$ in \eqref{eqn:26}, we obtain
	$$|Y^{n,\delta}(s)-Y^n(s)|=|\hat{Y}(s)|\le (L_1 +L_2|\!|X|\!|^{\rho}_s)\sqrt{\delta}\,.$$
	Similar to \eqref{eqn:Y}, letting $n \to \infty$, we deduce    
	$|Y^{\delta}(s)-Y(s)|\le (L_1 +L_2|\!|X|\!|^{\rho}_s)\sqrt{\delta}$.  
	This yields 
	\begin{equation}
		\label{eqn:Y_del}
		|\!|Y^{\delta}-Y|\!|_T\le (L_1+L_2|\!|X|\!|_T^\rho)\sqrt{\delta}\,,
	\end{equation}
	which corresponds to the first inequality in   
	\eqref{eqn:esti}.
	From \eqref{eqn:Zesi} and \eqref{eqn:pro2},
	we obtain  
	\begin{equation}
		\label{eqn:estiZ} 
		\mathbb{E}\Big[\int_0^T|Z^\delta(u)-Z(u)|^2\,du\Big]\le (L_3(T)\mathbb{E}[|\!|X|\!|_T^{2\rho}]+L_4(T)\mathbb{E}[|\!|X|\!|_T^{\rho}]+L_5(T))\sqrt{\delta}
	\end{equation}  
	where
	\begin{equation}
		\begin{aligned}
			{L}_3(T)&=\sqrt{\delta}L_2^2+2L_2\Big( 2C_\Phi+|\ell-r|\Big(\frac{2C_\Phi L_6}{\ell-L_8}+\sqrt{\delta}L_2\Big)\Big)T\,,\\
			{L}_4(T)&=2\sqrt{\delta}L_1L_2(1+2|\ell-r|T)\\
			&\quad +2\Big(2C_\Phi(L_2+L_1)+\frac{2|\ell-r|C_\Phi(L_2L_7+L_1L_6)}{\ell-L_8}+\frac{L_2C_\Phi}{\ell}\Big)T\,,\\
			{L}_5(T)&=\sqrt{\delta}L_1^2(1+2|\ell-r|T)+4L_1C_\Phi\Big(1+|\ell-r|\Big(\frac{L_7}{\ell-L_8}+\frac{1}{\ell}\Big) \Big)T\,.
		\end{aligned}
	\end{equation}
	This yields more refined upper bounds 
	than the second inequality in \eqref{eqn:esti}.

	The case with $\rho=1$ can be proven similarly with 
	$a(\rho)$  and $b(\rho)$ replaced by $a(1)$ and $b(1)$, respectively,
	where  
	\begin{equation} 
		\begin{aligned}
			a(1)&=2C_\Phi L_6\sqrt{\delta} +\frac{L_6C_5}{\ell-L_8}\Big(\frac{r\sqrt{\delta}}{2}+\frac{2M_1C_3}{3}\Big)+\frac{|\ell-r|C_\Phi L_6\sqrt{\delta}}{\ell-L_8}\Big(2L_6+\frac{\ell-L_8+\ell L_6}{2(\ell-L_8)} \Big)\,,\\
			b(1)&= C_\Phi\sqrt{\delta} \Big(2\Big(L_7+1+\frac{|\ell-r|L_7(1+L_6)}{\ell-L_8}+\frac{|\ell-r|}{\ell}\Big)+|\ell-r|\Big(\frac{L_7}{\ell-L_8}+\frac{\ell L_6 L_7}{2(\ell-L_8)^2}+\frac{1}{\ell}\Big)\Big)\\
			&\quad\quad + C_5 \Big( \frac{rL_7\sqrt{\delta}}{2(\ell-L_8)}+\frac{2L_7M_1C_3}{3(\ell-L_8)}+\frac{2(1+M_1C_1)}{3\ell}\Big)\,.
		\end{aligned} 
	\end{equation}
	The same inequalities \eqref{eqn:Y_del} and
	\eqref{eqn:estiZ} hold with the constants $L_1,L_2,L_3(T),L_4(T),L_5(T),$ where $a(\rho)$ and $b(\rho)$ are replaced by $a(1)$ and $b(1)$, respectively.

\end{proof}







\bibliographystyle{apalike}

\bibliography{references}

\end{document}